\documentclass[10pt,twocolumn,twoside]{IEEEtran}
\usepackage{cite}
\usepackage{graphicx}
\usepackage{subcaption}
\usepackage[cmex10]{amsmath}
\usepackage{algpseudocode}
\usepackage{algorithmicx} 
\usepackage{array}
\usepackage{url}
\usepackage{caption}
\usepackage{epsfig}
\usepackage{amsfonts,amssymb,multirow,bigstrut,booktabs,ctable,latexsym}

\usepackage{cuted}
\usepackage{flushend}

\usepackage[mathscr]{eucal}
\usepackage{verbatim}
\usepackage{amsthm}
\usepackage{algorithm}

\begin{document}

\newtheorem{definition}{Definition}
\newtheorem{lemma}{Lemma}
\newtheorem{theorem}{Theorem}
\newtheorem{example}{Example}
\newtheorem{proposition}{Proposition}
\newtheorem{remark}{Remark}
\newtheorem{assumption}{Assumption}
\newtheorem{corollary}{Corollary}
\newtheorem{property}{Property}
\newtheorem{ex}{EX}
\newtheorem{problem}{Problem}
\newcommand{\argmin}{\arg\!\min}
\newcommand{\argmax}{\arg\!\max}
\newcommand{\st}{\text{s.t.}}
\newcommand \dd[1]  { \,\textrm d{#1}  }

\newcounter{mytempeqncnt}
\title{Controlled Islanding and Reconnection with Stability Guarantees via\\ Submodular Optimization}
\author{Shiyu Cheng, \IEEEmembership{Graduate Student Member, IEEE}, and Andrew Clark, \IEEEmembership{Senior Member, IEEE}
\thanks{S. Cheng and A. Clark are with the Department of Electrical and Systems Engineering, Washington University in St. Louis, St. Louis, MO, USA. Email: \{cheng.shiyu, andrewclark\}@wustl.edu}
}
\maketitle
%%%%%%%%%%%%%%%%%%%%%%%%%%%%%%%%%%%%%%%%%%%%%%%%%%%%%%%%%%%%%%%%%%%%%
\begin{abstract}
%%%%%%%%%%%%%%%%%%%%%%%%%%%%%%%%%%%%%%%%%%%%%%%%%%%%%%%%%%%%%%%%%%%%%
In large-scale networked dynamical systems, local faults or disturbances
may propagate through the interconnections and compromise the stability
of the entire network. Partitioning the network into disjoint
subsystems, or islands, limits this propagation by isolating unstable
portions of the network from those that remain stable.
This paper studies a two-stage controlled islanding and reconnection
problem. In the first stage, the network is partitioned into stable
and unstable islands. In the second stage, selected stable islands
are reconnected to recover connectivity lost during islanding while
satisfying stability requirements. We derive Lyapunov-based sufficient
conditions for both stages. To address the coupling between the
islanding and reconnection decisions, we construct an augmented graph
on which the bases of a graphic matroid induce exactly the admissible
islanding and reconnection pairs. We further show that the resulting
optimization problem admits an equivalent nonincreasing supermodular
formulation. Based on this structure, we develop a local-search
algorithm with counterexample refinement and establish a performance
guarantee for its inner loop with fixed sample sets. Numerical studies
on linear networked systems illustrate the proposed procedure, examine
its sensitivity to fault locations, and compare the proposed method
with a mixed-integer linear programming benchmark in terms of solution
quality and computational performance.
\end{abstract}

\begin{IEEEkeywords}
Submodular optimization, controlled islanding, stability, networked systems
\end{IEEEkeywords}

\section{Introduction}
\label{section:introduction}
Large-scale networked dynamical systems arise in applications such as
power networks, multi-agent systems, and robotic networks \cite{dorfler2014synchronization,olfati2007consensus}. Their operation depends
critically on interactions among subsystems, through which faults,
disturbances, or other perturbations may propagate and compromise
overall system stability \cite{busby2021cascading}. When the original interconnection
structure cannot be maintained safely, network reconfiguration
provides a mechanism for mitigating the effects of such perturbations.

Controlled islanding is an important form of network reconfiguration,
particularly in power systems \cite{xu2010controlled, amraee2017controlled}. It removes selected network edges
to partition the system into disjoint components, or islands, thereby
isolating unstable portions of the network from those that remain
stable. A central challenge is to ensure the dynamical stability of
the resulting islands. Our preliminary conference work \cite{cheng2023submodular} addressed
this problem by incorporating Lyapunov-based stability conditions
directly into the islanding decision. That work, however, terminated
once the stable and unstable islands had been formed and did not
consider subsequent reconnection.

Islanding alone may leave the network unnecessarily fragmented. After
islanding, the states of stable islands evolve toward their respective
asymptotically stable equilibria, after which some of the removed
connections may be restored without sacrificing stability. This
motivates a two-stage network-reconfiguration problem: the first stage
isolates unstable portions of the network while preserving stable
islands, and the second reconnects selected stable islands to recover
network connectivity. The two stages are coupled because the edges
available for reconnection depend on the preceding islanding decision,
while the restored couplings may alter the stability of the
reconnected components.

Building on our preliminary work, we develop a unified optimization
framework for stability-aware islanding and reconnection of nonlinear
networked dynamical systems. The main contributions are as follows:
\begin{itemize}
    \item We formulate a two-stage controlled-islanding and
    reconnection problem and derive Lyapunov-based sufficient
    conditions that enforce stability of both post-islanding islands
    and reconnected components.

    \item We construct an augmented-graph representation that encodes
    the coupled islanding and reconnection decisions as the bases of a
    graphic matroid. We further show that the resulting objective
    admits a nonincreasing supermodular reformulation, yielding a
    local-search algorithm with a provable performance guarantee.

    \item We evaluate the proposed method on networked linear systems,
    illustrating the islanding and reconnection procedure and
    comparing solution quality and computational performance with a
    mixed-integer linear programming baseline.
\end{itemize}

The remainder of this paper is organized as follows.
Section~\ref{section:related-work} reviews related work.
Section~\ref{section:preliminaries} introduces the system model and
required preliminaries.
Section~\ref{section:problem-formulation} formulates the islanding and
reconnection problem and derives stability conditions.
Section~\ref{sec:matroid} develops the matroid and supermodular
reformulation and the proposed solution algorithm.
Section~\ref{section:simulation} presents the numerical results, and
Section~\ref{section:conclusion} concludes the paper.

\section{Related Work}
\label{section:related-work}
Controlled islanding has been studied extensively in power systems
using approaches including slow coherency
\cite{you2004slow,xu2009slow}, spectral clustering
\cite{ding2012two}, mixed-integer optimization
\cite{patsakis2019strong,teymouri2019milp,esmaili2020convex},
and submodular optimization
\cite{liu2018controlled,sahabandu2022submodular,sahabandu2022hybrid}.
Mixed-integer formulations can directly incorporate connectivity,
coherency, power-flow, and load-generation constraints, but may become
computationally expensive as the network size grows. Submodular
approaches instead exploit combinatorial structure in islanding
objectives to obtain scalable algorithms with provable performance
guarantees. Beyond controlled islanding, submodular optimization has
also been applied to networked control problems such as system
partitioning, sensor placement, and epidemic intervention
\cite{kazm2026partitioning,cheng2024modeling}.

A separate challenge is to certify the dynamical stability of the
resulting islands. Many optimization-based islanding methods use
quantities correlated with stability and assess the resulting
partition through subsequent dynamical analysis or simulation \cite{you2004slow,xu2009slow}.
Energy-function and Lyapunov-based methods provide a more direct means
of stability certification, although incorporating such conditions
into the combinatorial islanding decision is challenging. Our
preliminary work \cite{cheng2023submodular} addressed this issue by constructing an
islanding objective from Lyapunov-based stability conditions and
showing that it admits a graphic-matroid and supermodular
representation. Related submodular islanding formulations have
incorporated post-disturbance operating requirements such as
load-generation balance and transmission-line capacity, but remain
focused on the islanding decision~\cite{sahabandu2022hybrid,niu2023hybrid}.

The stability of interconnected subsystems has been studied using
Lyapunov-based, small-gain, and compositional methods \cite{dashkovskiy2010small,topcu2009compositional}. These
methods characterize how subsystem couplings affect stability, but in
the present setting the couplings introduced by reconnection are
themselves decision variables. Post-islanding reconnection and restoration have also been studied,
with restoration considerations such as blackstart availability
incorporated into island formation \cite{demetriou2018real}. In contrast, we explicitly optimize the
reconnection decision after islanding while imposing dynamical
stability requirements on the resulting reconnected components.
\section{Preliminaries}
\label{section:preliminaries}
This section introduces the notation and system model used throughout the paper, including the effects of islanding and reconnection on the system dynamics. We then review the necessary background on submodularity, matroids, and graph theory.
\subsection{Notation}

We use $\mathbb{N}_+$ to denote the set of positive integers,
$|\cdot|$ to denote set cardinality, $\|\cdot\|$ to denote the
Euclidean norm, and $0$ to denote a zero vector of appropriate
dimension. For vectors $v_i\in\mathbb{R}^{n_i}$, $i=1,\ldots,n$, we
define
    $\operatorname{col}(v_1,\ldots,v_n)
    \triangleq
    [v_1^T,\ldots,v_n^T]^T$.
For an index set $I=\{i_1,\ldots,i_k\}$ with
$i_1<\cdots<i_k$, we use
   $ \operatorname{col}(\{v_i:i\in I\})
    \triangleq
    \operatorname{col}(v_{i_1},\ldots,v_{i_k})$.
For
$x=\operatorname{col}(x_1,\ldots,x_N)$, where
$x_i\in\mathbb{R}^{p_i}$, and
$I=\{i_1,\ldots,i_k\}\subseteq\{1,\ldots,N\}$ with
$i_1<\cdots<i_k$, let
$p_I\triangleq\sum_{i\in I}p_i$ and define
    $\operatorname{proj}_I(x)
    \triangleq
    \operatorname{col}(x_{i_1},\ldots,x_{i_k})
    \in\mathbb{R}^{p_I}$.
For a set
$\Omega\subseteq\prod_{i=1}^N\mathbb{R}^{p_i}$, define
    $\operatorname{proj}_I(\Omega)
    \triangleq
    \{\operatorname{proj}_I(x):x\in\Omega\}$.
For $z\in\mathbb{R}^n$ and $\epsilon>0$, let
    $\mathbb{B}(z,\epsilon)
    \triangleq
    \{y\in\mathbb{R}^n:\|y-z\|\leq\epsilon\}$.

\subsection{System, Islanding, and Reconnection Models}
We consider a network of $N$ nodes with node set $\mathcal{V} \triangleq \{1, \ldots, N\}$ and edge set $\mathcal{E}\subseteq \mathcal{V} \times \mathcal{V}$. The graph $\mathcal{G} \triangleq (\mathcal{V}, \mathcal{E})$ is connected and undirected. We define $M \triangleq |\mathcal{E}|$ as the number of edges in $\mathcal{G}$. We define $\mathcal{N}_i \triangleq \{j\in \mathcal{V}: (i,j)\in \mathcal{E}\}$ as the neighbor set of node $i$. 
For each node $i\in \mathcal{V}$, let $x_i(t)\in \mathcal{X}_i\subseteq \mathbb{R}^{p_i}$ denote its state variable, where $p_i\in \mathbb{N}_{+}$. The dynamics of node $i$ are governed by
\begin{align}
\label{eq:dynamics-pre-islanding}
    \dot{x}_i(t) &= f_i(x_i(t)) + \sum_{j\in \mathcal{N}_i}f_{ij}(x_i(t), x_j(t)),
\end{align}
where $f_i:\mathbb{R}^{p_i}\rightarrow \mathbb{R}^{p_i}$ and $f_{ij}: \mathbb{R}^{p_i}\times \mathbb{R}^{p_j}\rightarrow \mathbb{R}^{p_i}$ are locally Lipschitz.
We define islanding as follows.
\begin{definition}[Islanding]
\label{def:islanding}
A valid islanding of $\mathcal{G}$ is a collection
$\mathcal{I}\triangleq\{I_1,\ldots,I_m\}$ for some
$m\in\{1,\ldots,N\}$, where, for each
$r\in\{1,\ldots,m\}$,
$I_r\triangleq(\mathcal{V}_r,\mathcal{E}_r)$ and
$\mathcal{E}_r\triangleq
\{(i,j)\in\mathcal{E}:i,j\in\mathcal{V}_r\}$,
and the following conditions are satisfied:
\begin{enumerate}
    \item[(i)] $\mathcal{V}_r \neq \emptyset$ for all $r \in \{1, \ldots, m\}$, and $\mathcal{V}_r \cap \mathcal{V}_s = \emptyset$ for all $r, s \in \{1, \ldots, m\}$ with $r \neq s$;
    \item[(ii)] $\bigcup_{r=1}^{m} \mathcal{V}_r = \mathcal{V}$;
    \item[(iii)] $I_r$ is connected for all $r\in \{1, \ldots, m\}$.
\end{enumerate}
\end{definition}
We assume that, when islanding occurs, the state $x_i(t)$ of node $i$ does not change, but the dynamics $\dot{x}_i(t)$ in the post-islanding system instantaneously change to 
\begin{align}
\label{eq:dynamics-post-islanding}
    \dot{x}_i(t) &= f_i(x_i(t)) + \sum_{j\in \mathcal{N}_i\cap\mathcal{V}_r}f_{ij}(x_i(t), x_j(t))
\end{align}
where $r$ is the unique index with $i\in \mathcal{V}_r$.
We next present the model of how edges are reconnected after islanding has taken place. For an islanding strategy $\mathcal{I}$, we use 
\begin{align}
    \label{eq:ec}
    \mathcal{E}^{\mathrm{cut}}(\mathcal{I}) \triangleq \mathcal{E}\setminus (\cup_{r=1}^{m}\mathcal{E}_r )   
\end{align}
to denote the edges that are removed in the islanding stage. 
By Definition~\ref{def:islanding}, every edge of $\mathcal{E}^{cut}(\mathcal{I})$ has its two endpoints in distinct islands.
We select a subset $\mathcal{E}^\mathrm{rec}\subseteq \mathcal{E}^\mathrm{cut}(\mathcal{I})$ to reconnect. For each node $i\in \mathcal{V}$, we denote
\begin{align}
    \mathcal{N}_i^{\mathrm{rec}}\triangleq \{j\in \mathcal{N}_i: (i,j)\in \mathcal{E}^{\mathrm{rec}}\}
\end{align}
as the set of reconnected neighbors of node $i$. If $i\in \mathcal{V}_r$, then $\mathcal{N}_i^{\mathrm{rec}}\cap \mathcal{V}_r = \emptyset$, hence, $(\mathcal{N}_i\cap \mathcal{V}_r)\cap \mathcal{N}_i^{\mathrm{rec}} = \emptyset$.
We assume that the dynamics $\dot{x}_i(t)$ in the reconnected system instantaneously change to 
\begin{align}
\label{eq:dynamics-reconnection}
    \dot{x}_i(t) = f_i(x_i(t)) &+ \sum_{j\in \mathcal{N}_i\cap\mathcal{V}_r}f_{ij}(x_i(t), x_j(t))\nonumber \\
    &+\sum_{j\in \mathcal{N}_i^{\mathrm{rec}}}f_{ij}(x_i(t), x_j(t)),
\end{align}
where $r$ is the unique index with $i\in \mathcal{V}_r$.

\subsection{Submodular Functions and Matroids}
\label{sec:prelim-submodular}

For any finite set $X$, a function
$h:2^X\rightarrow\mathbb{R}$ is submodular if, for any
$S\subseteq T\subseteq X$ and $v\in X\setminus T$,
\[
    h(S\cup\{v\})-h(S)
    \geq
    h(T\cup\{v\})-h(T).
\]
A function $h$ is supermodular if $-h$ is submodular.
It is monotone nonincreasing (resp. nondecreasing) if
$h(S)\geq h(T)$ (resp. $h(S)\leq h(T)$) whenever
$S\subseteq T$. Nonnegative weighted sums of submodular
(resp. supermodular) functions are submodular
(resp. supermodular) and preserve monotonicity when the
constituent functions have the same monotonicity. Furthermore,
if $h$ is nonincreasing and supermodular, then
$\max\{h(S),c\}$ is nonincreasing and supermodular for any
constant $c$.
We next recall standard matroid terminology
\cite{oxley2011matroid}.

\begin{definition}[Matroid]
A tuple $\mathcal{M}=(X,\mathcal{J})$, where $X$ is a finite
ground set and $\mathcal{J}\subseteq 2^X$, is a matroid if:
\begin{enumerate}
    \item[(i)] $\emptyset\in\mathcal{J}$;
    \item[(ii)] if $A\subseteq B$ and $B\in\mathcal{J}$, then
    $A\in\mathcal{J}$;
    \item[(iii)] if $A,B\in\mathcal{J}$ and $|A|<|B|$, then
    there exists $v\in B\setminus A$ such that
    $A\cup\{v\}\in\mathcal{J}$.
\end{enumerate}
The sets in $\mathcal{J}$ are called independent sets.
A maximal independent set is called a basis, and the family
of bases is denoted by $\mathcal{B}(\mathcal{M})$.
\end{definition}

\begin{definition}[Rank function]
For a matroid $\mathcal{M}=(X,\mathcal{J})$, its rank function is
   $ \rho_{\mathcal{M}}(S)
    \triangleq
    \max\bigl\{|R|:R\subseteq S,\ R\in\mathcal{J}\bigr\}$ for all $S\subseteq X$.
A set $S\subseteq X$ is independent if and only if
$\rho_{\mathcal{M}}(S)=|S|$. Moreover,
$\rho_{\mathcal{M}}$ is monotone nondecreasing and submodular.
\end{definition}

We will use the following standard properties of matroid rank
functions and bases.

\begin{lemma}[\cite{oxley2011matroid}]
\label{lemma:unit-increase}
For any $S\subseteq X$ and $v\in X$, adding $v$ increases the
matroid rank by at most one, i.e.,
$\rho_{\mathcal{M}}(S\cup\{v\})-\rho_{\mathcal{M}}(S)\in\{0,1\}$.
\end{lemma}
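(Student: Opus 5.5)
We prove the two bounds $\rho_{\mathcal{M}}(S)\le\rho_{\mathcal{M}}(S\cup\{v\})\le\rho_{\mathcal{M}}(S)+1$ separately, using only the definition of the rank function and the matroid axioms. Integrality is immediate, since the rank is a cardinality. Together the two bounds force the difference to lie in $\{0,1\}$.

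\emph{Lower bound.} Let $R\subseteq S$ be an independent set with $|R|=\rho_{\mathcal{M}}(S)$. Then $R\subseteq S\cup\{v\}$ and $R\in\mathcal{J}$, so $R$ is a feasible candidate in the maximization that defines $\rho_{\mathcal{M}}(S\cup\{v\})$. Hence $\rho_{\mathcal{M}}(S\cup\{v\})\ge\rho_{\mathcal{M}}(S)$; this is just the monotonicity already noted after the definition.

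\emph{Upper bound.} Let $R'\subseteq S\cup\{v\}$ be independent with $|R'|=\rho_{\mathcal{M}}(S\cup\{v\})$, and set $R''\triangleq R'\setminus\{v\}$. By axiom (ii), $R''$ is independent, because it is a subset of the independent set $R'$. Moreover $R''\subseteq S$, so $\rho_{\mathcal{M}}(S)\ge|R''|\ge|R'|-1=\rho_{\mathcal{M}}(S\cup\{v\})-1$.

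The only step needing care is this removal of $v$ in the upper bound. It relies on the downward-closure axiom (ii) and does not need the exchange axiom (iii). The case $v\in S$ is trivial, since then $S\cup\{v\}=S$ and the difference is $0$, and the argument above covers it anyway.
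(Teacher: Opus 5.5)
Your proof is correct and complete. The lower bound is monotonicity. The upper bound comes from deleting $v$ from a maximum independent subset of $S\cup\{v\}$ and applying downward closure (ii). The only implicit ingredient is that the maxima defining $\rho_{\mathcal{M}}$ are attained, which holds because $X$ is finite and $\emptyset\in\mathcal{J}$ by (i).

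The paper gives no proof of its own and simply cites \cite{oxley2011matroid}. So the useful comparison is with the route the paper's own preliminaries would suggest. Taking $\emptyset\subseteq S$ in the diminishing-returns definition of submodularity, and using the submodularity of $\rho_{\mathcal{M}}$ stated after the rank definition, gives $\rho_{\mathcal{M}}(S\cup\{v\})-\rho_{\mathcal{M}}(S)\leq\rho_{\mathcal{M}}(\{v\})-\rho_{\mathcal{M}}(\emptyset)\leq 1$ for $v\notin S$. That route is shorter on the page but rests on submodularity of the rank function. Submodularity genuinely requires the exchange axiom (iii), since it fails for general downward-closed families. Your argument avoids (iii) entirely, as you observe, so it shows the unit-increase property holds for the rank function of any hereditary set system, not just matroids.
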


\begin{lemma}[\cite{oxley2011matroid}]
\label{lemma:basis-exchange-bijection}
Let $B,B^{\prime}\in\mathcal{B}(\mathcal{M})$. There exists a
bijection
    $\pi:B\setminus B^{\prime}
    \rightarrow
    B^{\prime}\setminus B$
such that
    $B\setminus\{v\}\cup\{\pi(v)\}
    \in\mathcal{B}(\mathcal{M})$
for every $v\in B\setminus B^{\prime}$.
\end{lemma}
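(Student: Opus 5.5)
The plan is a standard two-step argument: first show that all bases of $\mathcal{M}$ have the same cardinality, then build the bijection $\pi$ one element at a time by induction on $k \triangleq |B\setminus B^{\prime}|$. The induction uses the symmetric exchange property. That property follows from the matroid axioms (ii)--(iii) through the fundamental-circuit and cocircuit construction.

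\emph{Step 1 (equicardinality).} Suppose bases $B,B^{\prime}$ satisfy $|B|<|B^{\prime}|$. Axiom (iii) gives $v\in B^{\prime}\setminus B$ with $B\cup\{v\}\in\mathcal{J}$, which contradicts the maximality of $B$. Hence $|B|=|B^{\prime}|$, and therefore $|B\setminus B^{\prime}|=|B^{\prime}\setminus B|$. The same argument shows that every basis has size $\rho_{\mathcal{M}}(X)$.

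\emph{Step 2 (exchange for one element).} Fix $v\in B\setminus B^{\prime}$ and set $A=B\setminus\{v\}$.
\begin{itemize}
\item \emph{Extension.} By augmenting $A$ from $B^{\prime}$ with axiom (iii) and Step 1, there is $w\in B^{\prime}\setminus B$ with $A\cup\{w\}\in\mathcal{B}(\mathcal{M})$.
\item \emph{Candidate set.} The set of admissible $w$ is $C^{*}(v)\cap B^{\prime}$, where $C^{*}(v)$ is the fundamental cocircuit of $v$ with respect to $B$. Equivalently, it is the set of $w\in B^{\prime}\setminus B$ whose fundamental circuit $C(w,B)\subseteq B\cup\{w\}$ contains $v$.
\item \emph{Circuit facts.} I would derive the existence and uniqueness of $C(w,B)$ from Lemma~\ref{lemma:unit-increase}: adding $w$ to the basis $B$ does not raise the rank, so $B\cup\{w\}$ contains a unique circuit. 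With this, $B\setminus\{v\}\cup\{w\}$ is a basis if and only if $v\in C(w,B)$.
\end{itemize}

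\emph{Step 3 (the bijection).} Define a bipartite graph $H$ with parts $B\setminus B^{\prime}$ and $B^{\prime}\setminus B$. Join $v$ to $w$ whenever $B\setminus\{v\}\cup\{w\}\in\mathcal{B}(\mathcal{M})$, that is, whenever $v\in C(w,B)$. A bijection $\pi$ with the required property is exactly a perfect matching in $H$. I would use Hall's theorem, so I must show $|N_H(S)|\ge|S|$ for every $S\subseteq B\setminus B^{\prime}$.
\begin{itemize}
\item Let $T=N_H(S)$. Every $w\in(B^{\prime}\setminus B)\setminus T$ has $C(w,B)\subseteq (B\setminus S)\cup\{w\}$. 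Hence $w$ lies in the closure of $B\setminus S$.
\item Elements of $B\cap B^{\prime}$ trivially lie in that closure.
\item It follows that $B^{\prime}\setminus T\subseteq \mathrm{cl}(B\setminus S)$, so $|B^{\prime}|-|T|\le \rho_{\mathcal{M}}(B\setminus S)=|B|-|S|$, since $B^{\prime}\setminus T$ is independent.
\item Combining this with Step 1 gives $|T|\ge|S|$.
\end{itemize}

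The main obstacle is the closure and circuit machinery used in Steps 2 and 3. I need that an element whose fundamental circuit avoids $S$ lies in $\mathrm{cl}(B\setminus S)$, and that the rank of a closure equals the rank of the set. I would derive both facts directly from the submodularity and unit-increase properties of $\rho_{\mathcal{M}}$ (Lemma~\ref{lemma:unit-increase}). Concretely, if $\rho_{\mathcal{M}}(Y\cup\{w\})=\rho_{\mathcal{M}}(Y)$ for each $w$ in a set $W$, then submodularity applied inductively gives $\rho_{\mathcal{M}}(Y\cup W)=\rho_{\mathcal{M}}(Y)$.
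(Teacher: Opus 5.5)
Your argument is correct and is the standard proof of this result (Brualdi's bijective basis exchange), which the paper does not prove and instead cites from Oxley. Hall's condition comes from $B^{\prime}\setminus N_H(S)\subseteq\mathrm{cl}(B\setminus S)$, which gives $|B^{\prime}|-|N_H(S)|\le|B|-|S|$. Two minor points: your opening announces an induction via symmetric exchange, but you never use it because the Hall argument in Step~3 is self-contained; and uniqueness of $C(w,B)$ needs submodularity (circuit elimination), not Lemma~\ref{lemma:unit-increase} alone. You can also bypass circuits entirely: if $(B\setminus S)\cup\{w\}$ is independent, augmenting it from $B$ yields some $v\in S$ with $B\setminus\{v\}\cup\{w\}\in\mathcal{B}(\mathcal{M})$.
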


\subsection{Labeled Multigraphs and Graphic Contraction}
\label{subsec:multigraph_contraction}

We recall the notions of labeled multigraphs, graphic matroids, and
contraction used in Section~\ref{sec:matroid}; see,
e.g., \cite{oxley2011matroid,diestel2017graph}.

\begin{definition}[Undirected loopless labeled multigraph]
\label{def:undirected_multigraph}
An undirected loopless labeled multigraph is a tuple
    $\mathcal{G}=(\mathcal{V},\mathcal{E},\Theta)$,
where $\mathcal{V}$ is a finite vertex set, $\mathcal{E}$ is a finite
set of distinct edge elements, and
\[
    \Theta:\mathcal{E}
    \rightarrow
    \big\{\{u,v\}:u,v\in\mathcal{V},\ u\neq v\big\}
\]
assigns to each edge element its unordered pair of endpoints.
We write $\mathcal{V}(\mathcal{G})\triangleq\mathcal{V}$,
$\mathcal{E}(\mathcal{G})\triangleq\mathcal{E}$, and
$\Theta_{\mathcal{G}}\triangleq\Theta$ when the underlying multigraph
must be specified explicitly.

Distinct edge elements $e,f\in\mathcal{E}$ are called parallel if
$\Theta(e)=\Theta(f)$. Thus, parallel edges remain distinct elements
of the edge set even though they have the same endpoint pair. A simple
undirected graph is the special case in which $\Theta$ is injective.
Whenever no ambiguity arises, an edge element $e$ satisfying
$\Theta(e)=\{u,v\}$ is denoted by $(u,v)$.
\end{definition}

The distinction between an edge element and its endpoint pair will be
important because contraction may create parallel edges while
preserving the identities of the original edge elements.
For a loopless labeled multigraph
$\mathcal{G}=(\mathcal{V},\mathcal{E},\Theta)$, let
$\mathcal{J}_{\mathcal{G}}$ denote the family of acyclic subsets of
$\mathcal{E}$. Then
    $\mathcal{M}(\mathcal{G})
    \triangleq
    (\mathcal{E},\mathcal{J}_{\mathcal{G}})$
is the graphic matroid of $\mathcal{G}$ \cite{oxley2011matroid}.
If $\mathcal{G}$ is connected, the bases of
$\mathcal{M}(\mathcal{G})$ are precisely the spanning trees of
$\mathcal{G}$. In particular, two parallel edges form a two-element
circuit of $\mathcal{M}(\mathcal{G})$ and therefore cannot both belong
to an independent set.

We next recall matroid contraction. Let
$\mathcal{M}=(\mathcal{X},\mathcal{J})$ be a matroid and let
$e\in\mathcal{X}$ be a non-loop. The contraction $\mathcal{M}/e$ is
the matroid on $\mathcal{X}\setminus\{e\}$ whose independent sets are
\[
    \mathcal{J}(\mathcal{M}/e)
    =
    \big\{
        I\subseteq\mathcal{X}\setminus\{e\}:
        I\cup\{e\}\in\mathcal{J}
    \big\}.
\]
Equivalently,
    $\mathcal{B}(\mathcal{M}/e)
    =
    \big\{
        B\setminus\{e\}:
        B\in\mathcal{B}(\mathcal{M}),\ e\in B
    \big\}$.

We also define contraction for the labeled multigraphs used in this
paper. Let
$\mathcal{H}=(\mathcal{V},\mathcal{E},\Theta_{\mathcal{H}})$ be an
undirected loopless labeled multigraph, and let
$e\in\mathcal{E}$ have endpoints
$\Theta_{\mathcal{H}}(e)=\{u,v\}$. Assume that no other edge of
$\mathcal{H}$ is parallel to $e$. Let $r$ be a new vertex and define
   $ q:\mathcal{V}
    \rightarrow
    \big(\mathcal{V}\setminus\{u,v\}\big)\cup\{r\}$
by $q(u)=q(v)=r$,
    $q(x)=x$, $x\in\mathcal{V}\setminus\{u,v\}$.
    The contracted multigraph $\mathcal{H}/e$ has vertex set
$\mathcal{V}(\mathcal{H}/e)
=
(\mathcal{V}\setminus\{u,v\})\cup\{r\}$,
edge set
$\mathcal{E}(\mathcal{H}/e)
=
\mathcal{E}\setminus\{e\}$,
and endpoint map
\[
    \Theta_{\mathcal{H}/e}(f)
    =
    \{q(x):x\in\Theta_{\mathcal{H}}(f)\},
    \qquad
    f\in\mathcal{E}\setminus\{e\}.
\]
Since no edge other than $e$ has endpoint pair $\{u,v\}$,
$\Theta_{\mathcal{H}/e}(f)$ contains two distinct vertices for every
$f\in\mathcal{E}\setminus\{e\}$; hence $\mathcal{H}/e$ remains
loopless. Contraction may nevertheless create parallel edges, and all
edge elements other than $e$ retain their identities.

The following standard result relates graph contraction and matroid
contraction \cite{oxley2011matroid}.

\begin{lemma}[Graphic contraction]
\label{lem:graphic_contraction}
Let $\mathcal{H}$ be a connected undirected loopless labeled
multigraph, and let $e\in\mathcal{E}(\mathcal{H})$ be an edge that is
not parallel to any other edge. Then
\[
    \mathcal{M}(\mathcal{H}/e)
    =
    \mathcal{M}(\mathcal{H})/e.
\]
Consequently, the map
    $T\longmapsto T\setminus\{e\}$
is a bijection from the spanning trees of $\mathcal{H}$ containing
$e$ to the spanning trees of $\mathcal{H}/e$, with inverse
$S\longmapsto S\cup\{e\}$.
\end{lemma}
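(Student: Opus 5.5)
The plan is to prove the matroid identity $\mathcal{M}(\mathcal{H}/e)=\mathcal{M}(\mathcal{H})/e$ by comparing independent sets directly, and then to read off the bijection from the basis description of contraction. Let $\Theta_{\mathcal{H}}(e)=\{u,v\}$ and let $q$ be the quotient map from the definition of $\mathcal{H}/e$. Both matroids have ground set $\mathcal{E}\setminus\{e\}$. Since $e$ is not a loop (the multigraph is loopless), $\mathcal{J}(\mathcal{M}(\mathcal{H})/e)$ consists of the sets $I\subseteq\mathcal{E}\setminus\{e\}$ with $I\cup\{e\}$ acyclic in $\mathcal{H}$. It therefore suffices to show that, for every such $I$, the set $I$ is acyclic in $\mathcal{H}/e$ if and only if $I\cup\{e\}$ is acyclic in $\mathcal{H}$.

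We argue both directions through cycles, which here are circuits of the multigraph; a pair of parallel edges counts as a 2-cycle.

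\emph{Cycle in $\mathcal{H}/e$ gives a cycle in $\mathcal{H}$.} Suppose $C\subseteq I$ is a cycle of $\mathcal{H}/e$.
\begin{itemize}
\item If $C$ avoids the vertex $r$, then $C$ has the same endpoints in $\mathcal{H}$, so it is a cycle of $\mathcal{H}$ contained in $I\cup\{e\}$.
\item If $C$ passes through $r$, it enters and leaves $r$ via two edges $f,g\in C$. Lifting to $\mathcal{H}$, each of $f,g$ has an endpoint in $\{u,v\}$, and the rest of $C$ lifts unchanged. If $f$ and $g$ meet the same vertex of $\{u,v\}$, then $C$ is already a cycle in $\mathcal{H}$. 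Otherwise, $C\cup\{e\}$ is a cycle in $\mathcal{H}$. This covers the 2-cycle case: parallel edges $f,g$ in $\mathcal{H}/e$ with endpoints $\{r,x\}$ lift to a cycle of length 2 or 3.
\end{itemize}

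\emph{Cycle in $\mathcal{H}$ gives a cycle in $\mathcal{H}/e$.} Suppose $C\subseteq I\cup\{e\}$ is a cycle of $\mathcal{H}$.
\begin{itemize}
\item If $e\in C$, then $C\setminus\{e\}$ is a $u$–$v$ path in $\mathcal{H}$. Under $q$ it becomes a closed walk at $r$ whose internal vertices are distinct and not equal to $r$, so it is a cycle of $\mathcal{H}/e$. Its length is at least $2$ because $e$ has no parallel edge; this is exactly where that hypothesis is used.
\item If $e\notin C$ and $C$ visits at most one of $u,v$, then $q$ is injective on $V(C)$, so $C$ stays a cycle.
\item If $e\notin C$ and $C$ visits both $u$ and $v$, then $C$ splits into two $u$–$v$ paths. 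Adjoining $e$ to either path gives a cycle inside $I\cup\{e\}$ that contains $e$, which reduces to the first case. Moreover, each of these paths maps to a cycle through $r$ in $\mathcal{H}/e$, so $I$ contains a cycle of $\mathcal{H}/e$ in this case too.
\end{itemize}

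For the consequence, contraction of a connected graph is connected, so the bases of $\mathcal{M}(\mathcal{H}/e)$ are the spanning trees of $\mathcal{H}/e$. The bases of $\mathcal{M}(\mathcal{H})$ are the spanning trees of $\mathcal{H}$. The stated identity $\mathcal{B}(\mathcal{M}/e)=\{B\setminus\{e\}: e\in B\in\mathcal{B}(\mathcal{M})\}$ then shows that $T\mapsto T\setminus\{e\}$ maps spanning trees of $\mathcal{H}$ containing $e$ onto spanning trees of $\mathcal{H}/e$. It is injective, and its inverse is $S\mapsto S\cup\{e\}$.

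The main obstacle is the careful case analysis of cycles through the merged vertex $r$ in the multigraph setting: parallel edges created by contraction must be treated as 2-cycles, and the non-parallel hypothesis must be invoked precisely so that no cycle collapses into a loop.
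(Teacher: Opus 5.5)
Your proof is correct, and it takes a different route from the paper because the paper gives no proof of this lemma: it states it as a standard fact and cites Oxley, where $M(G)/e=M(G/e)$ is proved for an arbitrary edge of a graph.

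Your argument is self-contained. You compare independent sets directly, by projecting and lifting cycles through the merged vertex $r$. You then read off the spanning-tree bijection from two ingredients: the basis description $\mathcal{B}(\mathcal{M}/e)=\{B\setminus\{e\}: e\in B\in\mathcal{B}(\mathcal{M})\}$, which the paper records with its definition of contraction, and the connectivity of $\mathcal{H}/e$.

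Your route works verbatim with the paper's labeled-multigraph conventions, where edge identities are preserved and parallel pairs are two-element circuits. It also shows exactly why the non-parallel hypothesis appears. The citation buys brevity and generality instead. In Oxley's setting, an edge parallel to $e$ simply becomes a loop both of $\mathcal{H}/e$ and of $\mathcal{M}(\mathcal{H})/e$. So the hypothesis is needed here only because the paper's multigraph definition forbids loops.

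One precision to add: you say the hypothesis is used exactly once, but it does more work than that.
\begin{itemize}
\item In your first direction, it guarantees (through looplessness of $\mathcal{H}/e$) that each of $f,g$ has \emph{exactly} one endpoint in $\{u,v\}$. This is what makes the lift of $C$ a path with both ends in $\{u,v\}$.
\item In the third case of your second direction, it guarantees that each of the two $u$--$v$ arcs of $C$ has length at least two. This is needed before you can say each arc maps to a cycle through $r$.
\end{itemize}
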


\section{Problem Formulation}
\label{section:problem-formulation}
Under the nominal dynamics, the networked system operates near an
asymptotically stable equilibrium. A disturbance or fault may drive
the system outside its region of attraction. We therefore partition
the network into stable and unstable islands, where stability is
determined by the post-islanding dynamics. The islanding stage seeks
to minimize the costs associated with unstable nodes and removed
edges while ensuring the stability of the stable islands.

After islanding, each stable island evolves toward its post-islanding
equilibrium. Once the states of the stable islands enter suitable
neighborhoods of these equilibria, a subset of the removed edges may
be reconnected to reduce network fragmentation. The reconnection
stage requires the resulting components to remain stable. This two-stage procedure is illustrated in Fig.~\ref{fig:islanding-reconnection}. We next
formalize this two-stage islanding and reconnection problem.
\begin{figure*}[!htbp]
    \centering

    \begin{subfigure}[t]{0.32\textwidth}
        \centering
        \includegraphics[width=\linewidth]{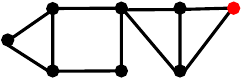}
        \caption{Original system}
        \label{fig:subfig1}
    \end{subfigure}
    \hfill
    \begin{subfigure}[t]{0.32\textwidth}
        \centering
        \includegraphics[width=\linewidth]{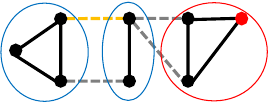}
        \caption{Post-islanding system}
        \label{fig:subfig2}
    \end{subfigure}
    \hfill
    \begin{subfigure}[t]{0.32\textwidth}
        \centering
        \includegraphics[width=\linewidth]{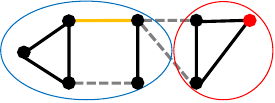}
        \caption{Reconnected System}
        \label{fig:subfig3}
    \end{subfigure}
    \caption{Illustration of the islanding and reconnection procedure.
(a) Original system, with the faulty node shown in red.
(b) Post-islanding system obtained by removing four edges, resulting
in two stable islands, indicated by blue boundaries, and one unstable
island, indicated by a red boundary. The removed edges are shown as
dashed lines, with the edge selected for reconnection shown in yellow.
(c) Reconnected system obtained by reconnecting the selected edge,
shown as a solid yellow line, thereby merging the two stable islands
into a single stable reconnected component. The remaining removed
edges are shown as dashed gray lines.}
\label{fig:islanding-reconnection}
\end{figure*}
\subsection{Problem Statement}\label{subsec:problem}
We first define the islanding stage and then introduce the
reconnection stage.

\subsubsection{Islanding} Let ${x}(t) \triangleq \operatorname{col}(x_1(t), \ldots, x_N(t))$ 
denote the state of the entire system, and let $Y 
\subset \prod_{i=1}^{N} \mathcal{X}_i$ denote the set of possible states at 
the instant of islanding. We now define stability of an island in the 
post-islanding system. 

\begin{definition}[Post-islanding stability]
\label{def:island-stability}
Let $\mathcal{I} = \{I_1, \ldots, I_m\}$ be an islanding 
of the network, and let $r \in \{1, \ldots, m\}$. Island $r$ is said to be 
\emph{stable} under $\mathcal{I}$ if, for every initial condition 
${x}(0) \in Y$, the trajectory generated by the 
post-islanding dynamics~\eqref{eq:dynamics-post-islanding} satisfies
\begin{equation}
\label{eq:island-stability}
\lim_{t \to \infty} x_i(t) = 0,~\forall i \in \mathcal{V}_r.
\end{equation}
Otherwise, island $r$ is said to be \emph{unstable}.
\end{definition}

Next, we define the cost of islanding. Let $\mathcal{S} \subseteq \{1, \ldots, m\}$ denote the indices of stable islands and 
$\mathcal{U} = \{1, \ldots, m\} \setminus \mathcal{S}$ denote the indices of unstable 
islands. The set of nodes in unstable islands is denoted as $Z \triangleq \cup_{r\in \mathcal{U}}\mathcal{V}_r$, while the set of edges in unstable islands is denoted as $ \cup_{r\in \mathcal{U}}\mathcal{E}_r$. For node $i\in Z$, we let $c_{i} \geq 0$ denote the cost of instability. For example, in a power system $c_{i}$ may represent the cost of shedding a load or shutting down a generator to prevent damage to the system. Similarly, we use $\alpha_0 c_{ij}\geq 0$ to denote the cost of removing the edge $(i,j)\in \mathcal{E}^{\mathrm{cut}}(\mathcal{I})$ with $\alpha_0>0$ and $c_{ij} \geq  0$. 
By \eqref{eq:ec}, the cost of islanding is $\sum_{i\in Z}c_i+\alpha_0 \sum_{(i,j)\in \mathcal{E}^{\mathrm{cut}}(\mathcal{I})}c_{ij}$.
\subsubsection{Reconnection}
In the reconnection stage, only removed edges whose endpoints belong to distinct stable islands of the post-islanding system are eligible for reconnection. Accordingly, define the set of eligible reconnection edges as
\begin{align}
    \label{eq:er-bar}
    {\mathcal{E}}^{\mathrm{elg}}(\mathcal{I}) \triangleq \{(i,j)\in \mathcal{E}^{\mathrm{cut}}(\mathcal{I}): i\in \mathcal{V}_{r}, &j\in \mathcal{V}_{r^{\prime}},\nonumber\\ &r, r^{\prime}\in \mathcal{S}, r\neq r^{\prime}\}.
\end{align}
The set of edges selected for reconnection satisfies $\mathcal{E}^{\mathrm{rec}}\subseteq {\mathcal{E}}^{\mathrm{elg}}(\mathcal{I})$.
By Definition~\ref{def:island-stability}, 
for every $r \in \mathcal{S}$ the states of nodes in $\mathcal{V}_r$ converge to the 
origin under the post-islanding dynamics. Consequently, there exists a neighborhood 
of the origin and a finite time after which the states of all nodes 
in $\bigcup_{r \in S} \mathcal{V}_r$ lie within that neighborhood. The 
reconnection stage is initiated at such a time, and we therefore take the 
initial condition of the reconnection system to lie in a neighborhood of the 
origin. 

Reconnecting edges in $\mathcal{E}^{\mathrm{rec}}$ alters the coupling among nodes in 
stable islands, and so the convergence guaranteed by 
Definition~\ref{def:island-stability} need not persist in the reconnection 
system. We therefore introduce a separate stability notion for this stage. 

\begin{definition}[Reconnected component]
\label{defintion:reconnected_component}
Let $\mathcal{I}$ be a valid islanding with stable-island index set
$\mathcal{S}$, and let $\mathcal{E}^{\mathrm{rec}} \subseteq \mathcal{E}^{\mathrm{elg}}(\mathcal{I})$. The
\emph{reconnected system} is the graph
\begin{equation}
  \mathcal{G}^{\mathrm{rec}} \triangleq
  \Big( \mathcal{V},\; \big(\textstyle\bigcup_{r=1}^{m}\mathcal{E}_r\big)
  \cup \mathcal{E}^{\mathrm{rec}} \Big).
\end{equation}
The \emph{reconnected components} $\mathcal{C}_1, \dots,
\mathcal{C}_{m_c}$ are the connected components of
$\mathcal{G}^{\mathrm{rec}}$ whose node sets intersect $\cup_{r \in
\mathcal{S}} \mathcal{V}_r$. We define $\mathcal{C}_s \triangleq (\mathcal{V}^c_s,
\mathcal{E}^c_s)$, and we define the index set of the islands that
$\mathcal{C}_s$ contains as
\begin{equation} \label{eq:Ss}
  \mathcal{S}_s \triangleq
  \{ r \in \mathcal{S} : \mathcal{V}_r \subseteq \mathcal{V}^c_s \},
  \quad s \in \{1,\dots,m_c\}.
\end{equation}
\end{definition}
Definition~\ref{defintion:reconnected_component} determines $\mathcal{V}^c_s$,
$\mathcal{E}^c_s$, and $\mathcal{S}_s$ from
$\mathcal{G}^{\mathrm{rec}}$ alone. The following lemma records the
properties of these sets that are used in the following sections. 
\begin{lemma} \label{lem:compstruct}
For every $s \in \{1,\dots,m_c\}$ the following hold.
\begin{enumerate}
\item[(i)] $\mathcal{S}_s \neq \emptyset$, and
$\{\mathcal{S}_1,\dots,\mathcal{S}_{m_c}\}$ is a partition of
$\mathcal{S}$. In particular $m_c \leq |\mathcal{S}|$.
\item[(ii)] $\mathcal{V}^c_s = \bigcup_{r \in \mathcal{S}_s}
\mathcal{V}_r$.
\item[(iii)] $\mathcal{E}^c_s = \big( \bigcup_{r \in \mathcal{S}_s}
\mathcal{E}_r \big) \cup \{ (i,j) \in \mathcal{E}^{\mathrm{rec}} : i,j \in \mathcal{V}^c_s \}$.
\item[(iv)] Every edge of $\mathcal{E}^{\mathrm{rec}}$ has both of its endpoints in
$\mathcal{V}^c_s$ for exactly one $s$.
\end{enumerate}
\end{lemma}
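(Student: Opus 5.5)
The plan is to describe the connected components of $\mathcal{G}^{\mathrm{rec}}$ through the quotient structure induced by the islands. Two structural facts drive everything.

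First, each island $I_r=(\mathcal{V}_r,\mathcal{E}_r)$ is connected by Definition~\ref{def:islanding}(iii). Its edge set $\mathcal{E}_r$ is contained in the edge set of $\mathcal{G}^{\mathrm{rec}}$, so each $\mathcal{V}_r$ lies entirely inside a single connected component of $\mathcal{G}^{\mathrm{rec}}$. Consequently, every connected component's node set is a union of island node sets, since the $\mathcal{V}_r$ partition $\mathcal{V}$ by Definition~\ref{def:islanding}(i)--(ii).

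Second, every edge of $\mathcal{E}^{\mathrm{rec}}\subseteq\mathcal{E}^{\mathrm{elg}}(\mathcal{I})$ joins two distinct stable islands, by \eqref{eq:er-bar}. Every other edge of $\mathcal{G}^{\mathrm{rec}}$ lies in some $\mathcal{E}_r$, with both endpoints in $\mathcal{V}_r$. Hence no edge of $\mathcal{G}^{\mathrm{rec}}$ has an endpoint in an unstable island and the other endpoint outside that island.

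From these two facts I would deduce that each unstable island $\mathcal{V}_r$, $r\in\mathcal{U}$, is itself a connected component of $\mathcal{G}^{\mathrm{rec}}$. Its node set therefore does not meet $\cup_{r\in\mathcal{S}}\mathcal{V}_r$, so it is not among the $\mathcal{C}_s$. For each reconnected component $\mathcal{C}_s$, the first fact writes $\mathcal{V}^c_s$ as a union of islands. None of these islands can be unstable, because an unstable island is a whole component disjoint from the stable nodes, while $\mathcal{C}_s$ meets a stable node. Therefore $\mathcal{V}^c_s=\bigcup_{r\in\mathcal{S}_s}\mathcal{V}_r$ with $\mathcal{S}_s$ as in \eqref{eq:Ss}, which proves (ii). The set $\mathcal{S}_s$ is nonempty because $\mathcal{V}^c_s$ meets some stable $\mathcal{V}_r$, and that whole island then lies in $\mathcal{V}^c_s$.

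For (i), I would argue that each stable island lies in exactly one component, and that this component is a reconnected component because it meets a stable node. Distinct components have disjoint node sets, so the $\mathcal{S}_s$ are pairwise disjoint and cover $\mathcal{S}$. Hence they form a partition of $\mathcal{S}$ into $m_c$ nonempty blocks, which gives $m_c\le|\mathcal{S}|$.

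For (iii), the edges of a component are exactly the edges of $\mathcal{G}^{\mathrm{rec}}$ with both endpoints in $\mathcal{V}^c_s$; equivalently, at least one endpoint there, since components are closed under adjacency. I would then split these edges by type. An edge in $\mathcal{E}_r$ has both endpoints in $\mathcal{V}_r$, and by (ii) and disjointness, $\mathcal{V}_r\subseteq\mathcal{V}^c_s$ holds iff $r\in\mathcal{S}_s$. The remaining edges are the reconnected edges with both endpoints in $\mathcal{V}^c_s$, which yields the stated formula.

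For (iv), each edge of $\mathcal{E}^{\mathrm{rec}}$ has both endpoints in stable islands and, being an edge of $\mathcal{G}^{\mathrm{rec}}$, lies in a single component. That component meets stable nodes, so it is some $\mathcal{C}_s$. Uniqueness of $s$ follows from the disjointness of the $\mathcal{V}^c_s$.

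The only delicate step is the careful set-theoretic bookkeeping: components are closed under adjacency, and one must exclude components that mix stable and unstable islands. This exclusion rests entirely on the second fact, that reconnected edges never touch unstable islands. Everything else is routine.
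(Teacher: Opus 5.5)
Your proposal is correct and follows essentially the same route as the paper. Connectivity of each island makes every component's node set a union of islands, and $\mathcal{E}^{\mathrm{rec}}\subseteq\mathcal{E}^{\mathrm{elg}}(\mathcal{I})$ isolates each unstable island as its own excluded component; you then derive (ii), (i), (iii), (iv) in the same order and with the same bookkeeping. The only cosmetic difference is in (iii): you characterize the edges of $\mathcal{C}_s$ directly as the edges of $\mathcal{G}^{\mathrm{rec}}$ with both endpoints in $\mathcal{V}^c_s$, whereas the paper proves the two inclusions separately.
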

\begin{proof}
Each island ${I}_r$ is connected by
Definition~\ref{def:islanding}(iii), and $\mathcal{E}_r \subseteq
\mathcal{E}(\mathcal{G}^{\mathrm{rec}})$, so $\mathcal{V}_r$ is contained
in the node set of exactly one connected component of
$\mathcal{G}^{\mathrm{rec}}$. Since $\{\mathcal{V}_1,\dots,
\mathcal{V}_m\}$ partitions $\mathcal{V}$ by
Definition~\ref{def:islanding}(i)--(ii), the node set of every connected
component of $\mathcal{G}^{\mathrm{rec}}$ is a union of island node
sets.

Let $r \in \mathcal{U}$. Every edge of $\mathcal{E}^{\mathrm{rec}}$ lies in
$\mathcal{E}^{\mathrm{elg}}(\mathcal{I})$ and therefore has both endpoints
in stable islands, so no edge of $\mathcal{G}^{\mathrm{rec}}$ joins
$\mathcal{V}_r$ to $\mathcal{V}\setminus\mathcal{V}_r$. Hence
$\mathcal{V}_r$ is by itself the node set of a connected component, and
that component is excluded by
Definition~\ref{defintion:reconnected_component}. Consequently, each
$\mathcal{C}_s$ consists of a union of stable islands, and the islands
forming $\mathcal{V}_s^c$ are exactly those indexed by
$\mathcal{S}_s$. This proves
\[
\mathcal{V}_s^c=\bigcup_{r\in\mathcal{S}_s}\mathcal{V}_r,
\]
establishing (ii). Moreover, each $\mathcal{C}_s$ contains at least one
stable island by Definition~\ref{defintion:reconnected_component}, so
$\mathcal{S}_s\neq\emptyset$. Since distinct connected components are
node-disjoint and every stable island belongs to exactly one such
component, $\{\mathcal{S}_1,\dots,\mathcal{S}_{m_c}\}$ is a partition
of $\mathcal{S}$. Hence $m_c\leq|\mathcal{S}|$, proving (i).

For (iii), every edge of $\mathcal{E}_s^c$ belongs either to
$\bigcup_{r=1}^m\mathcal{E}_r$ or to
$\mathcal{E}^{\mathrm{rec}}$. In the former case, both endpoints belong
to the same island, which by (ii) is indexed by some
$r\in\mathcal{S}_s$. Thus
\[
\mathcal{E}_s^c
\subseteq
\left(\bigcup_{r\in\mathcal{S}_s}\mathcal{E}_r\right)
\cup
\{(i,j)\in\mathcal{E}^{\mathrm{rec}}:
i,j\in\mathcal{V}_s^c\}.
\]
Conversely, every edge in $\mathcal{E}_r$ with
$r\in\mathcal{S}_s$ has both endpoints in $\mathcal{V}_s^c$ by (ii),
and every edge of $\mathcal{E}^{\mathrm{rec}}$ whose endpoints lie in
$\mathcal{V}_s^c$ is an edge of the same connected component.
Therefore, the reverse inclusion also holds, proving (iii).

Finally, for (iv), the two endpoints of any edge in
$\mathcal{E}^{\mathrm{rec}}$ are joined by that edge in
$\mathcal{G}^{\mathrm{rec}}$ and therefore belong to the same connected
component. Since both endpoints lie in stable islands, this component
is one of $\mathcal{C}_1,\dots,\mathcal{C}_{m_c}$. Its uniqueness
follows from the fact that distinct connected components are
node-disjoint.
\end{proof}

Next, we define the stability of a reconnected component in the reconnection system. 
\begin{definition}[Reconnection stability]
\label{def:reconnection-stability}
Let $\mathcal{I}$ be an islanding with stable-island index set $\mathcal{S}$, and let 
$\mathcal{E}^{\mathrm{rec}} \subseteq {\mathcal{E}}^{\mathrm{elg}}(\mathcal{I})$ be the set of reconnected edges. A reconnected component $\mathcal{C}_s$ 
is said to be \emph{stable in the reconnection system} if, there exists $\epsilon>0$ such that under every initial 
condition $\|x_i(0)\|\leq \epsilon$ for all $i\in \mathcal{V}_s^{c}$, the trajectory 
generated by the reconnection dynamics \eqref{eq:dynamics-reconnection} satisfies
\begin{equation}
\lim_{t \to \infty} x_i(t) = 0 \quad \text{for all } i \in \mathcal{V}_s^{c}.
\end{equation}
\end{definition}

Note that if a stable island $r \in \mathcal{S}$ is not incident to any edge in 
$\mathcal{E}^{\mathrm{rec}}$, then $\mathcal{S}_s = \{r\}$ for the component $\mathcal{C}_s$ containing it, its dynamics in the reconnection system coincide with 
its dynamics in the post-islanding system, and stability is inherited from 
Definition~\ref{def:island-stability}. 

We use $\alpha_1\sum_{(i,j)\in \mathcal{E}^{\mathrm{rec}}}c_{ij}$ to denote the reward of reconnecting the edge set $\mathcal{E}^{\mathrm{rec}}$, where $0<\alpha_1<\alpha_0$.
We define
\begin{align*}
    F_1(\mathcal{I}, \mathcal{E}^{\mathrm{rec}})\triangleq  \sum_{i\in Z}c_i&+\alpha_0 \sum_{(i,j)\in \mathcal{E}^{\mathrm{cut}}(\mathcal{I})}c_{ij}- \alpha_1\sum_{(i,j)\in \mathcal{E}^{\mathrm{rec}}}c_{ij}.
\end{align*}

\begin{problem}
\label{problem:statement}
Given a graph $\mathcal{G}=(\mathcal{V},\mathcal{E})$ and node dynamics \eqref{eq:dynamics-post-islanding} and \eqref{eq:dynamics-reconnection}, compute a valid islanding $\mathcal{I}=\{I_{1},\ldots,I_{m}\}$ and a reconnected set $\mathcal{E}^{\mathrm{rec}}\subseteq \mathcal{E}^{\mathrm{elg}}(\mathcal I)$ that minimize $F_1(\mathcal{I}, \mathcal{E}^{\mathrm{rec}})$ subject to the requirements that every island $r\in \mathcal{S}$ be stable in the sense of Definition~\ref{def:island-stability} and that every reconnected component
$\mathcal{C}_s$ be stable in the sense of Definition~\ref{def:reconnection-stability}.
\end{problem}

\subsection{Optimization Formulation}\label{subsec:opt_formulation}
We use Lyapunov stability analysis to derive sufficient conditions
that guarantee the stability requirements in the islanding and
reconnection stages. To this end, we construct candidate Lyapunov
functions from node-wise and pairwise terms~\cite{dorfler2014synchronization,
chopra2009exponential,olfati2007consensus}. Accordingly, for an island
$\mathcal{I}_r$, let
$\overline{x}_r \triangleq
\operatorname{col}(\{x_i:i\in\mathcal{V}_r\})$
denote its stacked state vector. For each $i\in\mathcal{V}$, let
$W_i:\mathbb{R}^{p_i}\rightarrow\mathbb{R}$ be radially unbounded
and continuous positive definite, and, for each
$(i,j)\in\mathcal{E}$, let
$W_{ij}:\mathbb{R}^{p_i}\times\mathbb{R}^{p_j}
\rightarrow\mathbb{R}$
be continuous positive semidefinite. Define
\begin{align}
W(x)
&\triangleq
\sum_{i\in\mathcal{V}}W_i(x_i)
+
\sum_{(i,j)\in\mathcal{E}}W_{ij}(x_i,x_j),
\label{eq:W}\\
W^{r}(\overline{x}_r)
&\triangleq
\sum_{i\in\mathcal{V}_r}W_i(x_i)
+
\sum_{(i,j)\in\mathcal{E}_r}W_{ij}(x_i,x_j).
\label{eq:Wr}
\end{align}
For $\beta>0$, we define
    $\Omega_{\beta}^{r}
    \triangleq
    \left\{
    \overline{x}_r\in\mathbb{R}^{\overline{p}_r}:
    W^{r}(\overline{x}_r)\leq\beta
    \right\}$, where 
    $\overline{p}_r
    \triangleq
    \sum_{i\in\mathcal{V}_r}p_i$.
Throughout, positive definiteness of $W_i$ includes $W_i(0)=0$,
and positive semidefiniteness of $W_{ij}$ includes
$W_{ij}(0,0)=0$. We next construct a global sublevel set $\Omega$ whose projection
onto each island contains $\Omega_{\beta}^{r}$.

\begin{lemma}
\label{lemma:projection-inclusion}
Under the assumptions stated above on $W_i$ and $W_{ij}$,
suppose that, for every $(i,j)\in\mathcal{E}$, there exist
$\theta_i^{ij},\theta_j^{ij}\geq0$ such that
\[
W_{ij}(x_i,x_j)
\leq
\theta_i^{ij}W_i(x_i)+\theta_j^{ij}W_j(x_j)
\]
for all $x_i\in\mathbb{R}^{p_i}$ and
$x_j\in\mathbb{R}^{p_j}$.
Define
$\theta_i\triangleq
\max\{\theta_i^{ij}:j\in\mathcal{N}_i\}$ and
$\eta\triangleq
\max\{\deg(i)\theta_i:i\in\mathcal{V}\}$,
where $\deg(i)=|\mathcal{N}_i|$.

For any $\beta>0$, let
$\Omega\triangleq\{x:W(x)\leq(1+\eta)\beta\}$.
Let $\emptyset\neq\mathcal{L}\subseteq\mathcal{V}$ and
$\mathcal{E}_{\mathcal{L}}\subseteq
\{(i,j)\in\mathcal{E}:i,j\in\mathcal{L}\}$.
Define
$x_{\mathcal{L}}\triangleq
\operatorname{col}(\{x_i:i\in\mathcal{L}\})\in
\mathbb{R}^{p_{\mathcal{L}}}$,
where
$p_{\mathcal{L}}\triangleq\sum_{i\in\mathcal{L}}p_i$, and
\[
W_{\mathcal{L}}(x_{\mathcal{L}})
\triangleq
\sum_{i\in\mathcal{L}}W_i(x_i)
+
\sum_{(i,j)\in\mathcal{E}_{\mathcal{L}}}
W_{ij}(x_i,x_j).
\]
Let
$\Omega_{\beta}^{\mathcal{L}}
\triangleq
\{x_{\mathcal{L}}\in\mathbb{R}^{p_{\mathcal{L}}}:
W_{\mathcal{L}}(x_{\mathcal{L}})\leq\beta\}$.
Then
$\Omega_{\beta}^{\mathcal{L}}
\subseteq\operatorname{proj}_{\mathcal{L}}(\Omega)$.
\end{lemma}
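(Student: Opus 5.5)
Given $x_{\mathcal{L}}\in\Omega_{\beta}^{\mathcal{L}}$, the plan is to extend it by zero to a full state. Define $x\in\prod_{i=1}^N\mathbb{R}^{p_i}$ by $x_i$ equal to the given component for $i\in\mathcal{L}$ and $x_i=0$ for $i\notin\mathcal{L}$. Then $\operatorname{proj}_{\mathcal{L}}(x)=x_{\mathcal{L}}$ by construction, so it suffices to show $W(x)\leq(1+\eta)\beta$, i.e., $x\in\Omega$.

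First, split $W(x)$ as in \eqref{eq:W}. Since $W_i(0)=0$, the node terms reduce to $\sum_{i\in\mathcal{L}}W_i(x_i)$. For the edge terms, partition $\mathcal{E}$ into $\mathcal{E}_{\mathcal{L}}$ and $\mathcal{E}\setminus\mathcal{E}_{\mathcal{L}}$; the former sum is exactly the edge part of $W_{\mathcal{L}}$. Hence
\[
W(x)=W_{\mathcal{L}}(x_{\mathcal{L}})+\sum_{(i,j)\in\mathcal{E}\setminus\mathcal{E}_{\mathcal{L}}}W_{ij}(x_i,x_j),
\]
and the first term is at most $\beta$.

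The main step, and the only real obstacle, is bounding the remaining edge terms by $\eta\beta$. Do not try to use that they vanish: $\mathcal{E}_{\mathcal{L}}$ may be a strict subset of the induced edges, and $W_{ij}$ need not vanish when only one argument is zero. Instead, apply the hypothesis $W_{ij}(x_i,x_j)\leq\theta_i^{ij}W_i(x_i)+\theta_j^{ij}W_j(x_j)$ to every edge, and drop the terms with endpoints outside $\mathcal{L}$ since $W_k(0)=0$. Regroup the resulting sum by node: each $i\in\mathcal{L}$ appears at most $\deg(i)$ times, each time with coefficient $\theta_i^{ij}\leq\theta_i$. (Under the undirected convention each edge is counted once; if both orientations are listed, $\theta_i$ absorbs this.) This gives
\[
\sum_{(i,j)\in\mathcal{E}\setminus\mathcal{E}_{\mathcal{L}}}W_{ij}(x_i,x_j)\leq\sum_{i\in\mathcal{L}}\deg(i)\theta_iW_i(x_i)\leq\eta\sum_{i\in\mathcal{L}}W_i(x_i).
\]

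Finally, since every $W_{ij}\geq0$, we have $\sum_{i\in\mathcal{L}}W_i(x_i)\leq W_{\mathcal{L}}(x_{\mathcal{L}})\leq\beta$. Combining the bounds yields $W(x)\leq\beta+\eta\beta=(1+\eta)\beta$, so $x\in\Omega$ and $x_{\mathcal{L}}\in\operatorname{proj}_{\mathcal{L}}(\Omega)$.
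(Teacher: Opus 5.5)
Your proposal is correct and matches the paper's proof step for step. You zero-extend $x_{\mathcal{L}}$, isolate the edges in $\mathcal{E}\setminus\mathcal{E}_{\mathcal{L}}$, and bound them through the $\theta$-hypothesis regrouped by node by $\eta\sum_{i\in\mathcal{L}}W_i(x_i)$; then $W_{ij}\geq 0$ bounds that sum by $W_{\mathcal{L}}(x_{\mathcal{L}})\leq\beta$. The only quibble is your parenthetical about both orientations of an edge being listed: that case never arises here, since each undirected edge is counted once in $W$. As written it is also not quite right, because node $i$ could then collect up to $2\deg(i)$ coefficients, which $\theta_i$ and $\eta$ as defined do not cover.
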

\begin{proof}
Fix any
$x_{\mathcal{L}}\in\Omega_{\beta}^{\mathcal{L}}$.
Define
$\widetilde{x}\in\prod_{i\in\mathcal{V}}\mathbb{R}^{p_i}$
by
$\operatorname{proj}_{\mathcal{L}}(\widetilde{x})
=x_{\mathcal{L}}$
and $\widetilde{x}_i=0$ for all
$i\in\mathcal{V}\setminus\mathcal{L}$.
It suffices to show that $\widetilde{x}\in\Omega$.
Since $W_i(0)=0$, we have
\[
W(\widetilde{x})
=
W_{\mathcal{L}}(x_{\mathcal{L}})
+
\sum_{(i,j)\in
\mathcal{E}\setminus\mathcal{E}_{\mathcal{L}}}
W_{ij}(\widetilde{x}_i,\widetilde{x}_j).
\]
Using $W_i(0)=0$ and the assumed bound on $W_{ij}$,
\begin{align*}
&\sum_{(i,j)\in
\mathcal{E}\setminus\mathcal{E}_{\mathcal{L}}}
W_{ij}(\widetilde{x}_i,\widetilde{x}_j)\leq
\sum_{i\in\mathcal{L}}
\sum_{j\in\mathcal{N}_i}
\theta_i^{ij}W_i(x_i)\\
&\quad\leq
\sum_{i\in\mathcal{L}}
\deg(i)\theta_i W_i(x_i)
\leq
\eta\sum_{i\in\mathcal{L}}W_i(x_i).
\end{align*}
Therefore, since $W_{ij}\geq0$,
\begin{align*}
W(\widetilde{x})
&\leq
W_{\mathcal{L}}(x_{\mathcal{L}})
+\eta\sum_{i\in\mathcal{L}}W_i(x_i)\\
&\leq
(1+\eta)W_{\mathcal{L}}(x_{\mathcal{L}})
\leq
(1+\eta)\beta.
\end{align*}
Thus $\widetilde{x}\in\Omega$. Since
$\operatorname{proj}_{\mathcal{L}}(\widetilde{x})
=x_{\mathcal{L}}$, it follows that
$x_{\mathcal{L}}\in
\operatorname{proj}_{\mathcal{L}}(\Omega)$.
As $x_{\mathcal{L}}$ was arbitrary,
$\Omega_{\beta}^{\mathcal{L}}
\subseteq\operatorname{proj}_{\mathcal{L}}(\Omega)$.
\end{proof}
\begin{remark}
The edge set $\mathcal{E}_{\mathcal{L}}$ need not contain all edges
induced by $\mathcal{L}$. This flexibility is required when
Lemma~\ref{lemma:projection-inclusion} is applied to a reconnected
component. By Lemma~\ref{lem:compstruct}(iii), an edge in
$\mathcal{E}^{\mathrm{cut}}(\mathcal{I})\setminus
\mathcal{E}^{\mathrm{rec}}$ may have both endpoints in
$\mathcal{V}_s^c$ without belonging to $\mathcal{E}_s^c$.
\end{remark}
\begin{corollary}
\label{cor:proj}
Under the hypotheses and notation of
Lemma~\ref{lemma:projection-inclusion},
$\Omega_\beta^r \subseteq
\operatorname{proj}_{\mathcal V_r}(\Omega)$ for every island $r$.
\end{corollary}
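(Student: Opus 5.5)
The plan is to obtain the corollary as a direct specialization of Lemma~\ref{lemma:projection-inclusion}. Fix an island $I_r=(\mathcal{V}_r,\mathcal{E}_r)$ of a valid islanding $\mathcal{I}$. We set $\mathcal{L}\triangleq\mathcal{V}_r$ and $\mathcal{E}_{\mathcal{L}}\triangleq\mathcal{E}_r$, and then check the lemma's hypotheses for this choice.

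\textbf{Hypotheses.}
\begin{itemize}
\item[(a)] $\mathcal{L}\neq\emptyset$, which holds by Definition~\ref{def:islanding}(i).
\item[(b)] $\mathcal{E}_{\mathcal{L}}\subseteq\{(i,j)\in\mathcal{E}:i,j\in\mathcal{L}\}$, which holds with equality by the definition of $\mathcal{E}_r$ in Definition~\ref{def:islanding}.
\end{itemize}
The standing assumptions on $W_i$ and $W_{ij}$, the bounds with $\theta_i^{ij},\theta_j^{ij}$, and the resulting constants $\theta_i,\eta$ and the set $\Omega$ are all global objects. They do not depend on $r$, so they carry over unchanged.

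\textbf{Identifications.} With this choice, $x_{\mathcal{L}}=\operatorname{col}(\{x_i:i\in\mathcal{V}_r\})=\overline{x}_r$ and $p_{\mathcal{L}}=\overline{p}_r$. Comparing the definition of $W_{\mathcal{L}}$ with \eqref{eq:Wr} gives $W_{\mathcal{L}}=W^r$, and hence $\Omega^{\mathcal{L}}_\beta=\Omega^r_\beta$. Also, $\operatorname{proj}_{\mathcal{L}}=\operatorname{proj}_{\mathcal{V}_r}$, since both use the same increasing ordering of indices as in the notation section. The lemma's conclusion $\Omega^{\mathcal{L}}_\beta\subseteq\operatorname{proj}_{\mathcal{L}}(\Omega)$ therefore reads $\Omega^r_\beta\subseteq\operatorname{proj}_{\mathcal{V}_r}(\Omega)$. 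Because $r$ was arbitrary, this proves the corollary.

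No step here is a genuine obstacle. The only point that needs care is confirming that the stacking order in $\overline{x}_r$ matches the increasing-index convention behind $\operatorname{col}(\{\cdot\})$ and $\operatorname{proj}$, so that the two sets live in the same coordinates of $\mathbb{R}^{\overline{p}_r}$. Both are defined with increasing indices, so they agree.
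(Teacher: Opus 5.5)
Your proposal is correct and matches the paper's proof, which likewise takes $\mathcal L=\mathcal V_r$ and $\mathcal E_{\mathcal L}=\mathcal E_r$ in Lemma~\ref{lemma:projection-inclusion} and observes that $W_{\mathcal L}=W^r$. The hypothesis checks you add ($\mathcal V_r\neq\emptyset$, the edge-subset condition, and the consistent increasing-index convention in $\operatorname{col}$ and $\operatorname{proj}$) are details the paper leaves implicit.
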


\begin{proof}
Take $\mathcal L=\mathcal V_r$ and
$\mathcal E_{\mathcal L}=\mathcal E_r$ in
Lemma~\ref{lemma:projection-inclusion}. Then
$W_{\mathcal L}=W^r$, and the result follows.
\end{proof}

Next, we construct a sufficient condition ensuring the stability of the post-islanding system.
\begin{proposition}
\label{prop:island-stability}
Under the hypotheses and notation of
Lemma~\ref{lemma:projection-inclusion}, suppose, in addition, that
$W_i$ and $W_{ij}$ are continuously differentiable. If there exist
$k_1>0$ and $\beta>0$ such that
$Y\subseteq\{x:W(x)\leq\beta\}$ and
\begin{align}
\label{eq:stable-Lyapunov}
\sum_{i\in\mathcal{V}_r}
\frac{\partial W^r}{\partial x_i}
\left(
f_i(x_i)
+
\sum_{j\in\mathcal{N}_i\cap\mathcal{V}_r}
f_{ij}(x_i,x_j)
\right)
\leq -k_1 W^r(\overline{x}_r)
\end{align}
for all $x\in\Omega$, where
$\Omega\triangleq\{x:W(x)\leq(1+\eta)\beta\}$,
then island $\mathcal{I}_r$ is stable.
\end{proposition}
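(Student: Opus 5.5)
The strategy is to use $W^r$ as a Lyapunov function for the isolated post-islanding dynamics of island $\mathcal{I}_r$. By \eqref{eq:dynamics-post-islanding}, these dynamics involve only $\overline{x}_r$, so the island is an autonomous system in $\mathbb{R}^{\overline{p}_r}$. The left-hand side of \eqref{eq:stable-Lyapunov} is exactly $\dot W^r$ along these dynamics. Condition \eqref{eq:stable-Lyapunov} is stated for $x\in\Omega$ and involves only $\overline{x}_r$, so it holds for every $\overline{x}_r\in\operatorname{proj}_{\mathcal{V}_r}(\Omega)$. By Corollary~\ref{cor:proj}, $\Omega_{\beta}^r\subseteq \operatorname{proj}_{\mathcal{V}_r}(\Omega)$. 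Hence
\[
\dot W^r(\overline{x}_r)\leq -k_1 W^r(\overline{x}_r)\quad \text{for all } \overline{x}_r\in\Omega_\beta^r .
\]

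Next, I will show that the initial island state lies in $\Omega_\beta^r$ and that this set is forward invariant. For $x(0)\in Y$ we have $W(x(0))\le\beta$. Since $W_i\ge0$ and $W_{ij}\ge0$, dropping terms gives $W^r(\overline{x}_r(0))\le W(x(0))\le\beta$, so $\overline{x}_r(0)\in\Omega_\beta^r$.

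For invariance, $\Omega_\beta^r$ is closed by continuity. It is also bounded, because $W^r\ge \sum_{i\in\mathcal{V}_r}W_i(x_i)$ and each $W_i$ is radially unbounded, so each $x_i$ is confined to a bounded sublevel set of $W_i$. On this compact set the vector field is locally Lipschitz, so solutions exist and are unique. A standard argument gives invariance. Suppose the trajectory first leaves $\Omega_\beta^r$ at a time $t^*$. Then on $[0,t^*]$ the bound $\dot W^r\le -k_1W^r\le 0$ holds. So $W^r$ is nonincreasing there, and continuation just past $t^*$ contradicts $W^r>\beta$. A cleaner route is comparison: $\frac{d}{dt}W^r\le -k_1W^r$ as long as the trajectory stays in the set, so $W^r(t)\le e^{-k_1t}W^r(0)\le\beta$. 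A continuation argument then shows the trajectory stays in $\Omega_\beta^r$ and exists for all $t\ge0$. This boundary/continuation step is the only delicate point, and I expect it to be the main obstacle. I will handle it by taking the supremum of the times at which the trajectory remains in $\Omega_\beta^r$ and showing that this supremum equals the maximal existence time, which is infinite by compactness.

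Finally, from $W^r(\overline{x}_r(t))\le e^{-k_1 t}W^r(\overline{x}_r(0))\to0$ and $W_i\le W^r$, we get $W_i(x_i(t))\to0$ for each $i\in\mathcal{V}_r$. Since $W_i$ is continuous and positive definite, and $x_i(t)$ stays in a compact set, $W_i(x_i(t))\to 0$ forces $x_i(t)\to0$. To see this, suppose not. Then along a subsequence $x_i(t_k)$ converges to some $z\neq0$, and continuity gives $W_i(z)=0$, a contradiction. Hence \eqref{eq:island-stability} holds for every $x(0)\in Y$, and island $\mathcal{I}_r$ is stable in the sense of Definition~\ref{def:island-stability}.
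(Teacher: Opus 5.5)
Your proposal is correct and follows essentially the same route as the paper: it bounds $W^r(\overline{x}_r(0))\le W(x(0))\le\beta$, uses Corollary~\ref{cor:proj} to transfer \eqref{eq:stable-Lyapunov} to all of $\Omega_\beta^r$, then applies forward invariance and the comparison lemma, and concludes from positive definiteness of the $W_i$. The paper simply asserts invariance and convergence, and you spell them out through compactness of $\Omega_\beta^r$, a supremum/continuation argument (whose key fact is $\dot W^r\le -k_1\beta<0$ wherever $W^r=\beta$, which excludes leaving the set), and a subsequence argument.
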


\begin{proof}
Fix an arbitrary $x(0)\in Y$. Since
$Y\subseteq\{x:W(x)\leq\beta\}$ and all terms defining $W$
are nonnegative, we have 
   $ W^r(\overline{x}_r(0))
    \leq W(x(0))
    \leq \beta$.
Hence $\overline{x}_r(0)\in\Omega_\beta^r$.

We next show that the Lyapunov inequality holds throughout
$\Omega_\beta^r$. Fix any
$\overline{x}_r\in\Omega_\beta^r$. By
Corollary~\ref{cor:proj}, there exists $\widetilde{x}\in\Omega$
such that
$\operatorname{proj}_{\mathcal V_r}(\widetilde{x})
=\overline{x}_r$.
Evaluating \eqref{eq:stable-Lyapunov} at $\widetilde{x}$ and
noting that both sides of \eqref{eq:stable-Lyapunov} depend only
on the states in $\mathcal V_r$, we obtain, along the
post-islanding dynamics,
   $ \dot{W}^r(\overline{x}_r)
    \leq -k_1 W^r(\overline{x}_r)$,
    $\forall \overline{x}_r\in\Omega_\beta^r$.
It follows that $\Omega_\beta^r$ is forward invariant.
Moreover, the comparison lemma gives
   $W^r(\overline{x}_r(t))
    \leq
    W^r(\overline{x}_r(0))e^{-k_1t}$, for all 
   $t\geq0$.
Thus $W^r(\overline{x}_r(t))\to0$ as $t\to\infty$.
By construction,
$W^r(\overline{x}_r)\geq
\sum_{i\in\mathcal V_r}W_i(x_i)$; hence $W^r$ is positive
definite and radially unbounded. Therefore
$\overline{x}_r(t)\to0$ as $t\to\infty$, and consequently
$x_i(t)\to0$ for every $i\in\mathcal V_r$.
Since $x(0)\in Y$ was arbitrary, $\mathcal I_r$ is stable.
\end{proof}

Next, we analyze the stability of the reconnection system. For a
reconnected component
$\mathcal{C}_s=(\mathcal{V}_s^c,\mathcal{E}_s^c)$, define
$\widehat{x}_s\triangleq
\operatorname{col}(\{x_i:i\in\mathcal{V}_s^c\})$ and
\begin{align}
V^s(\widehat{x}_s)
\triangleq
\sum_{i\in\mathcal{V}_s^c}W_i(x_i)
+
\sum_{(i,j)\in\mathcal{E}_s^c}W_{ij}(x_i,x_j).
\label{eq:Vr}
\end{align}
By Lemma~\ref{lem:compstruct}, $V^s$ admits the decomposition
\begin{align}
V^s(\widehat{x}_s)
=
\sum_{r\in\mathcal{S}_s}
\left(
W^r(\overline{x}_r)
+
\frac{1}{2}
\sum_{i\in\mathcal{V}_r}
\sum_{j\in\mathcal{N}_i^{\mathrm{rec}}}
W_{ij}(x_i,x_j)
\right).
\label{eq:decomp}
\end{align}
Indeed, $W^r$ accounts for all node and internal-edge terms of
island $r$, while the factor $1/2$ ensures that each reconnected
edge is counted exactly once.

For any $\gamma>0$, define
$\widehat{\Omega}\triangleq
\{x:W(x)\leq(1+\eta)\gamma\}$ and
$\widehat{\Omega}_\gamma^s\triangleq
\{\widehat{x}_s:V^s(\widehat{x}_s)\leq\gamma\}$.
Applying Lemma~\ref{lemma:projection-inclusion} with
$\mathcal{L}=\mathcal{V}_s^c$,
$\mathcal{E}_{\mathcal{L}}=\mathcal{E}_s^c$, and $\beta=\gamma$
gives
$\widehat{\Omega}_\gamma^s
\subseteq
\operatorname{proj}_{\mathcal{V}_s^c}(\widehat{\Omega})$.

\begin{lemma}
\label{lem:handoff}
For any $\gamma>0$, there exists $\epsilon>0$ such that, for every
valid islanding $\mathcal{I}$, every
$\mathcal{E}^{\mathrm{rec}}
\subseteq\mathcal{E}^{\mathrm{elg}}(\mathcal{I})$, and every
resulting reconnected component
$\mathcal{C}_s=(\mathcal{V}_s^c,\mathcal{E}_s^c)$,
\[
    \|x_i\|\leq\epsilon,\quad
    \forall i\in\mathcal{V}_s^c
    \quad\Longrightarrow\quad
    \widehat{x}_s\in\widehat{\Omega}_\gamma^s .
\]
\end{lemma}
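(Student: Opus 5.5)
The idea is to bound $V^s$ from above by a sum of node-wise continuous functions that vanish at the origin. Such a bound is independent of the islanding and of the reconnection choice, and we can then use continuity at $0$ together with the finiteness of $\mathcal{V}$. The uniformity over all $\mathcal{I}$, $\mathcal{E}^{\mathrm{rec}}$, and $s$ is the only subtle point. We handle it by never letting $\epsilon$ depend on $\mathcal{V}_s^c$ or $\mathcal{E}_s^c$, using only that these are subsets of $\mathcal{V}$ and $\mathcal{E}$.

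First, since $\mathcal{E}_s^c\subseteq\mathcal{E}$ (by Lemma~\ref{lem:compstruct}(iii), because $\mathcal{E}_r\subseteq\mathcal{E}$ and $\mathcal{E}^{\mathrm{rec}}\subseteq\mathcal{E}^{\mathrm{elg}}(\mathcal{I})\subseteq\mathcal{E}$), we apply the hypothesis $W_{ij}\le\theta_i^{ij}W_i+\theta_j^{ij}W_j$ to each edge term in \eqref{eq:Vr}. Each node $i\in\mathcal{V}_s^c$ is incident to at most $\deg(i)$ edges of $\mathcal{E}_s^c$, so, exactly as in the proof of Lemma~\ref{lemma:projection-inclusion},
\[
V^s(\widehat{x}_s)\le \sum_{i\in\mathcal{V}_s^c}\bigl(1+\deg(i)\theta_i\bigr)W_i(x_i)\le (1+\eta)\sum_{i\in\mathcal{V}_s^c}W_i(x_i).
\]
Alternatively, we could avoid the $\theta$ bound entirely by using continuity of each $W_{ij}$ at $(0,0)$ with $W_{ij}(0,0)=0$. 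Either route works, and the first reuses existing machinery.

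Second, each $W_i$ is continuous with $W_i(0)=0$. So for $\delta\triangleq\gamma/\bigl((1+\eta)N\bigr)>0$ there is $\epsilon_i>0$ with $W_i(x_i)\le\delta$ whenever $\|x_i\|\le\epsilon_i$. Set $\epsilon\triangleq\min_{i\in\mathcal{V}}\epsilon_i>0$, which is positive because $\mathcal{V}$ is finite. Then $\|x_i\|\le\epsilon$ for all $i\in\mathcal{V}_s^c$ gives
\[
V^s(\widehat{x}_s)\le(1+\eta)|\mathcal{V}_s^c|\,\delta\le\gamma,
\]
that is, $\widehat{x}_s\in\widehat{\Omega}_\gamma^s$. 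Since $\epsilon$ depends only on $\gamma$, $\eta$, $N$, and the functions $W_i$, it works simultaneously for every valid islanding, every admissible $\mathcal{E}^{\mathrm{rec}}$, and every component.

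The main point of care is confirming that $\eta$ is defined from the original graph $\mathcal{G}$ and is therefore independent of $\mathcal{I}$. One must also check that the degree count in $\mathcal{E}_s^c$ never exceeds $\deg(i)$ in $\mathcal{G}$. This holds because edges are never duplicated: $\mathcal{E}^{\mathrm{rec}}$ is disjoint from $\bigcup_r\mathcal{E}_r$, since it lies in $\mathcal{E}^{\mathrm{cut}}(\mathcal{I})$.
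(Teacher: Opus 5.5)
Your proof is correct, but your main route differs from the paper's. The paper does not use the domination hypothesis $W_{ij}\le\theta_i^{ij}W_i+\theta_j^{ij}W_j$. It observes that $V^s$ has at most $N+M$ nonnegative terms. It then uses continuity of each $W_i$ at $0$ and of each $W_{ij}$ at $(0,0)$ to make every such term at most $\gamma/(N+M)$, and takes $\epsilon$ as the minimum of the resulting radii over the nodes and edges of $\mathcal{G}$. That is the ``alternative'' you mention in passing but do not carry out.

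Your route instead absorbs the edge terms into the node terms via $V^s\le(1+\eta)\sum_{i\in\mathcal{V}_s^c}W_i(x_i)$. This is a correct rerun of the counting in Lemma~\ref{lemma:projection-inclusion}, valid because $\mathcal{E}_s^c\subseteq\mathcal{E}$. As a result, only the $N$ node functions need moduli of continuity. It also gives a more explicit radius, e.g.\ $\epsilon=\sqrt{2\gamma/((1+\eta)N)}$ when $W_i=\frac12\|x_i\|^2$.

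The price is that you rely on the $\theta$-domination hypothesis. That is a hypothesis of Lemma~\ref{lemma:projection-inclusion}, not part of the statement of Lemma~\ref{lem:handoff}. It is in force wherever the lemma is used (Proposition~\ref{prop:reconnection}), so nothing breaks downstream. Still, the paper's argument proves the lemma under only continuity together with $W_i(0)=0$ and $W_{ij}(0,0)=0$.

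Finally, your disjointness remark about $\mathcal{E}^{\mathrm{rec}}$ and $\bigcup_r\mathcal{E}_r$ is harmless but unnecessary. $\mathcal{E}_s^c$ is a subset of $\mathcal{E}$ and $V^s$ sums over it once, so the incidence count at $i$ is at most $\deg(i)$ regardless.
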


\begin{proof}
Since every reconnected component satisfies
$|\mathcal{V}_s^c|\leq N$ and $|\mathcal{E}_s^c|\leq M$,
the function $V^s$ contains at most $N+M$ nonnegative node and
edge terms. It therefore suffices to choose a neighborhood of the
origin in which each such term is bounded above by
$\gamma/(N+M)$.

For each $i\in\mathcal{V}$, continuity of $W_i$ and $W_i(0)=0$
imply the existence of $\epsilon_i>0$ such that
\[
    \|x_i\|\leq\epsilon_i
    \quad\Longrightarrow\quad
    W_i(x_i)\leq\frac{\gamma}{N+M}.
\]
Similarly, for each $(i,j)\in\mathcal{E}$, continuity of $W_{ij}$
and $W_{ij}(0,0)=0$ imply the existence of
$\epsilon_{ij}>0$ such that
\[
    \max\{\|x_i\|,\|x_j\|\}\leq\epsilon_{ij}
    \quad\Longrightarrow\quad
    W_{ij}(x_i,x_j)\leq\frac{\gamma}{N+M}.
\]
Define
   $ \epsilon\triangleq
    \min\left(
    \{\epsilon_i:i\in\mathcal{V}\}
    \cup
    \{\epsilon_{ij}:(i,j)\in\mathcal{E}\}
    \right)>0$,
where positivity follows from the finiteness of
$\mathcal{V}$ and $\mathcal{E}$.

Now consider any admissible $\mathcal{I}$,
$\mathcal{E}^{\mathrm{rec}}$, and resulting component
$\mathcal{C}_s$. If $\|x_i\|\leq\epsilon$ for every
$i\in\mathcal{V}_s^c$, then every node and edge term in $V^s$
satisfies the corresponding bound above. Hence, we have 
$V^s(\widehat{x}_s)
\leq
\frac{|\mathcal{V}_s^c|+|\mathcal{E}_s^c|}{N+M}\gamma\leq\gamma$.
It follows that
$\widehat{x}_s\in\widehat{\Omega}_\gamma^s$.
\end{proof}

Suppose every island $r\in\mathcal{S}$ is stable. Then, for each
post-islanding trajectory with $x(0)\in Y$, there exists
$T\geq0$ such that
$\|x_i(t)\|\leq\epsilon$ for every
$i\in\bigcup_{r\in\mathcal{S}}\mathcal{V}_r$ and all $t\geq T$.
Since
$\mathcal{V}_s^c=
\bigcup_{r\in\mathcal{S}_s}\mathcal{V}_r$
with $\mathcal{S}_s\subseteq\mathcal{S}$,
Lemma~\ref{lem:handoff} implies
$\widehat{x}_s(t)\in\widehat{\Omega}_\gamma^s$
for every reconnected component $\mathcal{C}_s$ and all $t\geq T$.

We now give an island-wise sufficient condition for reconnection
stability. The condition requires the additional terms induced by
the reconnected edges to be dominated by the decay margin available
from \eqref{eq:stable-Lyapunov}.

\begin{proposition}
\label{prop:reconnection}
Under the hypotheses and notation of
Lemma~\ref{lemma:projection-inclusion}, suppose, in addition, that
$W_i$ and $W_{ij}$ are continuously differentiable.
Let $\mathcal{I}$ be a valid islanding with stable-island index set
$\mathcal{S}$, let
$\mathcal{E}^{\mathrm{rec}}
\subseteq\mathcal{E}^{\mathrm{elg}}(\mathcal{I})$, and let
$\mathcal{C}_s$ be a reconnected component.
Suppose there exist constants $k_1\geq k_2>0$ and scalars
$\beta\geq\gamma>0$ such that, with
$\Omega\triangleq\{x:W(x)\leq(1+\eta)\beta\}$,
and 
$\widehat{\Omega}\triangleq
\{x:W(x)\leq(1+\eta)\gamma\}$,
the following conditions hold:
\begin{enumerate}
\item[(i)]
For every $r\in\mathcal{S}_s$, condition
\eqref{eq:stable-Lyapunov} holds for all $x\in\Omega$.

\item[(ii)]
For every $r\in\mathcal{S}_s$ and all
$x\in\widehat{\Omega}$,
\begin{align}
\label{eq:prop2}
&\sum_{i\in\mathcal{V}_r}
\frac{\partial W^r}{\partial x_i}
\sum_{k\in\mathcal{N}_i^{\mathrm{rec}}}
f_{ik}(x_i,x_k)
\nonumber\\
&+
\sum_{i\in\mathcal{V}_r}
\sum_{j\in\mathcal{N}_i^{\mathrm{rec}}}
\frac{\partial W_{ij}}{\partial x_i}
\bigg(
f_i(x_i)\nonumber\\
&\quad\quad\quad\quad\quad\quad\quad+
\sum_{k\in
(\mathcal{N}_i\cap\mathcal{V}_r)
\cup\mathcal{N}_i^{\mathrm{rec}}}
f_{ik}(x_i,x_k)
\bigg)
\nonumber\\
&\leq
-\frac{k_2}{2}
\sum_{i\in\mathcal{V}_r}
\sum_{j\in\mathcal{N}_i^{\mathrm{rec}}}
W_{ij}(x_i,x_j)
+
(k_1-k_2)W^r(\overline{x}_r).
\end{align}
\end{enumerate}
Then the reconnected component $\mathcal{C}_s$ is stable.
\end{proposition}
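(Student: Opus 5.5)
The plan is to use $V^s$ in \eqref{eq:Vr} as a Lyapunov function for the reconnection dynamics \eqref{eq:dynamics-reconnection} restricted to $\mathcal{V}_s^c$. The target inequality is $\dot V^s(\widehat{x}_s)\leq -k_2V^s(\widehat{x}_s)$ for all $\widehat{x}_s\in\widehat{\Omega}_\gamma^s$. From it, forward invariance of $\widehat{\Omega}_\gamma^s$ and exponential decay follow exactly as in the proof of Proposition~\ref{prop:island-stability}. We would take $\epsilon$ from Lemma~\ref{lem:handoff}. Any initial condition with $\|x_i(0)\|\leq\epsilon$ for all $i\in\mathcal{V}_s^c$ then lies in $\widehat{\Omega}_\gamma^s$. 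The comparison lemma gives $V^s(\widehat{x}_s(t))\leq V^s(\widehat{x}_s(0))e^{-k_2t}$. Since $V^s\geq\sum_{i\in\mathcal{V}_s^c}W_i(x_i)$ is positive definite and radially unbounded, this yields $x_i(t)\to0$ for all $i\in\mathcal{V}_s^c$.

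\emph{Step 1: the derivative of $V^s$ via the decomposition \eqref{eq:decomp}.} For each $r\in\mathcal{S}_s$ we split the reconnection vector field of $x_i$, $i\in\mathcal{V}_r$, into the post-islanding part \eqref{eq:dynamics-post-islanding} plus the extra terms $\sum_{k\in\mathcal{N}_i^{\mathrm{rec}}}f_{ik}$. Two points of bookkeeping are needed.
\begin{enumerate}
\item[(a)] $W^r$ depends only on states in $\mathcal{V}_r$. Hence $\dot W^r$ equals the left-hand side of \eqref{eq:stable-Lyapunov} plus the first term of \eqref{eq:prop2}.
\item[(b)] For a reconnected edge $(i,j)$ with $i\in\mathcal{V}_r$ and $j\in\mathcal{V}_{r'}$, the time derivative of $W_{ij}$ is $\frac{\partial W_{ij}}{\partial x_i}\dot x_i+\frac{\partial W_{ij}}{\partial x_j}\dot x_j$. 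We identify $W_{ji}(x_j,x_i)=W_{ij}(x_i,x_j)$, as the undirected edge convention requires. The sum over $r\in\mathcal{S}_s$ of the terms $\sum_{i\in\mathcal{V}_r}\sum_{j\in\mathcal{N}_i^{\mathrm{rec}}}\frac{\partial W_{ij}}{\partial x_i}\dot x_i$ then visits each edge once from each endpoint. It therefore reproduces exactly the derivative of the total reconnected-edge contribution $\frac12\sum_r\sum_i\sum_j W_{ij}$, counted once per edge.
\end{enumerate}
By Lemma~\ref{lem:compstruct}(iv), all reconnected neighbors of nodes in $\mathcal{V}_r$ lie in $\mathcal{V}_s^c$, so no outside states enter. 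Moreover $\dot x_i=f_i+\sum_{k\in(\mathcal{N}_i\cap\mathcal{V}_r)\cup\mathcal{N}_i^{\mathrm{rec}}}f_{ik}$, which matches the bracket in \eqref{eq:prop2}.

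\emph{Step 2: applying the hypotheses pointwise.} Fix $\widehat{x}_s\in\widehat{\Omega}_\gamma^s$. By Lemma~\ref{lemma:projection-inclusion}, in the form stated before Lemma~\ref{lem:handoff}, there is $\widetilde{x}\in\widehat{\Omega}$ with $\operatorname{proj}_{\mathcal{V}_s^c}(\widetilde{x})=\widehat{x}_s$. Since $\gamma\leq\beta$, we also have $\widetilde{x}\in\widehat{\Omega}\subseteq\Omega$. Both \eqref{eq:stable-Lyapunov} and \eqref{eq:prop2} depend only on states in $\mathcal{V}_s^c$, so evaluating them at $\widetilde{x}$ is legitimate. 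Summing over $r\in\mathcal{S}_s$ gives
\[
\dot V^s\leq\sum_{r\in\mathcal{S}_s}\Big(-k_1W^r+(k_1-k_2)W^r-\tfrac{k_2}{2}\sum_{i\in\mathcal{V}_r}\sum_{j\in\mathcal{N}_i^{\mathrm{rec}}}W_{ij}\Big).
\]
By \eqref{eq:decomp}, the right-hand side equals $-k_2V^s$.

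The main obstacle is the bookkeeping in Step 1(b). We must verify that the $\frac12$ in \eqref{eq:decomp}, together with the absence of a $\frac12$ on the derivative terms in \eqref{eq:prop2}, is consistent under the symmetry convention for $W_{ij}$. We must also check that no internal-edge derivative is double counted. Both $\frac{\partial W^r}{\partial x_i}$ and the separately listed $\frac{\partial W_{ij}}{\partial x_i}$ for reconnected edges contribute, and these must be kept apart.
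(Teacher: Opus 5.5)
Your proposal is correct and follows essentially the same route as the paper. You differentiate $V^s$ along the reconnection dynamics by splitting $\dot x_i$ into its post-islanding part and its reconnected-edge part, and lift $\widehat x_s\in\widehat\Omega_\gamma^s$ to a point of $\widehat\Omega\subseteq\Omega$ via Lemma~\ref{lemma:projection-inclusion}. You then sum conditions (i)--(ii) over $r\in\mathcal S_s$ to obtain $\dot V^s\le -k_2V^s$ through \eqref{eq:decomp}, and finish with Lemma~\ref{lem:handoff}, forward invariance and the comparison lemma, exactly as the paper does. Your explicit check that the convention $W_{ji}(x_j,x_i)=W_{ij}(x_i,x_j)$ reconciles the factor $\frac12$ in \eqref{eq:decomp} with the un-halved derivative terms in \eqref{eq:prop2} spells out a step the paper uses only implicitly.
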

\begin{proof}
Since $\gamma\leq\beta$, we have
$\widehat{\Omega}\subseteq\Omega$.
For $r\in\mathcal S_s$ and $i\in\mathcal V_r$, define
$g_i^r(x)\triangleq
f_i(x_i)+
\sum_{j\in\mathcal N_i\cap\mathcal V_r}f_{ij}(x_i,x_j)$,
and 
$h_i(x)\triangleq
\sum_{j\in\mathcal N_i^{\mathrm{rec}}}f_{ij}(x_i,x_j)$,
so that the reconnection dynamics satisfy
$\dot x_i=g_i^r(x)+h_i(x)$.
Differentiating $V^s$ along the reconnection dynamics and grouping
the terms by their incident islands gives
\begin{align*}
\dot V^s
=\sum_{r\in\mathcal S_s}\Bigg[
&\sum_{i\in\mathcal V_r}
\frac{\partial W^r}{\partial x_i}g_i^r
+
\sum_{i\in\mathcal V_r}
\frac{\partial W^r}{\partial x_i}h_i\\
&\qquad\qquad+
\sum_{i\in\mathcal V_r}
\sum_{j\in\mathcal N_i^{\mathrm{rec}}}
\frac{\partial W_{ij}}{\partial x_i}
\bigl(g_i^r+h_i\bigr)
\Bigg].
\end{align*}
Here, the last two terms are precisely the contributions induced
by the reconnected edges. Since
$x\in\widehat{\Omega}\subseteq\Omega$, conditions~(i)--(ii) imply
\begin{align*}
\dot V^s
&\leq
\sum_{r\in\mathcal S_s}
\Bigg[
-k_2W^r(\overline{x}_r)
-\frac{k_2}{2}
\sum_{i\in\mathcal V_r}
\sum_{j\in\mathcal N_i^{\mathrm{rec}}}
W_{ij}(x_i,x_j)
\Bigg]\\
&=-k_2V^s(\widehat{x}_s),
\end{align*}
where the last equality follows from \eqref{eq:decomp}.

Applying Lemma~\ref{lemma:projection-inclusion} with
$\mathcal L=\mathcal V_s^c$,
$\mathcal E_{\mathcal L}=\mathcal E_s^c$, and $\beta=\gamma$
shows that every
$\widehat{x}_s\in\widehat{\Omega}_\gamma^s$
is the projection of some $x\in\widehat{\Omega}$.
Since $\mathcal C_s$ is a connected component of the reconnection
system, its dynamics, and hence $\dot V^s$, depend only on
$\widehat{x}_s$. Therefore,
\[
\dot V^s(\widehat{x}_s)
\leq-k_2V^s(\widehat{x}_s),
\qquad
\widehat{x}_s\in\widehat{\Omega}_\gamma^s.
\]

By Lemma~\ref{lem:handoff}, there exists $\epsilon>0$ such that
$\|x_i(0)\|\leq\epsilon$ for all $i\in\mathcal V_s^c$ implies
$\widehat{x}_s(0)\in\widehat{\Omega}_\gamma^s$.
On the boundary of this set,
$\dot V^s\leq-k_2\gamma<0$; hence
$\widehat{\Omega}_\gamma^s$ is forward invariant, and
\[
V^s(\widehat{x}_s(t))
\leq
V^s(\widehat{x}_s(0))e^{-k_2t}.
\]
Thus $V^s(\widehat{x}_s(t))\to0$.
Since
$V^s(\widehat{x}_s)\geq
\sum_{i\in\mathcal V_s^c}W_i(x_i)$,
$V^s$ is positive definite and radially unbounded.
Consequently, $\widehat{x}_s(t)\to0$, and
$\mathcal C_s$ is stable.
\end{proof}

The stability requirements in Problem~\ref{problem:statement} are
trajectory-based and are not directly amenable to the combinatorial
optimization formulation. We therefore impose the sufficient
Lyapunov conditions of Propositions~\ref{prop:island-stability}
and~\ref{prop:reconnection} through violation functions.
For each $r\in\mathcal S$, define
\begin{align}
\Phi_r(x;\mathcal I)
\triangleq\;&
\sum_{i\in\mathcal V_r}
\frac{\partial W^r}{\partial x_i}
\left(
f_i(x_i)
+
\sum_{j\in\mathcal N_i\cap\mathcal V_r}
f_{ij}(x_i,x_j)
\right)
\nonumber\\
&+k_1W^r(\overline{x}_r),
\label{eq:Phir}
\end{align}
and
\begin{align}
&\Psi_r(x;\mathcal I,\mathcal E^{\mathrm{rec}})
\notag\\
\triangleq\;&
\sum_{i\in\mathcal V_r}
\frac{\partial W^r}{\partial x_i}
\sum_{k\in\mathcal N_i^{\mathrm{rec}}}
f_{ik}(x_i,x_k)
\notag\\
&+
\sum_{i\in\mathcal V_r}
\sum_{j\in\mathcal N_i^{\mathrm{rec}}}
\frac{\partial W_{ij}}{\partial x_i}
\left(
f_i(x_i)
+
\sum_{k\in
(\mathcal N_i\cap\mathcal V_r)
\cup\mathcal N_i^{\mathrm{rec}}}
f_{ik}(x_i,x_k)
\right)
\notag\\
&+
\frac{k_2}{2}
\sum_{i\in\mathcal V_r}
\sum_{j\in\mathcal N_i^{\mathrm{rec}}}
W_{ij}(x_i,x_j)
-(k_1-k_2)W^r(\overline{x}_r).
\label{eq:Psir}
\end{align}
Hence, $\Phi_r\leq0$ and $\Psi_r\leq0$ are the pointwise
inequalities required by
Propositions~\ref{prop:island-stability}
and~\ref{prop:reconnection}, respectively.

We then formulate
\begin{subequations}
\label{eq:opt}
\begin{align}
\min_{\mathcal I,\mathcal E^{\mathrm{rec}}}\quad
&F_1(\mathcal I,\mathcal E^{\mathrm{rec}})
\label{eq:opt-obj}\\
\mathrm{s.t.}\quad
&F_2(\mathcal I)=0,
\label{eq:opt-island}\\
&F_3(\mathcal I,\mathcal E^{\mathrm{rec}})=0,
\label{eq:opt-recon}\\
&\mathcal E^{\mathrm{rec}}
\subseteq\mathcal E^{\mathrm{elg}}(\mathcal I),
\label{eq:opt-elg}\\
&\mathcal I\text{ is a valid islanding},
\label{eq:opt-valid}
\end{align}
\end{subequations}
where
\begin{align}
F_2(\mathcal I)
&\triangleq
\int_{\Omega}
\sum_{r\in\mathcal S}
\{\Phi_r(x;\mathcal I)\}_+\,dx,\label{eq:F2}\\
F_3(\mathcal I,\mathcal E^{\mathrm{rec}})
&\triangleq
\int_{\widehat{\Omega}}
\sum_{r\in\mathcal S}
\{\Psi_r(x;\mathcal I,\mathcal E^{\mathrm{rec}})\}_+\,dx.\label{eq:F3}
\end{align}
The following result establishes that the zero-violation constraints
in \eqref{eq:opt} enforce the corresponding pointwise Lyapunov
conditions.

\begin{lemma}
\label{lem:constraints}
Assume that $\Omega$ and $\widehat{\Omega}$ are the closures of
their interiors. Let $\mathcal I$ be a valid islanding and let
$\mathcal E^{\mathrm{rec}}
\subseteq\mathcal E^{\mathrm{elg}}(\mathcal I)$.
If \eqref{eq:opt-island} and \eqref{eq:opt-recon} hold, then every
island $r\in\mathcal S$ is stable and every reconnected component
$\mathcal C_s$ is stable.
\end{lemma}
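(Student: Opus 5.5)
The strategy is to show that zero violation integrals force the pointwise inequalities $\Phi_r\le 0$ on $\Omega$ and $\Psi_r\le 0$ on $\widehat{\Omega}$. Once that is established, Propositions~\ref{prop:island-stability} and~\ref{prop:reconnection} give the conclusion directly.

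\emph{Step 1: nonnegativity and continuity of the integrands.} Since $W_i$ and $W_{ij}$ are continuously differentiable and $f_i,f_{ij}$ are locally Lipschitz (hence continuous), each $\Phi_r(\cdot;\mathcal I)$ and $\Psi_r(\cdot;\mathcal I,\mathcal E^{\mathrm{rec}})$ is continuous in $x$. Because $t\mapsto\{t\}_+$ is continuous and nonnegative, each summand $\{\Phi_r\}_+$ and $\{\Psi_r\}_+$ is a continuous nonnegative function. Since $\mathcal S$ is finite, $F_2(\mathcal I)=0$ forces $\int_\Omega\{\Phi_r\}_+\,dx=0$ for every $r\in\mathcal S$. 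The same argument applied to $F_3$ gives $\int_{\widehat\Omega}\{\Psi_r\}_+\,dx=0$ for every $r\in\mathcal S$.

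\emph{Step 2: from zero integral to pointwise inequality.} This is the main obstacle, and it is where the hypothesis that $\Omega$ and $\widehat\Omega$ are closures of their interiors is used.
\begin{itemize}
\item Suppose $\{\Phi_r\}_+(x_0)>0$ for some $x_0\in\mathrm{int}\,\Omega$. By continuity the function stays positive on a ball around $x_0$ contained in $\mathrm{int}\,\Omega$. That ball has positive Lebesgue measure, so the integral would be positive, a contradiction. Hence $\Phi_r\le0$ on $\mathrm{int}\,\Omega$.
\item Any $x\in\Omega$ is a limit of points of $\mathrm{int}\,\Omega$, because $\Omega=\overline{\mathrm{int}\,\Omega}$. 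Continuity of $\Phi_r$ therefore extends $\Phi_r\le0$ to all of $\Omega$.
\end{itemize}
The same argument on $\widehat\Omega$ yields $\Psi_r\le 0$ on $\widehat\Omega$. I would also note that $\Omega$ is closed, as a sublevel set of the continuous function $W$, so the integrals are well defined on a measurable set.

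\emph{Step 3: conclude.} The inequality $\Phi_r\le0$ on $\Omega$ is exactly \eqref{eq:stable-Lyapunov}. Since $Y\subseteq\{x:W(x)\le\beta\}$ holds by the standing setup, Proposition~\ref{prop:island-stability} shows that every $r\in\mathcal S$ is stable.

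For a reconnected component $\mathcal C_s$, we have $\mathcal S_s\subseteq\mathcal S$ by Lemma~\ref{lem:compstruct}(i). So condition (i) of Proposition~\ref{prop:reconnection} holds for each $r\in\mathcal S_s$. Moreover, $\Psi_r\le0$ on $\widehat\Omega$ is precisely \eqref{eq:prop2}, which is condition (ii). Hence every $\mathcal C_s$ is stable.

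I would remark that the constants $k_1\ge k_2>0$ and $\beta\ge\gamma>0$ are those fixed in the definitions of $\Phi_r$, $\Psi_r$, $\Omega$ and $\widehat\Omega$, so the hypotheses of both propositions are met.
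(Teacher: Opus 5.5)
Your proposal is correct and follows the paper's proof. You combine continuity and nonnegativity of the integrands with the closure-of-interior assumption to obtain $\Phi_r\le 0$ on $\Omega$ and $\Psi_r\le 0$ on $\widehat{\Omega}$, then invoke Propositions~\ref{prop:island-stability} and~\ref{prop:reconnection}, using $\mathcal S_s\subseteq\mathcal S$ from Lemma~\ref{lem:compstruct}. The differences are only in detail: you spell out the positive-measure-ball and density argument that the paper compresses into ``the domain regularity assumed in the lemma,'' and you state explicitly the standing hypotheses ($Y\subseteq\{W\le\beta\}$, $k_1\ge k_2>0$, $\beta\ge\gamma>0$) that the paper uses implicitly.
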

\begin{proof}
The integrands in \eqref{eq:F2} and \eqref{eq:F3} are continuous
and nonnegative. By \eqref{eq:opt-island} and the domain regularity
assumed in the lemma, the integrand in \eqref{eq:F2} vanishes
identically on $\Omega$. Hence
$\Phi_r(x;\mathcal I)\leq0$,
$\forall x\in\Omega$ and $r\in\mathcal S$.
By the definition of $\Phi_r$ in \eqref{eq:Phir}, this inequality
is equivalent to \eqref{eq:stable-Lyapunov}. Therefore,
Proposition~\ref{prop:island-stability} implies that every
$r\in\mathcal S$ is stable.

Similarly, \eqref{eq:opt-recon} and the same domain-regularity
argument imply
$\Psi_r(x;\mathcal I,\mathcal E^{\mathrm{rec}})\leq0$,
$\forall x\in\widehat{\Omega}$, and $r\in\mathcal S$.
By the definition of $\Psi_r$ in \eqref{eq:Psir}, this inequality
is equivalent to condition~(ii) of
Proposition~\ref{prop:reconnection}. For any reconnected component
$\mathcal C_s$, Lemma~\ref{lem:compstruct} gives
$\mathcal S_s\subseteq\mathcal S$, and the preceding inequality
for $\Phi_r$ establishes condition~(i) of
Proposition~\ref{prop:reconnection} for every
$r\in\mathcal S_s$. Hence $\mathcal C_s$ is stable by
Proposition~\ref{prop:reconnection}. Since $\mathcal C_s$ is
arbitrary, every reconnected component is stable.
\end{proof}

We next penalize violations of the stability conditions and consider
\begin{subequations}
\label{eq:relaxed}
\begin{align}
\min_{\mathcal I,\mathcal E^{\mathrm{rec}}}\quad
&F_1(\mathcal I,\mathcal E^{\mathrm{rec}})
+\alpha_2F_2(\mathcal I)
+\alpha_3F_3(\mathcal I,\mathcal E^{\mathrm{rec}})
\\
\mathrm{s.t.}\quad
&\mathcal E^{\mathrm{rec}}
\subseteq\mathcal E^{\mathrm{elg}}(\mathcal I),
\\
&\mathcal I\text{ is a valid islanding},
\end{align}
\end{subequations}
where $\alpha_2,\alpha_3>0$ are penalty weights.

Problem~\eqref{eq:relaxed} remains combinatorial for two reasons.
First, the number of valid islandings grows combinatorially with the
network size. Second, the islanding and reconnection decisions are
coupled because $\mathcal E^{\mathrm{elg}}(\mathcal I)$ depends on
$\mathcal I$. These challenges motivate the solution method developed
in the next section.

\section{Matroid and Submodular Reformulation}
\label{sec:matroid}
In this section, we address the combinatorial and coupling challenges
in \eqref{eq:relaxed} by reformulating it as an optimization problem
over the bases of a graphic matroid. We construct an augmented graph
such that each spanning tree determines an admissible pair
$(\mathcal{I},\mathcal{E}^{\mathrm{rec}})$, thereby replacing the
coupled islanding and reconnection decisions by a single basis
selection. We then show that the resulting basis-constrained objective
admits a nonincreasing supermodular extension, yielding an equivalent
supermodular optimization problem and a local-search method with a
provable performance guarantee.

\subsection{Matroid Reformulation and Submodular Optimization}
\label{subsec:matroid_reform}
We first construct an augmented labeled multigraph
$\overline{\mathcal{G}}
\triangleq
(\overline{\mathcal{V}},\overline{\mathcal{E}},\overline{\Theta})$
from $\mathcal{G}=(\mathcal{V},\mathcal{E})$.
We introduce nodes $s_0$ and $d_0$, together with
$n_i^{ij}$ and $n_j^{ij}$ for every $(i,j)\in\mathcal{E}$, and define
\begin{align}
\overline{\mathcal{V}}
&\triangleq
\mathcal{V}\cup\{s_0,d_0\}
\cup
\bigl\{n_i^{ij},n_j^{ij}:(i,j)\in\mathcal{E}\bigr\},
\\
\overline{\mathcal{E}}
&\triangleq
\bigl\{(i,n_i^{ij}),(n_i^{ij},n_j^{ij}),
(n_j^{ij},j):(i,j)\in\mathcal{E}\bigr\}
\nonumber\\
&\quad\cup\{(s_0,d_0)\}
\cup
\bigl\{(s_0,i),(d_0,i):i\in\mathcal{V}\bigr\}.
\end{align}
Each pair in $\overline{\mathcal{E}}$ denotes a distinct edge element,
with endpoint map
$\overline{\Theta}((u,v))\triangleq\{u,v\}$.
The nodes $s_0$ and $d_0$ distinguish stable from unstable islands,
while $n_i^{ij}$ and $n_j^{ij}$ are used to determine the
reconnection decisions.

Let
$\mathcal{M}\triangleq\mathcal{M}(\overline{\mathcal{G}})$
be the graphic matroid of $\overline{\mathcal{G}}$, and let $e_0$
denote the edge element $(s_0,d_0)$. Define
$\mathcal{B}^{\ast}\triangleq
\{A\in\mathcal{B}(\mathcal{M}):e_0\in A\}$.
Thus, $\mathcal{B}^{\ast}$ is the family of spanning trees of
$\overline{\mathcal{G}}$ containing $e_0$.

We next construct a labeled multigraph whose basis family will be
shown to be $\mathcal{B}^{\ast}$. Using the contraction operation
introduced in Section~III-D, we first contract $e_0$ in
$\overline{\mathcal{G}}$. The resulting labeled multigraph, denoted by
$\mathcal{G}_{c}\triangleq\overline{\mathcal{G}}/e_0$,
identifies $s_0$ and $d_0$ as a single vertex $r_0$ while preserving
the identities of the remaining edge elements. Consequently, for each
$i\in\mathcal{V}$, the edge elements $(s_0,i)$ and $(d_0,i)$ become
parallel edges with common endpoint pair $\{r_0,i\}$.

A new vertex $w$ is then added to $\mathcal{G}_{c}$, and the edge
element $e_0$ is reintroduced with endpoint pair $\{r_0,w\}$, with no
other edge incident to $w$. Let $\mathcal{G}_{b}$ denote the resulting
labeled multigraph and define
$\mathcal{M}_{c}\triangleq\mathcal{M}(\mathcal{G}_{b})$.
By construction, $\overline{\mathcal{G}}$ and $\mathcal{G}_{b}$ have
the same edge-element set $\overline{\mathcal{E}}$, with the identity
of each edge element preserved.

\begin{lemma}[Graphic representation of $\mathcal{B}^{\ast}$]
\label{lem:graphic_representation}
For the graph $\mathcal{G}_{b}$ constructed above,
\[
    \mathcal{B}(\mathcal{M}_{c})=\mathcal{B}^{\ast}.
\]
Hence, $\mathcal{B}^{\ast}$ is the family of bases of a graphic
matroid on $\overline{\mathcal{E}}$.
\end{lemma}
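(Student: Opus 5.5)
The plan is to pass through the contracted multigraph $\mathcal{G}_c=\overline{\mathcal{G}}/e_0$ and to use Lemma~\ref{lem:graphic_contraction} twice: once on $\overline{\mathcal{G}}$ and once on $\mathcal{G}_b$. Both applications contract the same edge element $e_0$.

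First I will check the hypotheses of Lemma~\ref{lem:graphic_contraction} for $\overline{\mathcal{G}}$. The multigraph is loopless and connected, since every $i\in\mathcal{V}$ is adjacent to $s_0$ and to $d_0$, and each $n_i^{ij}$ is joined to $i$. The edge $e_0=(s_0,d_0)$ is the only edge element with endpoint pair $\{s_0,d_0\}$, so it is not parallel to any other edge. Lemma~\ref{lem:graphic_contraction} then shows that $T\mapsto T\setminus\{e_0\}$ is a bijection from $\mathcal{B}^{\ast}$ onto the spanning trees of $\mathcal{G}_c$. Equivalently,
\[
\mathcal{B}^{\ast}=\{S\cup\{e_0\}:S\text{ a spanning tree of }\mathcal{G}_c\}.
\]

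Second, I will apply the same reasoning to $\mathcal{G}_b$. By construction, $\mathcal{G}_b$ is $\mathcal{G}_c$ with a new pendant vertex $w$ attached to $r_0$ by the edge element $e_0$. It is connected and loopless, and $e_0$ is its only edge incident to $w$, so $e_0$ is not parallel to any other edge. Contracting $e_0$ in $\mathcal{G}_b$ merges $w$ into $r_0$ and leaves every other edge element with its endpoint pair in $\mathcal{G}_c$ unchanged. Hence $\mathcal{G}_b/e_0=\mathcal{G}_c$ as labeled multigraphs, up to renaming the merged vertex. A spanning tree of $\mathcal{G}_b$ must contain the pendant edge $e_0$, since otherwise $w$ is isolated. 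Every basis of $\mathcal{M}_c$ therefore contains $e_0$. Lemma~\ref{lem:graphic_contraction} then gives
\[
\mathcal{B}(\mathcal{M}_c)=\{S\cup\{e_0\}:S\text{ a spanning tree of }\mathcal{G}_c\}.
\]
Comparing the two displays yields $\mathcal{B}(\mathcal{M}_c)=\mathcal{B}^{\ast}$. Since $\mathcal{G}_b$ has edge-element set $\overline{\mathcal{E}}$, with identities preserved, $\mathcal{B}^{\ast}$ is the basis family of a graphic matroid on $\overline{\mathcal{E}}$.

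I expect the main obstacle to be the bookkeeping in the identification $\mathcal{G}_b/e_0=\mathcal{G}_c$. The edge sets must coincide as labeled elements and not merely as endpoint pairs, because $\mathcal{G}_c$ contains the parallel pairs $(s_0,i)$ and $(d_0,i)$. This holds because the contraction defined in Section~\ref{subsec:multigraph_contraction} preserves edge-element identities, and the endpoint map is unchanged for every edge not incident to $w$. As a fallback I can argue directly: a set $A\ni e_0$ is acyclic and spanning in $\mathcal{G}_b$ if and only if $A\setminus\{e_0\}$ is a spanning tree of $\mathcal{G}_c$, because pendant edges lie on no cycle.
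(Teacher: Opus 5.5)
Your proposal is correct and follows the paper's proof essentially step for step. Both arguments apply Lemma~\ref{lem:graphic_contraction} to $\overline{\mathcal{G}}$ and to $\mathcal{G}_b$, use that the pendant edge $e_0$ lies in every spanning tree of $\mathcal{G}_b$, and compare the two results via $\mathcal{G}_b/e_0=\mathcal{G}_c=\overline{\mathcal{G}}/e_0$; your note that this identification holds only up to renaming the merged vertex, with edge-element identities preserved, states a little more carefully a point the paper writes as an equality.
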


\begin{proof}
The graph $\overline{\mathcal{G}}$ is connected by construction, and
hence so are $\mathcal{G}_{c}$ and $\mathcal{G}_{b}$. Moreover,
$e_0$ is not parallel to any other edge in either
$\overline{\mathcal{G}}$ or $\mathcal{G}_{b}$. Since $w$ is
incident only to $e_0$, every spanning tree of $\mathcal{G}_{b}$
contains $e_0$. By construction, the three labeled multigraphs satisfy
    $\mathcal{G}_{b}/e_0
    =
    \mathcal{G}_{c}
    =
    \overline{\mathcal{G}}/e_0$.
Applying Lemma~\ref{lem:graphic_contraction} to
$\mathcal{G}_{b}$ gives
\[
\mathcal{B}(\mathcal{M}_{c})
=
\bigl\{S\cup\{e_0\}:
S\in\mathcal{B}(\mathcal{M}(\mathcal{G}_{c}))\bigr\}.
\]
Applying the same lemma to $\overline{\mathcal{G}}$ gives
$\mathcal{B}^{\ast}
=
\bigl\{S\cup\{e_0\}:
S\in\mathcal{B}(\mathcal{M}(\mathcal{G}_{c}))\bigr\}$.
Therefore,
$\mathcal{B}(\mathcal{M}_{c})=\mathcal{B}^{\ast}$.
\end{proof}

The multigraph representation is necessary because, after contracting
$e_0$, the edge elements $(s_0,i)$ and $(d_0,i)$ become parallel but
remain distinct by Definition~\ref{def:undirected_multigraph}.
Consequently, by Lemma~\ref{lem:graphic_representation}, every
$A\in\mathcal{B}(\mathcal{M}_c)$ can be viewed as a spanning tree of
$\overline{\mathcal{G}}$ containing $e_0$, with the edge elements
$(s_0,i)$ and $(d_0,i)$ remaining distinguishable.
For a valid islanding $\mathcal{I}$, a pair
$(\mathcal{I},\mathcal{E}^{\mathrm{rec}})$ is called admissible if
$\mathcal{E}^{\mathrm{rec}}\subseteq
\mathcal{E}^{\mathrm{elg}}(\mathcal{I})$. Let
\[
\overline{\mathcal{R}}
\triangleq
\bigl\{(\mathcal{I},\mathcal{E}^{\mathrm{rec}}):
\mathcal{I}\text{ is a valid islanding},\
\mathcal{E}^{\mathrm{rec}}\subseteq
\mathcal{E}^{\mathrm{elg}}(\mathcal{I})\bigr\}.
\]
Thus, $\overline{\mathcal{R}}$ is the feasible set of
\eqref{eq:relaxed}.

We next characterize the islanding induced by a basis
$A\in\mathcal{B}(\mathcal{M}_c)$, viewed as the corresponding
spanning tree of $\overline{\mathcal{G}}$. Let
\[
\overline{A}\triangleq
A\cap
\bigl\{(i,n_i^{ij}),(n_i^{ij},n_j^{ij}),(n_j^{ij},j):
(i,j)\in\mathcal{E}\bigr\}.
\]
\begin{figure}[!h]
    \centering
    \includegraphics[
        width=0.6\linewidth,
        trim=0 0 0 0.2mm,
        clip
    ]{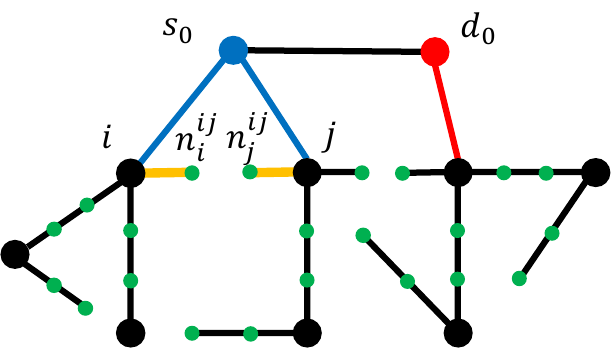}
\caption{Spanning-tree representation of the islanding and reconnection
decisions in Fig.~\ref{fig:islanding-reconnection}. A basis
$A\in\mathcal{B}(\mathcal{M}_{c})$ is shown as a spanning tree of
$\overline{\mathcal{G}}$, with edges not contained in $A$ omitted.
The green nodes are the auxiliary nodes introduced in the construction
of $\overline{\mathcal{G}}$. The blue and red edges connect the stable
and unstable islands to $s_0$ and $d_0$, respectively, and serve as
their reference edges. The two yellow edges are associated with the
original edge $(i,j)$ selected for reconnection.}
    \label{fig:basis}
\end{figure}
Two nodes $i,j\in\mathcal{V}$ belong to the same subset if and only if
they are connected by a path consisting of edges in $\overline{A}$.
This relation partitions $\mathcal{V}$ into $m(A)$ nonempty sets,
denoted by
$\mathcal{V}_1(A),\ldots,\mathcal{V}_{m(A)}(A)$. For each
$r\in\{1,\ldots,m(A)\}$, let
\begin{align*}
    \mathcal{E}_r(A)
&\triangleq
\{(i,j)\in\mathcal{E}:i,j\in\mathcal{V}_r(A)\}, \\
I_r(A)&\triangleq
(\mathcal{V}_r(A),\mathcal{E}_r(A)),
\end{align*}
and
$\mathcal{I}(A)\triangleq
\{I_1(A),\ldots,I_{m(A)}(A)\}$.
The collection $\mathcal{I}(A)$ is uniquely determined by $A$,
independently of the ordering of the island indices.

\begin{lemma}
\label{lemma:valid-islanding}
For every $A\in\mathcal{B}(\mathcal{M}_c)$,
$\mathcal{I}(A)$ is a valid islanding of $\mathcal{G}$.
Moreover, for each $r\in\{1,\ldots,m(A)\}$, exactly one edge in
$\{(s_0,i),(d_0,i):i\in\mathcal{V}_r(A)\}$ belongs to $A$.
\end{lemma}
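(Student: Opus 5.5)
By Lemma~\ref{lem:graphic_representation}, we regard $A$ as a spanning tree of $\overline{\mathcal{G}}$ that contains $e_0$. Conditions (i) and (ii) of Definition~\ref{def:islanding} need no work. The sets $\mathcal{V}_r(A)$ are the equivalence classes of the relation ``connected by a path in $\overline{A}$'', so they are nonempty, pairwise disjoint, and cover $\mathcal{V}$.

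Condition (iii) is the substantive part. Fix $r$ and two nodes $i,j\in\mathcal{V}_r(A)$ joined by a path $P$ in $\overline{A}$. The first step is to note where the auxiliary vertices can sit on $P$. Each $n_i^{ij}$ has degree two in $\overline{\mathcal{G}}$, with neighbors $i$ and $n_j^{ij}$. The only edges of $\overline{A}$ meeting the gadget of $(i,j)\in\mathcal{E}$ are that gadget's three edges. Hence, whenever $P$ enters the gadget of an original edge $(u,v)$ at $u$, it must traverse all three edges $(u,n_u^{uv}),(n_u^{uv},n_v^{uv}),(n_v^{uv},v)$ and leave at $v$. (Since $P$ uses only edges of $\overline{A}$, it never passes through $s_0$ or $d_0$.) Replacing each traversed gadget by the original edge $(u,v)\in\mathcal{E}$ gives a walk in $\mathcal{G}$ from $i$ to $j$. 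Every vertex on this walk is a vertex of $\mathcal{V}$ lying on $P$, so it is $\overline{A}$-connected to $i$ and belongs to $\mathcal{V}_r(A)$. Each edge of the walk therefore lies in $\mathcal{E}_r(A)$, and $I_r(A)$ is connected. This gadget-traversal step is the only place where care is needed; it rests on the degree-two structure of the auxiliary vertices.

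For the second claim, we count the edges of $A$ joining island $\mathcal{V}_r(A)$ to $\{s_0,d_0\}$. \emph{At least one.} $A$ spans $\overline{\mathcal{G}}$, so some path in $A$ runs from a node of $\mathcal{V}_r(A)$ to $s_0$. Take its last vertex in $\mathcal{V}_r(A)$ and the edge leaving it. If that edge were a gadget edge $(k,n_k^{kl})$ of an edge $(k,l)\in\mathcal{E}$, then by the degree-two argument the path would continue through the gadget to $l$ using edges of $\overline{A}$. This would put $l$ in $\mathcal{V}_r(A)$, contradicting the choice of the last vertex. So the edge is some $(s_0,k)$ or $(d_0,k)$ with $k\in\mathcal{V}_r(A)$.

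\emph{At most one.} Suppose two such edges lie in $A$, incident to $k,k'\in\mathcal{V}_r(A)$ with $k$ and $k'$ possibly equal. Join $k$ and $k'$ by a path in $\overline{A}$ (empty if $k=k'$). Connect the two endpoints in $\{s_0,d_0\}$, using $e_0\in A$ if one endpoint is $s_0$ and the other is $d_0$. The result is a cycle in $A$, which contradicts acyclicity. If $k=k'$ and the two edges are $(s_0,k)$ and $(d_0,k)$, this cycle is the triangle formed with $e_0$. Hence exactly one such edge lies in $A$.
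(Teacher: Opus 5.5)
Your proof is correct and follows essentially the same route as the paper: the connectivity of $I_r(A)$ comes from the same gadget-traversal argument, and uniqueness of the reference edge comes from the same cycle argument (closing through $e_0$ when one edge goes to $s_0$ and the other to $d_0$). The only cosmetic difference is the existence step. The paper takes an edge of $A$ leaving the $\overline{A}$-component $C_r$ that contains $\mathcal{V}_r(A)$, whereas you take the edge leaving the last vertex of $\mathcal{V}_r(A)$ on the tree path to $s_0$. These are the same cut argument in two forms.
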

\begin{proof}
The proof is provided in Appendix~\ref{app:proof-lemma6}.
\end{proof}
By Lemma~\ref{lemma:valid-islanding}, each island is connected to
exactly one of $s_0$ and $d_0$ by a unique edge of $A$. The endpoint
of this edge in $\mathcal{V}_r(A)$ is called the
\emph{reference node} of island $r$ and is denoted by $v_r$.
Island $r$ is labeled stable if the edge is incident to $s_0$, and
unstable otherwise.
Let $\mathcal{S}(A)$ denote the set of
indices of the islands labeled stable.

The set of reconnected edges induced by $A$ is defined as
\begin{multline}
\label{eq:er}
\mathcal{E}^{\mathrm{rec}}(A)
\triangleq
\{(i,j)\in\mathcal{E}:\exists\,r,s\in\mathcal{S}(A),\ r\neq s,\\
i\in\mathcal{V}_r(A),\
j\in\mathcal{V}_s(A),\
(n_i^{ij},n_j^{ij})\notin A\}.
\end{multline}
Thus, for an edge $(i,j)$ joining two distinct islands labeled
stable, $(i,j)\in\mathcal{E}^{\mathrm{rec}}(A)$ if and only if
$(n_i^{ij},n_j^{ij})\notin A$.
Fig.~\ref{fig:basis} illustrates the above construction for the example
in Fig.~\ref{fig:islanding-reconnection}. The connectivity induced by
$\overline{A}$ yields the three islands in
Fig.~\ref{fig:islanding-reconnection}(b). Two islands have reference
edges incident to $s_0$ and are therefore labeled stable, while the
remaining island has a reference edge incident to $d_0$ and is labeled
unstable. For the edge $(i,j)$ highlighted in yellow,
$(n_i^{ij},n_j^{ij})\notin A$. Since $i$ and $j$ belong to distinct
stable islands, \eqref{eq:er} gives
$(i,j)\in\mathcal{E}^{\mathrm{rec}}(A)$, corresponding to the
reconnection shown in Fig.~\ref{fig:islanding-reconnection}(c).
The resulting map is
\[
\Xi:\mathcal{B}(\mathcal{M}_c)\rightarrow\overline{\mathcal{R}},
\qquad
\Xi(A)\triangleq
(\mathcal{I}(A),\mathcal{E}^{\mathrm{rec}}(A)).
\]

\begin{lemma}
\label{lemma:pi-well-defined-onto}
The map $\Xi$ is well-defined and surjective. Specifically, every
$A\in\mathcal{B}(\mathcal{M}_c)$ determines a unique admissible pair
$\Xi(A)\in\overline{\mathcal{R}}$, and every
$(\mathcal{I},\mathcal{E}^{\mathrm{rec}})\in\overline{\mathcal{R}}$
is equal to $\Xi(A)$ for some
$A\in\mathcal{B}(\mathcal{M}_c)$.
\end{lemma}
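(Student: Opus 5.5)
We prove the two claims of the lemma separately: well-definedness by reading everything off $A$ and invoking Lemma~\ref{lemma:valid-islanding}, and surjectivity by building an explicit spanning tree of $\overline{\mathcal{G}}$ that contains $e_0$ and checking that it reproduces the given pair.

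\emph{Well-definedness.} Fix $A\in\mathcal{B}(\mathcal{M}_c)$. By Lemma~\ref{lemma:valid-islanding}, $\mathcal{I}(A)$ is a valid islanding, and each island has exactly one reference edge in $A$. Hence the stable label set $\mathcal{S}(A)$ is determined by $A$ alone, independently of how the islands are indexed. The set $\mathcal{E}^{\mathrm{rec}}(A)$ in \eqref{eq:er} is also read off directly from $A$. It remains to check admissibility, $\mathcal{E}^{\mathrm{rec}}(A)\subseteq\mathcal{E}^{\mathrm{elg}}(\mathcal{I}(A))$. Take $(i,j)\in\mathcal{E}^{\mathrm{rec}}(A)$. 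Its endpoints lie in distinct islands, so $(i,j)\notin\bigcup_r\mathcal{E}_r(A)$, and therefore $(i,j)\in\mathcal{E}^{\mathrm{cut}}$. Both islands are labeled stable, so \eqref{eq:er-bar} gives membership in $\mathcal{E}^{\mathrm{elg}}$. One point needs care: $\mathcal{S}$ in \eqref{eq:er-bar} refers to the stable index set of the islanding. We interpret it, as the paper does, as the labeled set $\mathcal{S}(A)$, so that $\overline{\mathcal{R}}$ consists of islandings together with a stable/unstable labeling. I will state this identification explicitly.

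\emph{Surjectivity.} Fix $(\mathcal{I},\mathcal{E}^{\mathrm{rec}})\in\overline{\mathcal{R}}$ with stable set $\mathcal{S}$. We build $A$ as follows.
\begin{enumerate}
\item[(a)] Include $e_0$.
\item[(b)] For each island $r$, choose a spanning tree $T_r$ of the connected graph $I_r$. For every $(i,j)\in T_r$, include the whole path $(i,n_i^{ij}),(n_i^{ij},n_j^{ij}),(n_j^{ij},j)$.
\item[(c)] For each island $r$, pick a reference node $v_r\in\mathcal{V}_r$. Include $(s_0,v_r)$ if $r\in\mathcal{S}$ and $(d_0,v_r)$ otherwise.
\item[(d)] For every edge $(i,j)\in\mathcal{E}$ not in any $T_r$, include exactly two of its three path edges, so that $n_i^{ij}$ and $n_j^{ij}$ are attached to the tree without closing a cycle. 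If $(i,j)\in\mathcal{E}^{\mathrm{rec}}$, include $(i,n_i^{ij})$ and $(n_j^{ij},j)$, omitting the middle edge. Otherwise include $(i,n_i^{ij})$ and $(n_i^{ij},n_j^{ij})$, omitting $(n_j^{ij},j)$.
\end{enumerate}

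\emph{Checking that $A$ is a spanning tree containing $e_0$.} Count edges first. We need $|\overline{\mathcal{V}}|-1=N+2+2M-1$. Steps (b)--(d) contribute $3(N-m)+m+2(M-N+m)+1$, which equals $N+2M+1$. Next, connectivity: each island together with its tree $T_r$ is connected through the paths of (b) and hangs on $s_0$ or $d_0$ via the edge of (c). The nodes $s_0,d_0$ are joined by $e_0$. Each auxiliary node $n_i^{ij},n_j^{ij}$ is reached through the edges chosen in (b) or (d). Connected with the right edge count means acyclic, so $A$ is a spanning tree. By Lemma~\ref{lem:graphic_representation}, $A\in\mathcal{B}(\mathcal{M}_c)$.

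\emph{Checking that $\Xi(A)$ is the given pair.} The paths of $\overline{A}$ between original nodes come only from the $T_r$. An edge handled in (d) never contains the middle edge together with both end edges, so it never links $i$ to $j$. Hence the partition induced by $\overline{A}$ is exactly $\{\mathcal{V}_r\}$, and $\mathcal{I}(A)=\mathcal{I}$. The reference edges recover $\mathcal{S}(A)=\mathcal{S}$. For an edge between two distinct stable islands, the middle edge is absent from $A$ exactly when the edge lies in $\mathcal{E}^{\mathrm{rec}}$, so $\mathcal{E}^{\mathrm{rec}}(A)=\mathcal{E}^{\mathrm{rec}}$.

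\emph{Main obstacle.} The delicate step is showing that $\overline{A}$ induces exactly the prescribed partition. This requires verifying that no combination of partial paths from step (d) creates a new $i$--$j$ connection, and that the acyclicity and edge-count argument is airtight. A secondary concern is pinning down the meaning of $\mathcal{S}$ inside $\overline{\mathcal{R}}$. Without the stable/unstable labeling being part of the pair, the map $\Xi$ would not be well defined.
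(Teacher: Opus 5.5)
Your proof is correct and follows the paper's argument. Well-definedness comes from Lemma~\ref{lemma:valid-islanding}. Surjectivity uses the same explicit spanning tree: full gadgets on the intra-island spanning trees, one reference edge per island, the middle edge omitted exactly for reconnected edges, and an edge-count-plus-connectivity check. Dropping $(n_j^{ij},j)$ rather than $(i,n_i^{ij})$ for the remaining edges is an inessential symmetric variant, and reading $\mathcal{S}$ in $\mathcal{E}^{\mathrm{elg}}$ as the label set $\mathcal{S}(A)$ makes explicit an identification the paper uses only implicitly.
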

\begin{proof}
For any $A\in\mathcal{B}(\mathcal{M}_c)$,
Lemma~\ref{lemma:valid-islanding} implies that
$\mathcal{I}(A)$ is a valid islanding. The same lemma guarantees a
unique reference edge for each island, and hence the stable/unstable
label of every island is uniquely determined by $A$. Consequently,
$\mathcal{E}^{\mathrm{rec}}(A)$ in \eqref{eq:er} is uniquely
determined. Moreover, every
$(i,j)\in\mathcal{E}^{\mathrm{rec}}(A)$ has endpoints in two distinct
stable islands of $\mathcal{I}(A)$. Hence
$\mathcal{E}^{\mathrm{rec}}(A)\subseteq
\mathcal{E}^{\mathrm{elg}}(\mathcal{I}(A))$, and therefore
$\Xi(A)\in\overline{\mathcal{R}}$. Thus, $\Xi$ is well-defined.

To prove surjectivity, consider any
$(\mathcal{I},\mathcal{E}^{\mathrm{rec}})
\in\overline{\mathcal{R}}$, where
$\mathcal{I}=\{I_1,\ldots,I_m\}$, and let
$\mathcal{S}$ denote its stable-island index set.
For each $r\in\{1,\ldots,m\}$, a node
$v_r\in\mathcal{V}_r$ and a spanning tree
$T_r\subseteq\mathcal{E}_r$ of $I_r$ are selected. Let
$\mathcal F\triangleq\bigcup_{r=1}^{m}T_r$.
Since $\mathcal{I}$ is a valid islanding,
$|\mathcal F|=|\mathcal{V}|-m$.
Consider the edge set
\begin{align}
\label{eq:construction}
A\triangleq\;&
\{(s_0,d_0)\}
\cup\{(s_0,v_r):r\in\mathcal{S}\}
\cup\{(d_0,v_r):r\notin\mathcal{S}\}
\nonumber\\
&\cup
\bigcup_{(i,j)\in \mathcal F}
\{(i,n_i^{ij}),(n_i^{ij},n_j^{ij}),(n_j^{ij},j)\}
\nonumber\\
&\cup
\bigcup_{(i,j)\in\mathcal{E}^{\mathrm{rec}}}
\{(i,n_i^{ij}),(n_j^{ij},j)\}
\nonumber\\
&\cup
\bigcup_{(i,j)\in
\mathcal{E}\setminus(\mathcal F\cup\mathcal{E}^{\mathrm{rec}})}
\{(n_i^{ij},n_j^{ij}),(n_j^{ij},j)\}.
\end{align}
For each $r$, the edges associated with $T_r$ connect every node in
$\mathcal{V}_r$ to $v_r$. The node $v_r$ is connected to $s_0$
when $r\in\mathcal{S}$ and to $d_0$ otherwise, while
$(s_0,d_0)\in A$. Hence all nodes in $\mathcal{V}$, together with
$s_0$ and $d_0$, are connected in $A$. For each
$(i,j)\in \mathcal F$, both auxiliary nodes $n_i^{ij}$ and $n_j^{ij}$ are
connected to $i$ and $j$. For
$(i,j)\in\mathcal{E}^{\mathrm{rec}}$, $n_i^{ij}$ is connected to
$i$ and $n_j^{ij}$ to $j$. For every remaining edge $(i,j)$, both
$n_i^{ij}$ and $n_j^{ij}$ are connected to $j$. Thus, $A$ is
connected.

Since
$\mathcal{E}^{\mathrm{rec}}\subseteq
\mathcal{E}^{\mathrm{cut}}(\mathcal{I})$
and $\mathcal F\subseteq\bigcup_{r=1}^{m}\mathcal{E}_r$, the sets
$\mathcal F$ and $\mathcal{E}^{\mathrm{rec}}$ are disjoint. Therefore,
\[
\begin{aligned}
|A|
&=m+1+3|\mathcal F|+2(|\mathcal{E}|-|\mathcal F|)\\
&=|\mathcal{V}|+2|\mathcal{E}|+1
=|\overline{\mathcal{V}}|-1.
\end{aligned}
\]
Hence $A$ is a spanning tree of $\overline{\mathcal{G}}$ containing
$e_0=(s_0,d_0)$. By
Lemma~\ref{lem:graphic_representation},
$A\in\mathcal{B}(\mathcal{M}_c)$.

It remains to verify that the constructed $A$ satisfies
$\Xi(A)=(\mathcal{I},\mathcal{E}^{\mathrm{rec}})$.
For $(i,j)\in \mathcal F$, all three edges
$(i,n_i^{ij})$, $(n_i^{ij},n_j^{ij})$, and $(n_j^{ij},j)$
belong to $A$. If $(i,j)\in\mathcal{E}^{\mathrm{rec}}$, the middle
edge $(n_i^{ij},n_j^{ij})$ is absent, whereas if
$(i,j)\in\mathcal{E}\setminus
(\mathcal F\cup\mathcal{E}^{\mathrm{rec}})$, the edge
$(i,n_i^{ij})$ is absent. Since the auxiliary nodes associated with
distinct edges of $\mathcal{G}$ are distinct, an edge
$(i,j)\in\mathcal{E}$ connects its two endpoints through
$\overline{A}$ if and only if $(i,j)\in \mathcal F$. Hence two nodes of
$\mathcal{V}$ are connected through $\overline{A}$ if and only if
they are connected through $\mathcal F$. It follows that
$\mathcal{I}(A)=\mathcal{I}$.

The reference edge of island $r$ is incident to $s_0$ exactly when
$r\in\mathcal{S}$, so the induced island labels agree with those of
$\mathcal{I}$. Moreover, an edge joining two distinct islands cannot
belong to $\mathcal F$. Hence, for every edge $(i,j)$ joining two distinct
stable islands,
\[
(n_i^{ij},n_j^{ij})\notin A
\quad\Longleftrightarrow\quad
(i,j)\in\mathcal{E}^{\mathrm{rec}}.
\]
By \eqref{eq:er},
$\mathcal{E}^{\mathrm{rec}}(A)=\mathcal{E}^{\mathrm{rec}}$.
Therefore,
$\Xi(A)=(\mathcal{I},\mathcal{E}^{\mathrm{rec}})$,
and $\Xi$ is surjective.
\end{proof}

\textit{Remark}:
The map $\Xi$ need not be injective. For a fixed admissible pair
$(\mathcal{I},\mathcal{E}^{\mathrm{rec}})$, different choices of
the reference nodes and intra-island spanning trees in the
construction above may yield distinct bases $A_1$ and $A_2$ satisfying
$\Xi(A_1)=\Xi(A_2)$. This nonuniqueness does not affect the
reformulation, since the objective on
$\mathcal{B}(\mathcal{M}_c)$ depends on $A$ only through
$\Xi(A)$.

We next express the objective terms in \eqref{eq:relaxed} as functions
of a basis $A\in\mathcal{B}(\mathcal{M}_c)$. For $i\in\mathcal{V}$,
define
\begin{align*}
\chi_i^{s}(A)
&\triangleq
\begin{cases}
1, & \text{if } (s_0,i)\in A,\\
0, & \text{otherwise},
\end{cases} \\
\chi_i^{d}(A)
&\triangleq
\begin{cases}
1, & \text{if } (d_0,i)\in A,\\
0, & \text{otherwise}.
\end{cases}
\end{align*}
For $i\in\mathcal V$ and
$j\in\overline{\mathcal V}\setminus\{s_0,d_0\}$ with $j\neq i$, define
\[
\chi_{ij}(A)
\triangleq
\begin{cases}
1, & \text{if $i$ is a reference node and $i$ and $j$ are }\\
   & \text{connected by a path using edges in $A\setminus P$},\\
0, & \text{otherwise},
\end{cases}
\]
where
$P\triangleq
\{(s_0,d_0)\}\cup
\{(s_0,k),(d_0,k):k\in\mathcal{V}\}$,
and
$\chi_{ii}(A)\triangleq\chi_i^s(A)+\chi_i^d(A)$
for all $i\in\mathcal{V}$.
By Lemma~\ref{lemma:valid-islanding}, every node
$j\in\mathcal{V}$ belongs to exactly one island with a unique
reference node, and hence $\sum_{i=1}^{N}\chi_{ij}(A)=1$ for every
$j\in\mathcal{V}$.
For $(i,j)\in\mathcal{E}$, define
\[
\phi_{ij}(A)
\triangleq
\begin{cases}
1, & (n_i^{ij},n_j^{ij})\notin A,\\
0, & \text{otherwise}.
\end{cases}
\]
For set functions $h_1,\ldots,h_q$, define
\[
\Gamma(h_1,\ldots,h_q)(A)
\triangleq
\left\{
\sum_{p=1}^{q}h_p(A)-(q-1)
\right\}_{+}.
\]
If $h_1(A),\ldots,h_q(A)\in\{0,1\}$, then
$\Gamma(h_1,\ldots,h_q)(A)
=
\prod_{p=1}^{q}h_p(A)$.
Accordingly, define
\begin{align*}
\psi_{v,i}^s(A)
&\triangleq
\Gamma(\chi_v^s,\chi_{vi})(A),\\
\psi_{v,i,j}^s(A)
&\triangleq
\Gamma(\chi_v^s,\chi_{vi},\chi_{vj})(A),\\
\psi_{v,i,j,k}^s(A)
&\triangleq
\Gamma(\chi_v^s,\chi_{vi},\chi_{vj},\chi_{vk})(A).
\end{align*}
Thus, $\psi_{v,i}^s(A)$, $\psi_{v,i,j}^s(A)$, and
$\psi_{v,i,j,k}^s(A)$ indicate, respectively, that the corresponding
one, two, or three nodes belong to the island labeled stable whose
reference node is $v$.
For $(i,j)\in\mathcal{E}$, define
\begin{equation}
\chi_{ij}^{\mathrm{rec}}(A)
\triangleq
\sum_{v=1}^{N}
\sum_{\substack{v^{\prime}=1\\v^{\prime}\neq v}}^{N}
\Gamma
\bigl(
\psi_{v,i}^s,
\psi_{v^{\prime},j}^s,
\phi_{ij}
\bigr)(A).
\label{eq:chirec}
\end{equation}
Since each node belongs to a unique island with a unique reference
node, at most one pair $(v,v^{\prime})$ in \eqref{eq:chirec} can
yield a nonzero term. Hence, for $(i,j)\in\mathcal{E}$,
$\chi_{ij}^{\mathrm{rec}}(A)=1$ if and only if $i$ and $j$ belong
to distinct islands labeled stable and
$(n_i^{ij},n_j^{ij})\notin A$. By \eqref{eq:er}, this is equivalent
to $(i,j)\in\mathcal{E}^{\mathrm{rec}}(A)$.
For $A\in\mathcal{B}(\mathcal{M}_c)$, define
$F_1(A)
\triangleq
F_1\bigl(
\mathcal{I}(A),
\mathcal{E}^{\mathrm{rec}}(A)
\bigr)$.
Using the indicators above,
\begin{align}
F_1(A)
=\;&
\sum_{i=1}^{N}\sum_{v=1}^{N}
\Gamma(\chi_v^d,\chi_{vi})(A)c_i
\nonumber\\
&+\alpha_0
\sum_{(i,j)\in\mathcal{E}}
\sum_{v=1}^{N}
\sum_{\substack{v^{\prime}=1\\v^{\prime}\neq v}}^{N}
\Gamma(\chi_{vi},\chi_{v^{\prime}j})(A)c_{ij}
\nonumber\\
&-\alpha_1
\sum_{(i,j)\in\mathcal{E}}
\chi_{ij}^{\mathrm{rec}}(A)c_{ij}.
\label{eq:F1A}
\end{align}

For each $v\in\mathcal{V}$, let
\begin{align*}
W_v(x,A)
\triangleq\;&
\sum_{i=1}^{N}\psi_{v,i}^s(A)W_i(x_i)\\
&+\frac{1}{2}\sum_{i=1}^{N}\sum_{j\in\mathcal{N}_i}
\psi_{v,i,j}^s(A)W_{ij}(x_i,x_j),
\end{align*}
and
\begin{align*}
&\Phi_v(x,A)
\triangleq\;
\sum_{i=1}^{N}
\psi_{v,i}^s(A)
\frac{\partial W_i(x_i)}{\partial x_i}f_i(x_i)\\
&+\sum_{i=1}^{N}\sum_{j\in\mathcal{N}_i}
\psi_{v,i,j}^s(A)
\bigg(
\frac{\partial W_{ij}(x_i,x_j)}{\partial x_i}f_i(x_i) \\
&
\qquad\qquad\qquad\qquad\qquad+\frac{\partial W_i(x_i)}{\partial x_i}
f_{ij}(x_i,x_j)
\bigg)\\
&+\sum_{i=1}^{N}\sum_{j\in\mathcal{N}_i}
\sum_{k\in\mathcal{N}_i}
\psi_{v,i,j,k}^s(A)
\frac{\partial W_{ij}(x_i,x_j)}{\partial x_i}
f_{ik}(x_i,x_k)\\
&+k_1W_v(x,A).
\end{align*}
If $v=v_r$ is the reference node of an island $r$ labeled stable,
then
$W_v(x,A)=W^r(x_r)$ and
$\Phi_v(x,A)=\Phi_r(x;\mathcal{I}(A))$; otherwise,
$W_v(x,A)=\Phi_v(x,A)=0$.
Therefore, for $A\in\mathcal{B}(\mathcal{M}_c)$,
\[
F_2(A)
\triangleq
F_2(\mathcal{I}(A))
=
\int_{\Omega}
\sum_{v=1}^{N}
\left\{\Phi_v(x,A)\right\}_{+}\,dx.
\]
Similarly, define
\begin{align*}
\psi^{r}_{v,i,k}(A)
&\triangleq
\Gamma
(\chi_v^s,\chi_{vi},\chi_{ik}^{\mathrm{rec}})(A),\\
\psi^{sr}_{v,i,j,k}(A)
&\triangleq
\Gamma
(\chi_v^s,\chi_{vi},\chi_{vj},
\chi_{ik}^{\mathrm{rec}})(A),\\
\psi^{rr}_{v,i,j,k}(A)
&\triangleq
\Gamma
(\chi_v^s,\chi_{vi},
\chi_{ij}^{\mathrm{rec}},
\chi_{ik}^{\mathrm{rec}})(A).
\end{align*}
For each $v\in\mathcal{V}$, let
\begin{align}
&\Psi_v(x,A)
\triangleq\;
\sum_{i=1}^{N}\sum_{k\in\mathcal{N}_i}
\psi^{r}_{v,i,k}(A)
\bigg(
\frac{\partial W_i(x_i)}{\partial x_i}
f_{ik}(x_i,x_k)\nonumber\\
&\qquad\qquad\qquad\qquad\qquad\qquad+
\frac{\partial W_{ik}(x_i,x_k)}{\partial x_i}
f_i(x_i)
\bigg)
\nonumber\\
&+
\sum_{i=1}^{N}\sum_{j\in\mathcal{N}_i}
\sum_{k\in\mathcal{N}_i}
\psi^{sr}_{v,i,j,k}(A)
\bigg(
\frac{\partial W_{ij}(x_i,x_j)}{\partial x_i}
f_{ik}(x_i,x_k)\nonumber\\
&\qquad\qquad\qquad\qquad\qquad+
\frac{\partial W_{ik}(x_i,x_k)}{\partial x_i}
f_{ij}(x_i,x_j)
\bigg)
\nonumber\\
&+
\sum_{i=1}^{N}\sum_{j\in\mathcal{N}_i}
\sum_{k\in\mathcal{N}_i}
\psi^{rr}_{v,i,j,k}(A)
\frac{\partial W_{ij}(x_i,x_j)}{\partial x_i}
f_{ik}(x_i,x_k)
\nonumber\\
&+
\frac{k_2}{2}
\sum_{i=1}^{N}\sum_{k\in\mathcal{N}_i}
\psi^{r}_{v,i,k}(A)W_{ik}(x_i,x_k)
-(k_1-k_2)W_v(x,A).
\label{eq:Psibarp}
\end{align}
If $v=v_r$ is the reference node of an island $r$ labeled stable,
then
$\Psi_v(x,A)
=
\Psi_r\bigl(
x;\mathcal{I}(A),
\mathcal{E}^{\mathrm{rec}}(A)
\bigr)$.
Otherwise, $\Psi_v(x,A)=0$.
Therefore, for
$A\in\mathcal{B}(\mathcal{M}_c)$,
\[
F_3(A)
\triangleq
F_3\bigl(
\mathcal{I}(A),
\mathcal{E}^{\mathrm{rec}}(A)
\bigr)
=
\int_{\widehat{\Omega}}
\sum_{v=1}^{N}
\left\{\Psi_v(x,A)\right\}_{+}\,dx.
\]

Let $F(A)\triangleq F_1(A)+\alpha_2F_2(A)+\alpha_3F_3(A)$ for
$A\in\mathcal{B}(\mathcal{M}_c)$.
By Lemma~\ref{lemma:pi-well-defined-onto}, the map $\Xi$ is
surjective, and the objective value associated with a basis $A$
depends only on $\Xi(A)$. Hence, \eqref{eq:relaxed} is equivalently
reformulated as
\begin{subequations}
\label{eq:matroid-formulation}
\begin{align}
\underset{A}{\min} &\quad F(A) \\
\mbox{s.t.} 
&\quad A\in \mathcal{B}(\mathcal{M}_c)
\end{align}
\end{subequations}
While \eqref{eq:matroid-formulation} has a matroid constraint, the objective function is not submodular. In what follows, we prove that \eqref{eq:matroid-formulation} is equivalent to a submodular optimization problem. We define $F(A) \triangleq F_1(A)+\alpha_2F_2(A)+\alpha_3F_3(A)$.

\begin{theorem} \label{thm:supermodular_reformulation} There exist set functions $\overline{F}_1,\overline{F}_2,\overline{F}_3: 2^{\overline{\mathcal{E}}}\rightarrow\mathbb{R}$ such that, for every $A\in\mathcal{B}(\mathcal{M}_c)$, \[ \overline{F}_1(A)=F_1(A),\; \overline{F}_2(A)=F_2(A),\;\overline{F}_3(A)=F_3(A). \] Moreover, $\overline{F}_1$, $\overline{F}_2$, and $\overline{F}_3$ are nonincreasing and supermodular on $2^{\overline{\mathcal{E}}}$. Consequently, the function \[ \overline{F}(A) \triangleq \overline{F}_1(A) +\alpha_2\overline{F}_2(A) +\alpha_3\overline{F}_3(A) \] is nonincreasing and supermodular on $2^{\overline{\mathcal{E}}}$ and satisfies  $\overline{F}(A) = F(A)$ for every $A\in\mathcal{B}(\mathcal{M}_c)$. \end{theorem}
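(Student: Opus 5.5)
The plan follows the standard technique for graphic-matroid islanding objectives from \cite{cheng2023submodular,liu2018controlled}. First, I will replace each indicator appearing in \eqref{eq:F1A}, $\Phi_v$, and $\Psi_v$ by a rank-based expression that is defined on all of $2^{\overline{\mathcal E}}$ and agrees with the indicator on $\mathcal B(\mathcal M_c)$. Second, I will write $\overline F_1,\overline F_2,\overline F_3$ so that every term is a nonnegative combination of nonincreasing supermodular building blocks. Third, I will restore the sign conventions using the facts recalled in Section~\ref{sec:prelim-submodular}: nonnegative sums preserve supermodularity and monotonicity, and $\max\{h,c\}$ preserves them when $h$ is nonincreasing supermodular.

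The connectivity indicators will be encoded through the rank function of a matroid. For a set $S\subseteq\overline{\mathcal E}$ and nodes $a,b$, the quantity
\begin{equation*}
\rho(S\cup\{(a,b)\})-\rho(S)
\end{equation*}
is defined in the graphic matroid of $\overline{\mathcal G}$ augmented with a virtual edge $(a,b)$. By Lemma~\ref{lemma:unit-increase} this difference lies in $\{0,1\}$, and it equals $1$ exactly when $a$ and $b$ are disconnected in $S$. By submodularity of rank it is nonincreasing and supermodular in $S$.

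I will therefore express each indicator through its complement:
\begin{itemize}
\item "$i$ and $j$ are not connected by a path in $A\setminus P$" becomes a rank increment computed on $A\setminus P$.
\item $1-\chi_v^s$ and $1-\phi_{ij}$ are handled directly: $\phi_{ij}(A)=1-\mathbf 1[(n_i^{ij},n_j^{ij})\in A]$ is already nonincreasing and modular.
\item $1-\chi_v^s(A)$ is nonincreasing modular.
\end{itemize}

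Each product indicator $\Gamma(h_1,\dots,h_q)$ will then be rewritten as
\begin{equation*}
\Gamma(h_1,\ldots,h_q)=1-\min\Bigl\{1,\sum_p(1-h_p)\Bigr\}
\end{equation*}
on bases, and on all sets I define the extension by this formula. A sum of nonincreasing supermodular complements, such as the rank increments and $1-\chi^s_v$, is nonincreasing supermodular. Hence $\min\{1,\cdot\}$ of it is nonincreasing submodular, and the $\Gamma$-extension is nondecreasing supermodular.

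This gives the wrong direction for monotonicity. To obtain nonincreasing functions I will follow \cite{cheng2023submodular} and pass to complements, i.e.\ build the functions on $\overline{\mathcal E}\setminus A$. Alternatively, I will note that on bases $|A|$ is fixed, so I can add a term $-L|A|$ with $L$ large. This preserves supermodularity and equality on $\mathcal B(\mathcal M_c)$, shifted by a known constant that I absorb into the definition, and it makes the function nonincreasing.

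This cardinality-penalty trick is the cleanest route to monotonicity. I would verify that $L$ can be chosen to bound every marginal gain, which is possible because every indicator marginal lies in $[-1,1]$ and all coefficients are finite.

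For $F_1$, positive coefficients $c_i,\alpha_0 c_{ij}$ multiply supermodular extensions. The negative reward term $-\alpha_1\chi^{\mathrm{rec}}_{ij}c_{ij}$ is the delicate part. I would rewrite it using
\begin{equation*}
1-\chi^{\mathrm{rec}}_{ij}=\min\Bigl\{1,(1-\psi^s_{v,i})+(1-\psi^s_{v',j})+(1-\phi_{ij})\Bigr\},
\end{equation*}
which is submodular and therefore problematic. I would then exploit $\alpha_1<\alpha_0$ to combine it with the cut-cost term for the same edge, so that the combined coefficient is nonnegative.

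For $F_2$ and $F_3$, the integrands $\{\Phi_v\}_+$ and $\{\Psi_v\}_+$ are pointwise functions of indicators multiplied by sign-indefinite state-dependent coefficients. Here I will use a pointwise argument. For each fixed $x$, I expand the positive part as a maximum over the finitely many indicator patterns consistent with bases. Each pattern is written as a sum with coefficients that are made nonnegative by adding the cardinality penalty. Supermodularity then survives integration over $\Omega$ and $\widehat\Omega$, since an integral of supermodular functions is supermodular.

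The main obstacle is exactly this step: controlling sign-indefinite coefficients such as $\partial W_{ij}/\partial x_i\, f_{ik}$ and $-(k_1-k_2)W_v$ inside $\{\cdot\}_+$. I will handle it by bounding their magnitude uniformly on the compact sets $\Omega$ and $\widehat\Omega$, and choosing the linear penalty constant $L$ larger than the sum of these bounds. That choice makes every marginal nonpositive and keeps the second differences nonnegative on $2^{\overline{\mathcal E}}$.
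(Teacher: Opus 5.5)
Your proposal has a genuine gap: the building blocks you propose are not supermodular, and neither your truncation step nor the linear penalty can repair this. The rank increment $\delta(S)=\rho_{\mathcal M_{ij}}(S\cup\{e_{ij}\})-\rho_{\mathcal M_{ij}}(S)$ is nonincreasing by submodularity of rank, but it is not supermodular. Along the path $i,n_i^{ij},n_j^{ij},j$, the edge $(n_j^{ij},j)$ has marginal $0$ at $S=\emptyset$ but $-1$ at $S=\{(i,n_i^{ij}),(n_i^{ij},n_j^{ij})\}$; connectivity indicators are neither sub- nor supermodular in general.

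The idea you are missing is that on a basis $A$ the set $A\setminus P$ is independent. The subtracted $\rho(A\setminus P)$ can therefore be replaced by the modular quantity $|\overline{\mathcal V}|-1-|A\cap P|$. The resulting function $|\overline{\mathcal V}|-|A\cap P|-\rho_{\mathcal M_{ij}}((A\setminus P)\cup\{e_{ij}\})$ equals $1-\delta$ on bases. It is a modular term minus a monotone submodular one (Lemma~\ref{lemma:composition}), hence nonincreasing supermodular on all of $2^{\overline{\mathcal E}}$. This is how the paper obtains $\overline\chi_i^s$, $\overline\chi_i^d$, and $\overline\chi_{ij}$.

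These are nonincreasing supermodular extensions of the indicators themselves, so their complements are nondecreasing submodular. That is exactly the polarity your $\Gamma$ formula needs: $1-\min\{1,\sum_p(1-\overline h_p)\}$ coincides with the paper's $\{\sum_p\overline h_p-(q-1)\}_+$, and $\min\{1,\cdot\}$ is then an ordinary truncation of a nondecreasing submodular function. Truncating a nonincreasing supermodular function from above, as you propose, is in general neither submodular nor supermodular; for example, take $\min\{1,\,2-\min\{|S|,2\}\}$ on a three-element ground set.

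The sign issue fails for a similar reason. The term $-L|S|$ is modular, so it shifts all marginals equally and leaves every second difference unchanged. It can enforce monotonicity, but it cannot make $a\,\overline h$ with $a<0$ supermodular, contrary to your final claim. A maximum over several supermodular ``patterns'' is not supermodular either.

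The paper instead uses a second, complementary extension $\widetilde h$, nonincreasing supermodular with $\widetilde h=1-h$ on bases, and sets $\mathcal T(a;h)=a+(-a)\widetilde h$ for $a<0$. For primitives, $\widetilde h$ is built directly, e.g., $\widetilde\phi_{ij}(A)=\rho_{\overline{\mathcal M}}(\overline{\mathcal E})-\rho_{\overline{\mathcal M}}(A\setminus\{(n_i^{ij},n_j^{ij})\})$ and $\widetilde\chi_{ij}=\sum_{v\neq i}\overline\chi_{vj}$. For conjunctions it comes from the telescoping identity $1-\prod_r h_r=\sum_r(\prod_{s<r}h_s)(1-h_r)$ (Lemma~\ref{lem:C-extension}).

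The same telescoping makes your $\alpha_1<\alpha_0$ idea for $F_1$ precise: $\alpha_0-\alpha_1\chi_v^s\chi_{v'}^s\phi_{ij}=(\alpha_0-\alpha_1)+\alpha_1[(1-\chi_v^s)+\chi_v^s(1-\chi_{v'}^s)+\chi_v^s\chi_{v'}^s(1-\phi_{ij})]$. It also makes $\overline\Phi_v(x,\cdot)$ and $\overline\Psi_v(x,\cdot)$ nonincreasing supermodular for each fixed $x$, so $\{\cdot\}_+$ is merely a max with the constant $0$.

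Your penalty idea could be rescued only with a convex term. Every basis has $|\overline{\mathcal V}|-1$ elements, and adding $\kappa\binom{|S|}{2}-L|S|$ raises every second difference by $\kappa$. With $\kappa$ and then $L$ large enough, this makes any bounded extension nonincreasing supermodular. That is a different, structure-free existence argument, though, not the one you wrote.
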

\begin{proof} See Appendix~\ref{app:supermodular_reformulation}. 
\end{proof}
By Theorem~\ref{thm:supermodular_reformulation},
\eqref{eq:matroid-formulation} is equivalently reformulated as
\begin{subequations}
\label{eq:supermodular_problem}
\begin{align}
\underset{A}{\min} &\quad \overline{F}(A) \\
\mbox{s.t.} 
&\quad A\in \mathcal{B}(\mathcal{M}_c)
\end{align}
\end{subequations}
Although Theorem~\ref{thm:supermodular_reformulation} establishes an equivalent
nonincreasing supermodular formulation of \eqref{eq:relaxed},
evaluating $\overline{F}_2$ and $\overline{F}_3$ requires integration
over continuous state spaces and is generally impractical. We therefore
introduce a sample-based approximation.

Let
$\Omega_{\mathrm{sample}}
\subseteq \Omega$
and
$\widehat{\Omega}_{\mathrm{sample}}
\subseteq \widehat{\Omega}$
denote finite sample sets. Initially,
$\Omega_{\mathrm{sample}}$
and
$\widehat{\Omega}_{\mathrm{sample}}$
contain $P_1$ and $P_2$ points, respectively, drawn independently
and uniformly from the corresponding sets. For
$A\in\mathcal{B}(\mathcal{M}_c)$, define
\begin{equation}
\widehat{F}(A)
\triangleq
F_1(A)
+\alpha_2\widehat{F}_2(A)
+\alpha_3\widehat{F}_3(A),
\label{eq:sampled-objective}
\end{equation}
where
\begin{align*}
\widehat{F}_2(A)
&\triangleq
\frac{1}{|\Omega_{\mathrm{sample}}|}
\sum_{x\in\Omega_{\mathrm{sample}}}
\sum_{v=1}^{N}
\left\{\Phi_v(x,A)\right\}_{+},\\
\widehat{F}_3(A)
&\triangleq
\frac{1}{|\widehat{\Omega}_{\mathrm{sample}}|}
\sum_{x\in\widehat{\Omega}_{\mathrm{sample}}}
\sum_{v=1}^{N}
\left\{\Psi_v(x,A)\right\}_{+}.
\end{align*}
The constant scaling factors associated with the volumes of
$\Omega$ and $\widehat{\Omega}$ are absorbed into the penalty weights
$\alpha_2$ and $\alpha_3$.

For fixed sample sets, the pointwise constructions used in the proof
of Theorem~~\ref{thm:supermodular_reformulation} imply that
$\widehat{F}_2$ and $\widehat{F}_3$ admit nonincreasing supermodular
extensions to $2^{\overline{\mathcal{E}}}$. Consequently,
$\widehat{F}$ also admits a nonincreasing supermodular extension to
$2^{\overline{\mathcal{E}}}$. We therefore consider the sampled
optimization problem
\begin{equation}
\label{eq:sampled-matroid}
\min_{A\in\mathcal{B}(\mathcal{M}_c)}
\widehat{F}(A).
\end{equation}

For any $A\in\mathcal{B}(\mathcal{M}_c)$, define the set of feasible
single-edge exchanges by
\begin{equation*}
\mathcal{Q}(A)
\triangleq
\left\{
(e,f)\in
A\times(\overline{\mathcal{E}}\setminus A):
A\setminus\{e\}\cup\{f\}
\in\mathcal{B}(\mathcal{M}_c)
\right\}.
\end{equation*}
Moreover, define the counterexample sets
\begin{align}
\mathcal{D}_2(A)
&\triangleq
\left\{
x\in\Omega:
\sum_{v=1}^{N}
\left\{\Phi_v(x,A)\right\}_{+}>0
\right\},
\label{eq:counterexample-F2}\\
\mathcal{D}_3(A)
&\triangleq
\left\{
x\in\widehat{\Omega}:
\sum_{v=1}^{N}
\left\{\Psi_v(x,A)\right\}_{+}>0
\right\}.
\label{eq:counterexample-F3}
\end{align}
\begin{algorithm}[t]
\caption{Sample-Based Local Search for Controlled Islanding and Reconnection}
\label{algo:islanding}
\begin{algorithmic}[1]

\State Initialize $\Omega_{\mathrm{sample}}$ with $P_1$ samples
drawn independently and uniformly from $\Omega$
\State Initialize $\widehat{\Omega}_{\mathrm{sample}}$ with $P_2$
samples drawn independently and uniformly from $\widehat{\Omega}$

\While{\textbf{true}}

    \State Initialize $A\in\mathcal{B}(\mathcal{M}_c)$

    \While{\textbf{true}}

        \State Select
        $(e^{\ast},f^{\ast})
        \in
        \displaystyle
        \arg\min_{(e,f)\in\mathcal{Q}(A)}
        \widehat{F}\!\left(
        A\setminus\{e\}\cup\{f\}
        \right)$

        \State
        $A^{\prime}
        \leftarrow
        A\setminus\{e^{\ast}\}\cup\{f^{\ast}\}$

        \If{$\widehat{F}(A)>
        (1+\widehat{\epsilon})\widehat{F}(A^{\prime})$}
            \State $A\leftarrow A^{\prime}$
        \Else
            \State \textbf{break}
        \EndIf

    \EndWhile

    \If{$\mathcal{D}_2(A)=\emptyset$
    \textbf{and}
    $\mathcal{D}_3(A)=\emptyset$}
        \State \Return{$A$}
    \EndIf

    \If{$\mathcal{D}_2(A)\neq\emptyset$}
        \State Select
        $x^{\mathrm{island}}\in\mathcal{D}_2(A)$
        \State
        $\Omega_{\mathrm{sample}}
        \leftarrow
        \Omega_{\mathrm{sample}}
        \cup\{x^{\mathrm{island}}\}$
    \EndIf

    \If{$\mathcal{D}_3(A)\neq\emptyset$}
        \State Select
        $x^{\mathrm{rec}}\in\mathcal{D}_3(A)$
        \State
        $\widehat{\Omega}_{\mathrm{sample}}
        \leftarrow
        \widehat{\Omega}_{\mathrm{sample}}
        \cup\{x^{\mathrm{rec}}\}$
    \EndIf

\EndWhile

\end{algorithmic}
\end{algorithm}

Algorithm~\ref{algo:islanding} alternates between a sample-based local
search and counterexample refinement. For fixed sample sets
$\Omega_{\mathrm{sample}}$ and
$\widehat{\Omega}_{\mathrm{sample}}$, the inner loop searches over
feasible single-edge exchanges. At each iteration, one edge in the
current basis is removed and one edge outside the basis is added,
while preserving membership in $\mathcal{B}(\mathcal{M}_c)$. Among
all such exchanges, the algorithm selects one that minimizes the
sampled objective $\widehat{F}$. The current basis is updated whenever
the resulting objective value satisfies the prescribed multiplicative
improvement condition.
After the local search terminates, the algorithm checks whether the
current basis admits counterexamples associated with the stability
penalties. If $\mathcal{D}_2(A)\neq\emptyset$, a point in
$\mathcal{D}_2(A)$ is added to $\Omega_{\mathrm{sample}}$; similarly,
if $\mathcal{D}_3(A)\neq\emptyset$, a point in $\mathcal{D}_3(A)$ is
added to $\widehat{\Omega}_{\mathrm{sample}}$. The local-search
procedure is then repeated using the augmented sample sets. The
algorithm terminates when both counterexample sets are empty.
\subsection{Optimization Analysis}
\label{section:submodular-optimization}
We next analyze the performance guarantee and evaluation
complexity of the inner loop of Algorithm~\ref{algo:islanding}. The following analysis follows the local-search arguments
in~\cite{sahabandu2022hybrid,niu2023hybrid}, applied here to the sampled objective
$\widehat{F}$ and the matroid $\mathcal{M}_c$.
For fixed sample sets, let
$\widetilde{F}:2^{\overline{\mathcal{E}}}\rightarrow\mathbb{R}$
denote a nonincreasing supermodular extension of
$\widehat{F}$ established above, so that
\[
    \widetilde{F}(A)=\widehat{F}(A),
    \qquad A\in\mathcal{B}(\mathcal{M}_c).
\]
Since $\widetilde{F}$ is nonincreasing,
$\widetilde{F}(\emptyset)-\widetilde{F}(S)\geq0$ for all
$S\subseteq\overline{\mathcal{E}}$.

\begin{theorem}
\label{thm:local-search}
Fix the sample sets $\Omega_{\mathrm{sample}}$ and
$\widehat{\Omega}_{\mathrm{sample}}$, and let
    $A^\ast\in
    \arg\min_{B\in\mathcal{B}(\mathcal{M}_c)}
    \widehat{F}(B)$.
As $\widehat{\epsilon}\rightarrow0$, the inner
loop of Algorithm~\ref{algo:islanding} yields a basis
$A\in\mathcal{B}(\mathcal{M}_c)$ satisfying
\begin{equation}
    \widetilde{F}(\emptyset)-\widehat{F}(A)
    \geq
    \frac{1}{2}
    \left(
        \widetilde{F}(\emptyset)-\widehat{F}(A^\ast)
    \right).
    \label{eq:local-search-bound}
\end{equation}
\end{theorem}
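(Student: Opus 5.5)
We plan to work with the set function $g(S)\triangleq\widetilde{F}(\emptyset)-\widetilde{F}(S)$ on $2^{\overline{\mathcal{E}}}$. Since $\widetilde{F}$ is nonincreasing and supermodular, $g$ is nonnegative, nondecreasing, submodular, and normalized ($g(\emptyset)=0$). On bases, $g(A)=\widetilde{F}(\emptyset)-\widehat{F}(A)$. The inner loop of Algorithm~\ref{algo:islanding} is exactly single-swap local search over $\mathcal{B}(\mathcal{M}_c)$. Under the acceptance rule, it terminates at an approximate local optimum $A$ such that $\widehat{F}(A)\le(1+\widehat{\epsilon})\widehat{F}(A\setminus\{e\}\cup\{f\})$ for every $(e,f)\in\mathcal{Q}(A)$. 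Letting $\widehat{\epsilon}\to0$, we obtain $g(A)\ge g(A\setminus\{e\}\cup\{f\})$ for all feasible swaps. The claim \eqref{eq:local-search-bound} then becomes $g(A)\ge\frac12 g(A^\ast)$, which is the classical Fisher--Nemhauser--Wolsey $1/2$ bound for local search under a matroid constraint.

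\textbf{Key steps.} First, we apply Lemma~\ref{lemma:basis-exchange-bijection} to $A,A^\ast\in\mathcal{B}(\mathcal{M}_c)$ to get a bijection $\pi:A\setminus A^\ast\to A^\ast\setminus A$ with $A\setminus\{e\}\cup\{\pi(e)\}\in\mathcal{B}(\mathcal{M}_c)$. Thus $(e,\pi(e))\in\mathcal{Q}(A)$ for every $e\in A\setminus A^\ast$.

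Second, local optimality gives $0\ge g(A-e+\pi(e))-g(A)$ for each such $e$. We split the right-hand side as
\[
g(A-e+\pi(e))-g(A)\ge\bigl[g(A+\pi(e))-g(A)\bigr]-\bigl[g(A)-g(A-e)\bigr],
\]
which follows from submodularity: the marginal of $\pi(e)$ onto $A-e$ is at least its marginal onto $A$. Summing over $e\in A\setminus A^\ast$ yields
\[
\sum_{f\in A^\ast\setminus A}\bigl[g(A+f)-g(A)\bigr]\le\sum_{e\in A\setminus A^\ast}\bigl[g(A)-g(A-e)\bigr].
\]

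Third, we bound each side. By submodularity and monotonicity, the left-hand side is at least $g(A\cup A^\ast)-g(A)\ge g(A^\ast)-g(A)$. For the right-hand side, we order $A\setminus A^\ast=\{e_1,\dots,e_k\}$ and use submodularity to compare each removal marginal against a telescoping chain. This gives
\[
\sum_{\ell}\bigl[g(A)-g(A-e_\ell)\bigr]\le g(A)-g(A\cap A^\ast)\le g(A),
\]
where the last inequality uses $g\ge0$. Combining, $g(A^\ast)-g(A)\le g(A)$, i.e., $g(A)\ge\frac12 g(A^\ast)$, which is \eqref{eq:local-search-bound}.

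\textbf{Main obstacle.} The delicate step is the passage $\widehat{\epsilon}\to0$, which requires care because the multiplicative rule compares $\widehat{F}$ values that may have either sign. The exact local-optimality inequality holds only in the limit. We first derive the chain of inequalities with an additive slack of order $\widehat{\epsilon}\,|\widehat{F}(A)|$ per swap, giving at most $k\,\widehat{\epsilon}\,\max_B|\widehat{F}(B)|$ in total with $k\le|A|$. This slack vanishes as $\widehat{\epsilon}\to0$. We then note that the inner loop terminates for any $\widehat{\epsilon}>0$, because $\mathcal{B}(\mathcal{M}_c)$ is finite and each accepted step strictly decreases $\widehat{F}$ whenever $\widehat{F}>0$. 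The case $\widehat{F}(A)\le0$ needs the rule interpreted so that strict improvement still holds. Finally, all comparisons must use the extension $\widetilde{F}$, which is legitimate because every swapped set $A-e+\pi(e)$ is itself a basis. The non-basis sets $A+f$ and $A-e$ enter only through the submodularity inequalities for $\widetilde{F}$.
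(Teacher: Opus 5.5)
Your proposal is correct and follows the paper's route. You pass to $g=\widetilde{F}(\emptyset)-\widetilde{F}(\cdot)$ (the paper's $\sigma$), use single-swap local optimality in the limit $\widehat{\epsilon}\to0$, and invoke the Fisher--Nemhauser--Wolsey $1/2$ bound, which you prove inline via the exchange bijection of Lemma~\ref{lemma:basis-exchange-bijection} instead of citing it.

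Two simplifications are available. Your sign concern is moot: $\mathcal{E}^{\mathrm{rec}}\subseteq\mathcal{E}^{\mathrm{cut}}(\mathcal{I})$ and $\alpha_1<\alpha_0$ give $\widehat{F}\geq0$ on $\mathcal{B}(\mathcal{M}_c)$. Also, because there are finitely many bases, any sufficiently small $\widehat{\epsilon}$ already forces exact local optimality at termination, so you can drop the slack bookkeeping.
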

\begin{proof}
For fixed sample sets, define
$\sigma(S)\triangleq \widetilde{F}(\emptyset)-\widetilde{F}(S)$.
Then $\sigma$ is nonnegative, monotone nondecreasing, and submodular.
As $\widehat{\epsilon}\to0$, termination of the inner loop implies
that $A$ is locally optimal with respect to all feasible single-edge
exchanges. Hence, by the standard $1/2$ local-search bound for
monotone submodular maximization over a matroid basis
\cite{fisher1978analysis},
$\sigma(A)\geq \frac{1}{2}
\max_{B\in\mathcal{B}(\mathcal{M}_c)} \sigma(B)
=\frac{1}{2}\sigma(A^\ast)$.
Since $\widetilde{F}(B)=\widehat{F}(B)$ for all
$B\in\mathcal{B}(\mathcal{M}_c)$, substituting the definition of
$\sigma$ yields the result.
\end{proof}
We next analyze the computational complexity of the inner
loop of Algorithm~\ref{algo:islanding}. For fixed
$\widehat{\epsilon}>0$, let
$A_0,A_1,\ldots,A_K$ denote the bases obtained after the
accepted updates, where $K$ is the total number of accepted
updates. By the update rule,
\[
    \widehat{F}(A_k)
    <
    \frac{1}{1+\widehat{\epsilon}}
    \widehat{F}(A_{k-1}),
    \qquad k=1,\ldots,K.
\]
Hence,
   $ \widehat{F}(A_K)
    <
    \frac{\widehat{F}(A_0)}
         {(1+\widehat{\epsilon})^T}$.
If $\widehat{F}(A_K)>0$, then
\begin{equation}
    K
    <
    \frac{
        \log\!\left(
            \widehat{F}(A_0)/\widehat{F}(A_K)
        \right)}
        {\log(1+\widehat{\epsilon})}.
    \label{eq:number-updates}
\end{equation}
For any $A\in\mathcal{B}(\mathcal{M}_c)$, the number of
candidate exchanges satisfies
    $|\mathcal{Q}(A)|
    \leq
    |A|\bigl(
        |\overline{\mathcal{E}}|-|A|
    \bigr)
    =
    \mathcal{O}
    \bigl(
        |\overline{\mathcal{E}}|^2
    \bigr)$.
By construction,
$|\overline{\mathcal{E}}|=3M+2N+1$, and since the original
graph is connected, $M\geq N-1$. Thus
$|\overline{\mathcal{E}}|=\mathcal{O}(M)$, and each search
evaluates at most $\mathcal{O}(M^2)$ candidate exchanges.
Since $K$ accepted updates are followed by one final search,
the total number of sampled-objective evaluations is
\begin{equation}
    \mathcal{O}\!\left(
        M^2
        \left[
            1+
            \frac{
                \log\!\left(
                    \widehat{F}(A_0)/\widehat{F}(A_K)
                \right)}
                {\log(1+\widehat{\epsilon})}
        \right]
    \right),
    \label{eq:evaluation-complexity}
\end{equation}
provided that $\widehat{F}(A_K)>0$.

This bound counts sampled-objective evaluations for one
execution of the inner loop with fixed sample sets and does
not include the counterexample-refinement iterations of the
outer loop.

\section{Numerical Studies}
\label{section:simulation}

In this section, we evaluate the proposed islanding and
reconnection framework on networked linear systems. We first
introduce the linear network model, fault setting, and the
eigenvalue-based counterexample search used in the numerical
implementation. We then present numerical results illustrating
the islanding and reconnection procedure, examining its
sensitivity to fault configurations, and comparing its solution
quality and computational performance with a MILP benchmark.

\begin{figure*}[!t]
\centering
\begin{subfigure}[t]{0.25\textwidth}
  \centering
  \includegraphics[width=\linewidth]{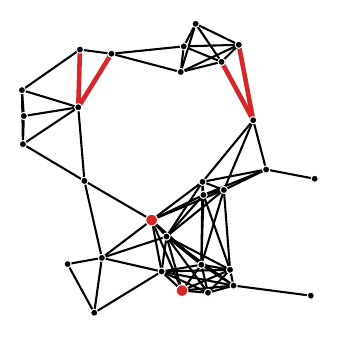}
  \caption{Original system}\label{fig:islanding_original}
\end{subfigure}
\hfill
\begin{subfigure}[t]{0.25\textwidth}
  \centering
  \includegraphics[width=\linewidth]{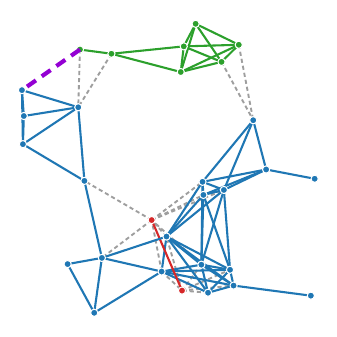}
  \caption{Post-islanding system}\label{fig:islanding_post}
\end{subfigure}
\hfill
\begin{subfigure}[t]{0.25\textwidth}
  \centering
  \includegraphics[width=\linewidth]{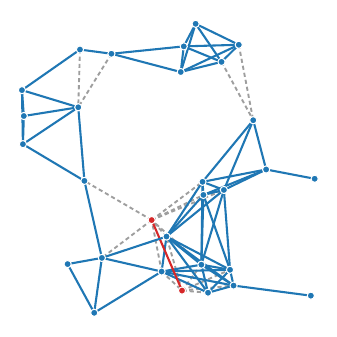}
  \caption{Reconnected system}\label{fig:islanding_reconnected}
\end{subfigure}
\caption{Islanding and reconnection on a $30$-node network with
$82$ edges. (a) Original network, with faulty nodes and edges
shown in red. (b) Post-islanding network returned by
Algorithm~\ref{algo:islanding}, consisting of two stable islands
(blue and green) and one unstable island (red). Removed edges are
shown as dashed gray lines, and the dashed violet edge is selected
from $\mathcal{E}^{\mathrm{elg}}(\mathcal{I})$ for reconnection.
(c) Reconnected network, where the selected edge merges the two
stable islands into a single stable reconnected component while
the unstable island remains isolated.}
\label{fig:islanding_example}
\end{figure*}

\subsection{Linear Network Model and Counterexample Search}
\label{subsec:linear-model}

We consider a connected undirected network
$\mathcal{G}=(\mathcal{V},\mathcal{E})$, where each node
$i\in\mathcal{V}$ has state $x_i\in\mathbb{R}^2$ and dynamics
\begin{equation}
    \dot{x}_i
    =
    -w_i x_i
    -
    \sum_{j\in\mathcal{N}_i}w_{ij}(x_i-x_j),
    \label{eq:simulation-node-dynamics}
\end{equation}
where $w_i\in\mathbb{R}$ is the local coefficient of node $i$
and $w_{ij}=w_{ji}\in\mathbb{R}$ is the coupling coefficient
associated with edge $(i,j)$. Let
$D_w\triangleq\operatorname{diag}(w_1,\ldots,w_N)$ and let
$L(\mathcal{E})$ denote the corresponding weighted Laplacian.
By stacking the node states, the overall network dynamics can
be written as
\begin{equation}
    \dot{x}
    =
    -\left[
        \left(D_w+L(\mathcal{E})\right)\otimes \mathbf I_2
    \right]x,
    \label{eq:simulation-network-dynamics}
\end{equation}
where $\mathbf I_2$ denotes the $2\times2$ identity matrix.
The nominal network is constructed using positive node and edge
coefficients. Faults are introduced by changing selected
coefficients from their nominal positive values to negative
values. A node fault changes the corresponding local coefficient
$w_i$, whereas an edge fault changes the corresponding coupling
coefficient $w_{ij}$. Such faults may cause
$D_w+L(\mathcal{E})$ to lose positive definiteness and thereby
destabilize the origin.

For the numerical studies, we choose
$W_i(x_i)=\frac{1}{2}x_i^T x_i$ and
$W_{ij}(x_i,x_j)=0$, which gives $\eta=0$. We set
$\beta=\gamma=1$, so that
$\Omega=\widehat{\Omega}
=\{x\in\mathbb{R}^{2N}:\|x\|_2\leq\sqrt{2}\}$.

We next describe the counterexample search used in the numerical
implementation of Algorithm~\ref{algo:islanding}. For the linear
systems considered here, stability violations can be detected
directly through eigenvalue analysis. Consider a post-islanding
island $I_r=(\mathcal{V}_r,\mathcal{E}_r)$ and define
$H_r\triangleq D_{w,r}+L(\mathcal{E}_r)$, where $D_{w,r}$
denotes the restriction of $D_w$ to the nodes in
$\mathcal{V}_r$. The post-islanding dynamics are
\begin{equation}
    \dot{x}_r
    =
    -\left(H_r\otimes \mathbf I_2\right)x_r.
    \label{eq:simulation-island-dynamics}
\end{equation}
Since $H_r$ is symmetric, the island is asymptotically stable if
and only if $\lambda_{\min}(H_r)>0$. In the numerical
counterexample search, we use a tolerance $\tau=10^{-9}$ and
declare a stability violation whenever an eigenvalue
$\lambda_k$ of $H_r$ satisfies $\lambda_k\leq\tau$.

For each such eigenvalue, let $u_k$ be a corresponding unit
eigenvector and let $e_1=[1,0]^T$. We embed
$u_k\otimes e_1$ into the full network state by setting the states
of all nodes outside $\mathcal{V}_r$ to zero, scale it to the
boundary of $\Omega$, and add it to
$\Omega_{\mathrm{sample}}$.

Counterexamples for the reconnection stage are constructed
analogously. For a reconnected component
$C_s=(\mathcal{V}_s^c,\mathcal{E}_s^c)$, define
$H_s^c\triangleq D_{w,s}+L(\mathcal{E}_s^c)$, where $D_{w,s}$
is the restriction of $D_w$ to $\mathcal{V}_s^c$. The component
is asymptotically stable if and only if
$\lambda_{\min}(H_s^c)>0$. Eigenpairs satisfying
$\lambda_k\leq\tau$ are treated using the same lifting
construction as above, with the resulting directions scaled to
the boundary of $\widehat{\Omega}$ and added to
$\widehat{\Omega}_{\mathrm{sample}}$.

The eigenvalue tests above are used to determine the actual
asymptotic stability of the resulting islands and reconnected
components. In contrast, the stability-penalty terms $F_2$ and
$F_3$ are constructed from the sufficient Lyapunov conditions
derived in Section~\ref{section:problem-formulation}.
Consequently, $F_2$ or $F_3$ may remain positive even when all
resulting components are asymptotically stable. Stability of the
final solution is therefore verified using the eigenvalue tests
described above rather than by requiring the stability penalties
to vanish.

\subsection{Numerical Results}

The network instances are generated as random geometric graphs.
For each network size, node locations are sampled uniformly from
a square region in $\mathbb{R}^2$, and an edge is added between
two nodes if their Euclidean distance is no greater than $2$.
The sampling is repeated until the resulting graph is connected.
The first two studies use the same realization of a $30$-node
network generated over $[0,7.75]^2$, resulting in $82$ edges.

Before faults are introduced, the node and edge coefficients are
independently sampled as $w_i\sim\operatorname{Unif}[1,5]$ and
$w_{ij}\sim\operatorname{Unif}[1,10]$, where $\operatorname{Unif}$ denotes the uniform distribution.. For a faulty node $i$, we set
$w_i=-\sum_{j\in\mathcal{N}_i}w_{ij}^{\mathrm{nom}}-1$, where
$w_{ij}^{\mathrm{nom}}$ denotes the nominal edge coefficient,
while each faulty edge is assigned $w_{ij}=-20$.

Unless otherwise specified, we set
$\alpha_0=1$, $\alpha_1=0.7$, $\alpha_2=\alpha_3=5$,
$c_i=2$, $c_{ij}=1$, $k_1=0.3$, $k_2=0.1$, and
$\hat{\epsilon}=0.01$. 
Both $\Omega_{\mathrm{sample}}$ and
$\widehat{\Omega}_{\mathrm{sample}}$ are initialized with $20$
samples drawn uniformly from the corresponding domains.
All computations were performed on a MacBook Pro
equipped with an Apple M3 Pro chip and 36 GB of unified memory.

\subsubsection{Illustrative Islanding and Reconnection Example}

We first illustrate the proposed islanding and reconnection
procedure on a $30$-node network with $82$ edges. The original
network contains faulty nodes and edges, as shown in
Fig.~\ref{fig:islanding_original}, and is unstable.

Algorithm~\ref{algo:islanding} partitions the network into two
stable islands and one unstable island, as shown in
Fig.~\ref{fig:islanding_post}. The stable islands, shown in blue
and green, have minimum eigenvalues $3.1083$ and $1.9670$,
respectively, while the unstable island is shown in red. Removed
edges are shown as dashed lines, where gray denotes edges that
remain disconnected and violet denotes the edge selected for
reconnection. As shown in
Fig.~\ref{fig:islanding_reconnected}, reconnecting the selected
edge merges the two stable islands into a single reconnected
component while the unstable island remains isolated. The minimum
eigenvalue of the reconnected component is $2.2520>0$, verifying
its asymptotic stability.

\subsubsection{Sensitivity to Fault Configurations}

We next examine the sensitivity of the proposed method to the
locations of the faults. We use the same $30$-node, $82$-edge
network as in the preceding example and consider ten randomly
generated fault configurations. Each configuration contains two
faulty nodes and four faulty edges. The network topology, nominal node and
edge weights, fault-generation rule, and algorithmic parameters
are fixed, while only the fault locations are varied.

Table~\ref{tab:fault_sensitivity} reports the number of inner
optimization solves required for convergence, the total
computation time, and the resulting value of $F_1$. Here, the
number of inner solves denotes the total number of times the
sampled inner optimization problem is solved throughout the
counterexample-guided procedure until convergence. All ten runs
terminate by convergence before reaching the maximum of $20$
outer iterations, and every stable island and reconnected
component in the final solutions satisfies the eigenvalue-based
stability test described in Section~\ref{subsec:linear-model}.
Hence, the reported values of $F_1$ correspond to final solutions
satisfying the numerical stability criterion.

\begin{table}[t]
\centering
\caption{Sensitivity to fault locations on the fixed $30$-node network.}
\label{tab:fault_sensitivity}
\begin{tabular}{@{}ccrr@{}}
\toprule
Config. & Inner solves & Time (s) & $F_1$ \\
\midrule
1  & 8  & 1111.9 & 27.1 \\
2  & 9  &  862.4 & 32.5 \\
3  & 4  &  527.4 & 19.0 \\
4  & 3  &  359.5 & 23.4 \\
5  & 4  &  536.5 & 22.5 \\
6  & 6  &  954.5 & 20.2 \\
7  & 3  &  379.3 & 31.2 \\
8  & 8  &  897.7 & 27.8 \\
9  & 12 & 1457.9 & 22.7 \\
10 & 3  &  342.2 & 35.6 \\
\bottomrule
\end{tabular}
\end{table}

The total computation time ranges from $342.2$~s to
$1457.9$~s, with an average of $742.9$~s, while the number of
inner solves ranges from $3$ to $12$. Longer computation times
are generally associated with a larger number of inner
optimization solves required before convergence, indicating that
the computational effort varies with the fault placement. The
resulting values of $F_1$ range from $19.0$ to $35.6$, indicating
that the fault locations also affect the resulting islanding and
reconnection cost.

\subsubsection{Comparison with the MILP Benchmark}

\begin{table}[t]
\centering
\caption{Comparison of the proposed method with the MILP benchmark.}
\label{tab:milp-comparison}
\begin{tabular}{@{}rrlcrr@{}}
\toprule
$N$ & $M$ & Method & Inner solves & Time (s) & $F_1$ \\
\midrule
10 & 19  & Proposed & 2 &    2.9 &  8.3 \\
   &     & MILP     & 2 &    0.4 &  8.3 \\
\addlinespace
20 & 42  & Proposed & 7 &   88.1 & 21.0 \\
   &     & MILP     & 7 &    5.4 & 17.0 \\
\addlinespace
30 & 82  & Proposed & 4 &  433.2 & 22.3 \\
   &     & MILP     & 4 &   59.6 & 22.3 \\
\addlinespace
40 & 109 & Proposed & 7 & 1490.9 & 37.0 \\
   &     & MILP     & 5 &  239.1 & 25.3 \\
\bottomrule
\end{tabular}
\end{table}

We next compare the proposed method with a MILP benchmark on
four network instances of increasing size. The benchmark replaces
only the inner optimization solver in
Algorithm~\ref{algo:islanding} and is implemented using
Gurobi~13.0.2 \cite{gurobi}. For fixed sample sets, the MILP exactly represents
the corresponding sampled inner optimization problem and is
solved to the prescribed optimality tolerance. The counterexample
search, sample augmentation, and stopping criterion remain
unchanged.

We consider networks with $N\in\{10,20,30,40\}$ and
$M\in\{19,42,82,109\}$, respectively, generated according to the
random geometric graph construction described above. The
$30$-node network is the same realization used in the preceding
two studies. For each network size, the two methods use the same
topology, nominal and faulted coefficients, fault configuration,
and initial sample sets.
Table~\ref{tab:milp-comparison} reports the number of inner
solves, the end-to-end computation time, and the resulting value
of $F_1$. Since the two methods may generate different
counterexamples, their final augmented sample sets need not
coincide. We therefore report $F_1$, which is independent of the
sample sets, rather than the final sampled objective values.
All eight runs terminate by convergence before reaching the
maximum of $20$ outer iterations, and every stable island and
reconnected component satisfies the eigenvalue-based stability
test described in Section~\ref{subsec:linear-model}.
\section{Conclusion}
\label{section:conclusion}
This paper studied controlled islanding and reconnection in
networked dynamical systems. The main challenge arises from the coupling between the two stages,
since the islanding decision determines the available reconnection
options while the reconnected edges may affect the stability of the
resulting components.
We showed that the coupled islanding and reconnection decisions
can be represented as bases of a graphic matroid and that the
resulting optimization problem admits an equivalent nonincreasing
supermodular formulation. This
structure supports a local-search solution method with
counterexample refinement for enforcing the stability requirements.
Numerical studies on linear networked systems demonstrated the
resulting islanding and reconnection strategies under different
fault configurations and quantified their solution quality and
computational performance relative to a MILP benchmark. In this work, an island or reconnected component is stable if its
states converge to the origin. Future work will investigate other
stability or collective-behavior requirements in nonlinear networked
systems, such as synchronization, by deriving the corresponding
sufficient conditions and incorporating them into the islanding and
reconnection formulation.
\bibliographystyle{IEEEtran}
\bibliography{MyBib}
\appendix
This appendix provides the technical details omitted from the main text. 

\subsection{Proof of Lemma~\ref{lemma:valid-islanding}}
\label{app:proof-lemma6}

\begin{proof}
By construction,
$\mathcal{V}_1(A),\ldots,\mathcal{V}_{m(A)}(A)$ form a partition of
$\mathcal{V}$. Hence, the first two conditions in
Definition~\ref{def:islanding} are satisfied.

For any $r\in\{1,\ldots,m(A)\}$ and
$i,j\in\mathcal{V}_r(A)$, there exists a path between $i$ and $j$
using only edges in $\overline{A}$. Let
$i=v_0,v_1,\ldots,v_q=j$ be the nodes in $\mathcal{V}$ encountered
along this path, in order. For each $k\in\{1,\ldots,q\}$, the portion
of the path between $v_{k-1}$ and $v_k$ is
\[
v_{k-1},\;
n_{v_{k-1}}^{v_{k-1}v_k},\;
n_{v_k}^{v_{k-1}v_k},\;
v_k,
\]
and hence $(v_{k-1},v_k)\in\mathcal{E}$. Moreover, each $v_k$ is
connected to $i$ using edges in $\overline{A}$ and therefore belongs
to $\mathcal{V}_r(A)$. It follows that
$(v_{k-1},v_k)\in\mathcal{E}_r(A)$ for every
$k\in\{1,\ldots,q\}$. Thus,
$i=v_0,v_1,\ldots,v_q=j$ is a path in $I_r(A)$, so $I_r(A)$ is
connected. Therefore, $\mathcal{I}(A)$ is a valid islanding of
$\mathcal{G}$.

By Lemma~\ref{lem:graphic_representation}, $A$ can be viewed as a
spanning tree of $\overline{\mathcal{G}}$ containing $e_0$.
For each $r\in\{1,\ldots,m(A)\}$, let $C_r$ denote the connected
component of the subgraph with vertex set
$\overline{\mathcal{V}}\setminus\{s_0,d_0\}$ and edge set
$\overline{A}$ that contains $\mathcal{V}_r(A)$. By construction,
$C_r\cap\mathcal{V}=\mathcal{V}_r(A)$.

Since $A$ is a spanning tree of $\overline{\mathcal{G}}$, there
exists at least one edge of $A$ with one endpoint in $C_r$ and the
other outside $C_r$. By the definition of $\overline{A}$ and
$C_r$, any such edge must connect $C_r$ to either $s_0$ or $d_0$.
Moreover, since $s_0$ and $d_0$ are adjacent only to nodes in
$\mathcal{V}$, this edge is of the form $(s_0,i)$ or $(d_0,i)$
for some $i\in\mathcal{V}_r(A)$.

It remains to show that there cannot be two such edges. Suppose that
two distinct edges connect $C_r$ to $\{s_0,d_0\}$, and let their
endpoints in $C_r$ be $i$ and $j$. Since $C_r$ is connected, there
exists a path in $C_r$ between $i$ and $j$. If the two edges are
both incident to $s_0$ or both incident to $d_0$, this path together
with the two edges forms a cycle in $A$. If one edge is incident to
$s_0$ and the other to $d_0$, this path, the two edges, and
$e_0=(s_0,d_0)$ form a cycle in $A$. Both cases contradict the
acyclicity of $A$. Therefore, $A$ contains exactly one edge in
$\{(s_0,i),(d_0,i):i\in\mathcal{V}_r(A)\}$.
\end{proof}

\subsection{Proof of Theorem~\ref{thm:supermodular_reformulation}}
\label{app:supermodular_reformulation}
We first introduce a property of submodular functions that will be
used repeatedly in this section.

\begin{lemma}\label{lemma:composition}
Let $\mu:2^{\Lambda}\to\mathbb{R}$ be monotone nondecreasing and
submodular, and let $C_0,C_1\subseteq\Lambda$ be fixed. Define
$\nu:2^{\Lambda}\to\mathbb{R}$ by
\begin{equation}
    \nu(\Delta)
    \triangleq
    \mu\big((\Delta\cap C_0)\cup C_1\big),
    \qquad \Delta\subseteq\Lambda.
\end{equation}
Then $\nu$ is monotone nondecreasing and submodular.
\end{lemma}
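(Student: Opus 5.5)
The plan is to verify the two properties of $\nu$ directly from the definitions, using the map $g(\Delta)\triangleq(\Delta\cap C_0)\cup C_1$, so that $\nu=\mu\circ g$.

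\emph{Monotonicity.} If $\Delta\subseteq\Delta'$, then $\Delta\cap C_0\subseteq\Delta'\cap C_0$, and hence $g(\Delta)\subseteq g(\Delta')$. Monotonicity of $\mu$ then gives $\nu(\Delta)\leq\nu(\Delta')$.

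\emph{Submodularity.} Fix $S\subseteq T\subseteq\Lambda$ and $v\in\Lambda\setminus T$. I will show
\[
\nu(S\cup\{v\})-\nu(S)\geq\nu(T\cup\{v\})-\nu(T)
\]
by splitting into two cases according to whether $v$ enters $g$.

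\textbf{Case 1: $v\notin C_0$ or $v\in C_1$.} Then $g(S\cup\{v\})=g(S)$ and $g(T\cup\{v\})=g(T)$. Both marginals are zero, so the inequality holds with equality.

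\textbf{Case 2: $v\in C_0\setminus C_1$.} Then $g(S\cup\{v\})=g(S)\cup\{v\}$, and likewise for $T$. Note that $g(S)\subseteq g(T)$ by monotonicity of $g$. Also $v\notin g(T)$, since $v\notin T$ and $v\notin C_1$. The submodularity inequality of $\mu$ therefore applies to the nested pair $g(S)\subseteq g(T)$ with the element $v\notin g(T)$. It gives
\[
\mu(g(S)\cup\{v\})-\mu(g(S))\geq\mu(g(T)\cup\{v\})-\mu(g(T)),
\]
which is exactly the desired inequality for $\nu$.

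The only delicate point is checking that $v\notin g(T)$, so that the paper's element-wise definition of submodularity applies verbatim. This is why the case $v\in C_1$ has to be separated out, since then $v\in g(T)$ and the definition would not apply directly. Beyond this case split, the argument is routine.
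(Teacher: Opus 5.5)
Your proof is correct and matches the paper's argument essentially step for step. Monotonicity follows from the inclusion $g(\Delta)\subseteq g(\Delta')$. Submodularity follows from the same case split on whether the added element lies in $C_0\setminus C_1$, including the check that $v\notin(T\cap C_0)\cup C_1$ in the nontrivial case.
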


\begin{proof}
Let $\Delta_1\subseteq\Delta_2\subseteq\Lambda$. Since
$(\Delta_1\cap C_0)\cup C_1
\subseteq
(\Delta_2\cap C_0)\cup C_1$,
the monotonicity of $\mu$ implies
$\nu(\Delta_1)\leq\nu(\Delta_2)$.

To prove submodularity, let
$\xi\in\Lambda\setminus\Delta_2$. If
$\xi\notin C_0\setminus C_1$, then adding $\xi$ does not change either
$(\Delta_1\cap C_0)\cup C_1$ or
$(\Delta_2\cap C_0)\cup C_1$, and hence
\[
\nu(\Delta_1\cup\{\xi\})-\nu(\Delta_1)
=
\nu(\Delta_2\cup\{\xi\})-\nu(\Delta_2)
=
0.
\]
If $\xi\in C_0\setminus C_1$, then
$((\Delta_1\cup\{\xi\})\cap C_0)\cup C_1
=
((\Delta_1\cap C_0)\cup C_1)\cup\{\xi\}$,
and similarly for $\Delta_2$. Since
$(\Delta_1\cap C_0)\cup C_1
\subseteq
(\Delta_2\cap C_0)\cup C_1$
and
$\xi\notin(\Delta_2\cap C_0)\cup C_1$,
the submodularity of $\mu$ gives
\[
\nu(\Delta_1\cup\{\xi\})-\nu(\Delta_1)
\geq
\nu(\Delta_2\cup\{\xi\})-\nu(\Delta_2).
\]
Therefore, $\nu$ is monotone nondecreasing and submodular.
\end{proof}

To obtain a supermodular reformulation of \eqref{eq:matroid-formulation}, we construct extensions of the indicators $\chi_i^s$, $\chi_i^d$, and $\phi_{ij}$ that agree with the original functions on $\mathcal{B}(\mathcal{M}_c)$ and are nonincreasing and supermodular on $2^{\overline{\mathcal{E}}}$.
Let
$Q_i^s \triangleq \{(s_0,d_0)\} \cup \{(s_0,j) : j \in \mathcal{V}\}
\setminus \{(s_0,i)\}$ and
$Q_i^d \triangleq \{(s_0,d_0)\} \cup \{(d_0,j) : j \in \mathcal{V}\}
\setminus \{(d_0,i)\}$. We first define 
\begin{align*}
    \overline{\chi}_i^s(A) & \triangleq \{|\overline{\mathcal{V}}|
        - |A\cap Q_i^s|
        - \rho_{\overline{\mathcal{M}}}\big((A\setminus Q_i^s)\cup\{(s_0,i)\}\big)\}_{+},\\
    \overline{\chi}_i^d(A)& \triangleq \{|\overline{\mathcal{V}}|
        - |A\cap Q_i^d|
        - \rho_{\overline{\mathcal{M}}}\big((A\setminus Q_i^d)\cup\{(d_0,i)\}\big)\}_{+},\\
    \overline{\phi}_{ij}(A)& \triangleq 1 - |A\cap \{(n_i^{ij}, n_j^{ij})\}|,
\end{align*}
with $|\overline{\mathcal{V}}| = N + 2M + 2$.
\begin{lemma}
\label{lemma:chi-relaxed}
    For any $A\in \mathcal{B}(\mathcal{M}_c)$, we have $\overline{\chi}_i^s(A) = {\chi}_i^s(A)$ and $\overline{\chi}_i^d(A) = {\chi}_i^d(A)$ for any $i\in \mathcal{V}$, and $\phi_{ij}(A) = \overline{\phi}_{ij}(A)$ for any $(i,j)\in \mathcal{E}$. Moreover, we have $\overline{\chi}_i^s(A)$, $\overline{\chi}_i^d(A)$, and $\overline{\phi}_{ij}(A)$ are nonincreasing and supermodular on $2^{\overline{\mathcal{E}}}$.
\end{lemma}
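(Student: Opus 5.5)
The plan is to check agreement on bases by a short case analysis, and then read off monotonicity and supermodularity from the fact that each extension is a constant minus a modular term plus a composed rank function. I read $\rho_{\overline{\mathcal{M}}}$ as the rank function of the graphic matroid of $\overline{\mathcal{G}}$. I treat $\overline{\chi}_i^s$ in detail; $\overline{\chi}_i^d$ is identical with $(d_0,i)$ and $Q_i^d$ in place of $(s_0,i)$ and $Q_i^s$.

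\emph{Agreement on bases.} Fix $A\in\mathcal{B}(\mathcal{M}_c)$. By Lemma~\ref{lem:graphic_representation}, $A$ is a spanning tree of $\overline{\mathcal{G}}$ containing $e_0$, so $|A|=|\overline{\mathcal{V}}|-1$. Moreover, every subset of $A$ is independent.
\begin{itemize}
\item \textbf{Case $(s_0,i)\in A$.} Since $(s_0,i)\notin Q_i^s$, we have $(A\setminus Q_i^s)\cup\{(s_0,i)\}=A\setminus Q_i^s$, whose rank is $|A\setminus Q_i^s|$. The sets $A\cap Q_i^s$ and $A\setminus Q_i^s$ partition $A$, so the bracket equals $|\overline{\mathcal{V}}|-|A|=1$.
\item \textbf{Case $(s_0,i)\notin A$.} Every edge of $\overline{\mathcal{G}}$ incident to $s_0$ other than $(s_0,i)$ lies in $Q_i^s$; this includes $e_0$. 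Hence $s_0$ is isolated in the forest $A\setminus Q_i^s$. Adding $(s_0,i)$ therefore creates no cycle, and the rank is $|A\setminus Q_i^s|+1$. The bracket equals $|\overline{\mathcal{V}}|-|A|-1=0$.
\end{itemize}
In both cases $\overline{\chi}_i^s(A)=\chi_i^s(A)$, and the positive part is inactive. For $\overline{\phi}_{ij}$, on any set $A$ the quantity $1-|A\cap\{(n_i^{ij},n_j^{ij})\}|$ is exactly the indicator that $(n_i^{ij},n_j^{ij})\notin A$. Thus $\overline{\phi}_{ij}$ agrees with $\phi_{ij}$ everywhere, not only on bases.

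\emph{Monotonicity and supermodularity.} Write the bracket in $\overline{\chi}_i^s$ as $|\overline{\mathcal{V}}|-g(A)$, where
\[
g(A)=|A\cap Q_i^s|+\rho_{\overline{\mathcal{M}}}\bigl((A\cap(\overline{\mathcal{E}}\setminus Q_i^s))\cup\{(s_0,i)\}\bigr).
\]
The first term of $g$ is modular and nondecreasing. The second term is $\nu(A)$ in Lemma~\ref{lemma:composition}, with $\mu=\rho_{\overline{\mathcal{M}}}$, $C_0=\overline{\mathcal{E}}\setminus Q_i^s$ and $C_1=\{(s_0,i)\}$. The rank function is nondecreasing and submodular, so $\nu$ is too. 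Hence $g$ is nondecreasing and submodular, and $|\overline{\mathcal{V}}|-g$ is nonincreasing and supermodular. By the preliminary fact in Section~\ref{sec:prelim-submodular} that $\max\{h,c\}$ preserves these properties, $\overline{\chi}_i^s=\max\{|\overline{\mathcal{V}}|-g,0\}$ is nonincreasing and supermodular on $2^{\overline{\mathcal{E}}}$. Finally, $\overline{\phi}_{ij}$ is a constant minus a nondecreasing modular function, hence nonincreasing and modular, and in particular supermodular.

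The main point needing care is the second case of the agreement step: one must confirm that $Q_i^s$ really removes every edge at $s_0$ except $(s_0,i)$, including $e_0$, so that $s_0$ is isolated. This relies on using the rank of $\overline{\mathcal{G}}$, where $s_0$ and $d_0$ are distinct vertices. The analogous check for $d_0$ likewise uses $e_0\in Q_i^d$.
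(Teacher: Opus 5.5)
Your proof is correct and follows essentially the same route as the paper. Agreement on bases comes from $|A|=|\overline{\mathcal{V}}|-1$ together with the isolation of $s_0$ (resp.\ $d_0$) in $A\setminus Q_i^s$ (resp.\ $A\setminus Q_i^d$). Nonincreasing supermodularity then follows from Lemma~\ref{lemma:composition}, the modular term, and the positive-part truncation. The only cosmetic difference is that the paper phrases the case analysis through the rank increment $\delta_i^s(A)\in\{0,1\}$ from Lemma~\ref{lemma:unit-increase}, whereas you compute the rank directly in each case.
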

\begin{proof}
Let $A\in\mathcal{B}(\mathcal{M}_c)$. We first prove
$\overline{\chi}_i^s(A)=\chi_i^s(A)$. By
Lemma~\ref{lem:graphic_representation}, $A$ is a spanning tree of
$\overline{\mathcal{G}}$, and hence
$|A|=|\overline{\mathcal{V}}|-1$. Since
$A\setminus Q_i^s\subseteq A$, the set $A\setminus Q_i^s$ is
independent in the graphic matroid $\overline{\mathcal{M}}$.
Therefore,
\[
\rho_{\overline{\mathcal{M}}}(A\setminus Q_i^s)
=
|A\setminus Q_i^s|
=
|\overline{\mathcal{V}}|-1-|A\cap Q_i^s|.
\]
Substituting this identity into the definition of
$\overline{\chi}_i^s$ gives
\[
\overline{\chi}_i^s(A)
=
\Big\{
1-
\big[
\rho_{\overline{\mathcal{M}}}
((A\setminus Q_i^s)\cup\{(s_0,i)\})
-
\rho_{\overline{\mathcal{M}}}(A\setminus Q_i^s)
\big]
\Big\}_{+}.
\]
Let
$\delta_i^s(A)
\triangleq
\rho_{\overline{\mathcal{M}}}
((A\setminus Q_i^s)\cup\{(s_0,i)\})
-
\rho_{\overline{\mathcal{M}}}(A\setminus Q_i^s)$.
By Lemma~\ref{lemma:unit-increase},
$\delta_i^s(A)\in\{0,1\}$. Since $(s_0,i)\notin Q_i^s$, if
$(s_0,i)\in A$, then $(s_0,i)\in A\setminus Q_i^s$, and hence
$\delta_i^s(A)=0$.
Now suppose $(s_0,i)\notin A$. By the definition of $Q_i^s$,
every edge incident to $s_0$ other than $(s_0,i)$ belongs to
$Q_i^s$. Hence $s_0$ is an isolated vertex in the subgraph induced
by $A\setminus Q_i^s$. Since $s_0\neq i$, the vertices $s_0$ and
$i$ belong to distinct connected components of this subgraph.
Therefore, adding $(s_0,i)$ increases the rank of the graphic
matroid by one, so $\delta_i^s(A)=1$. By the definition of
$\chi_i^s$, it follows that
$\delta_i^s(A)=1-\chi_i^s(A)$.
Since $\chi_i^s(A)\in\{0,1\}$, we obtain
$\overline{\chi}_i^s(A)=\chi_i^s(A)$.
The same argument, with $s_0$ and $Q_i^s$ replaced by $d_0$ and
$Q_i^d$, respectively, proves
$\overline{\chi}_i^d(A)=\chi_i^d(A)$.

For $(i,j)\in\mathcal{E}$,
$\overline{\phi}_{ij}(A)
=
1-\left|A\cap\{(n_i^{ij},n_j^{ij})\}\right|$.
Thus $\overline{\phi}_{ij}(A)=1$ if and only if
$(n_i^{ij},n_j^{ij})\notin A$, and
$\overline{\phi}_{ij}(A)=0$ otherwise. By the definition of
$\phi_{ij}$, this gives
$\overline{\phi}_{ij}(A)=\phi_{ij}(A)$.

We next establish nonincreasing supermodularity on
$2^{\overline{\mathcal{E}}}$. By
Lemma~\ref{lemma:composition}, the function
$A\mapsto
\rho_{\overline{\mathcal{M}}}
((A\setminus Q_i^s)\cup\{(s_0,i)\})$
is monotone nondecreasing and submodular. Moreover,
$A\mapsto |A\cap Q_i^s|$ is modular and monotone nondecreasing.
Hence
\[
A\mapsto
|\overline{\mathcal{V}}|
-
|A\cap Q_i^s|
-
\rho_{\overline{\mathcal{M}}}
((A\setminus Q_i^s)\cup\{(s_0,i)\})
\]
is nonincreasing and supermodular. By the positive-part truncation
property in Section~\ref{sec:prelim-submodular}, the positive part
of a nonincreasing supermodular function is also nonincreasing and
supermodular. Therefore, $\overline{\chi}_i^s$ is nonincreasing and
supermodular. The same argument applies to
$\overline{\chi}_i^d$.
Finally,
$\overline{\phi}_{ij}(A)
=
1-|A\cap\{(n_i^{ij},n_j^{ij})\}|$
is modular and nonincreasing, and is therefore supermodular.
\end{proof}

For $i \in \mathcal{V}$ and $j \in \overline{\mathcal{V}}\setminus\{s_0,d_0\}$
with $j \neq i$, let $e_{ij}$ be a new edge element with endpoints
$\{i,j\}$, define the labeled multigraph
$\overline{\mathcal{G}}_{ij} \triangleq (\overline{\mathcal{V}},
\overline{\mathcal{E}} \cup \{e_{ij}\}, \overline{\Theta}_{ij})$, and let
$\mathcal{M}_{ij} \triangleq \mathcal{M}(\overline{\mathcal{G}}_{ij})$
denote its graphic matroid. We define
\begin{align*}
\overline{\chi}_{ij}(A) & \triangleq \max\Big\{
    \overline{\chi}_i^s(A) + \overline{\chi}_i^d(A)
    + |\overline{\mathcal{V}}| - |A \cap P|\\
& \qquad\qquad
    - \rho_{\mathcal{M}_{ij}}\big((A \setminus P) \cup \{e_{ij}\}\big),
    \ 1 \Big\} - 1,\\
\overline{\chi}_{ii}(A) & \triangleq
    \overline{\chi}_i^s(A) + \overline{\chi}_i^d(A).
\end{align*}

\begin{lemma}\label{lemma:chi-ij-relaxed}
    For any $A \in \mathcal{B}(\mathcal{M}_{c})$, any $i \in \mathcal{V}$, and any
    $j \in \overline{\mathcal{V}} \setminus \{s_{0}, d_{0}\}$, we have
    $\overline{\chi}_{ij}(A) = \chi_{ij}(A)$. Moreover,
    $\overline{\chi}_{ij}$ is nonincreasing and supermodular on
    $2^{\overline{\mathcal{E}}}$.
\end{lemma}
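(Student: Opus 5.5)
The plan has two parts, mirroring Lemma~\ref{lemma:chi-relaxed}: first verify agreement on bases, then verify monotonicity and supermodularity on $2^{\overline{\mathcal{E}}}$.

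\emph{Agreement on bases.} Fix $A\in\mathcal{B}(\mathcal{M}_c)$ and regard it as a spanning tree of $\overline{\mathcal{G}}$ containing $e_0$, via Lemma~\ref{lem:graphic_representation}.
\begin{itemize}
\item Since $A\setminus P\subseteq A$ is acyclic, it is independent in $\mathcal{M}_{ij}$, so
\[
\rho_{\mathcal{M}_{ij}}(A\setminus P)=|A\setminus P|=|\overline{\mathcal{V}}|-1-|A\cap P|.
\]
\item Hence the expression inside the maximum equals
\[
\overline{\chi}_i^s(A)+\overline{\chi}_i^d(A)+1-\delta_{ij}(A),
\qquad
\delta_{ij}(A)\triangleq\rho_{\mathcal{M}_{ij}}\big((A\setminus P)\cup\{e_{ij}\}\big)-\rho_{\mathcal{M}_{ij}}(A\setminus P).
\]
\item By Lemma~\ref{lemma:unit-increase}, $\delta_{ij}(A)\in\{0,1\}$. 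Moreover, $\delta_{ij}(A)=0$ exactly when $i$ and $j$ are already connected in the forest $A\setminus P$.
\item By Lemma~\ref{lemma:chi-relaxed}, $\overline{\chi}_i^s(A)+\overline{\chi}_i^d(A)=\chi_i^s(A)+\chi_i^d(A)$. By Lemma~\ref{lemma:valid-islanding}, this sum is $1$ if $i$ is a reference node and $0$ otherwise, since at most one of $(s_0,i),(d_0,i)$ lies in $A$.
\item Therefore the inner expression is $2-\delta_{ij}(A)$ when $i$ is a reference node, which lies in $\{1,2\}$. Otherwise it is $1-\delta_{ij}(A)\le 1$.
\item After taking $\max\{\cdot,1\}-1$, the value is $1$ exactly when $i$ is a reference node and $i,j$ are connected in $A\setminus P$. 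Otherwise it is $0$. This is precisely $\chi_{ij}(A)$.
\end{itemize}

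\emph{Monotonicity and supermodularity.} The three pieces are handled separately.
\begin{itemize}
\item The map $A\mapsto\rho_{\mathcal{M}_{ij}}((A\setminus P)\cup\{e_{ij}\})$ is nondecreasing and submodular on $2^{\overline{\mathcal{E}}}$. This follows from Lemma~\ref{lemma:composition} with $\Lambda=\overline{\mathcal{E}}\cup\{e_{ij}\}$, $C_0=\overline{\mathcal{E}}\setminus P$, and $C_1=\{e_{ij}\}$, restricted to subsets of $\overline{\mathcal{E}}$. Its negative is therefore nonincreasing and supermodular.
\item The map $A\mapsto -|A\cap P|$ is modular and nonincreasing.
\item The terms $\overline{\chi}_i^s$ and $\overline{\chi}_i^d$ are nonincreasing and supermodular by Lemma~\ref{lemma:chi-relaxed}.
\end{itemize}
The sum of these pieces plus the constant $|\overline{\mathcal{V}}|$ is therefore nonincreasing and supermodular. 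The truncation $\max\{\cdot,1\}$ preserves both properties by the property recalled in Section~\ref{sec:prelim-submodular}, and subtracting $1$ changes nothing.

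For $\overline{\chi}_{ii}$, if it is covered by the statement, the claim is immediate. It is a sum of two nonincreasing supermodular functions, and it agrees with $\chi_{ii}$ on bases by Lemma~\ref{lemma:chi-relaxed}.

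\emph{Main obstacle.} The delicate step is the base case analysis. The connectivity notion defining $\chi_{ij}$, namely a path using edges of $A\setminus P$, must match exactly the rank increment computed in $\mathcal{M}_{ij}$, including when $j$ is an auxiliary node $n_k^{kl}$.

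This works because $\overline{\mathcal{G}}_{ij}$ has the same vertex set as $\overline{\mathcal{G}}$. The element $e_{ij}$ creates a circuit with $A\setminus P$ exactly when an $i$--$j$ path exists in $A\setminus P$. Acyclicity of $A\setminus P$ in $\overline{\mathcal{G}}_{ij}$ also holds, since $e_{ij}\notin A$.

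One more point needs care: when $i$ is not a reference node but $\delta_{ij}(A)=0$, the truncation must still return $0$. This holds because the inner expression is then $1$, and $\max\{1,1\}-1=0$.
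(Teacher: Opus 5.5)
Your proposal is correct and follows essentially the same route as the paper: the same rank-increment reduction $\delta_{ij}(A)$, using independence of $A\setminus P$ in $\mathcal{M}_{ij}$ together with Lemma~\ref{lemma:valid-islanding}, and the same application of Lemma~\ref{lemma:composition} with $C_0=\overline{\mathcal{E}}\setminus P$ and $C_1=\{e_{ij}\}$, followed by the $\max\{\cdot,1\}$ truncation. Your explicit check of the case $j=i$ (that is, $\overline{\chi}_{ii}$) is a small but welcome addition, since the paper's proof only treats $j\neq i$.
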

\begin{proof}
We first consider $j\neq i$. Let
$A\in\mathcal{B}(\mathcal{M}_c)$ and define
\[
\delta_{ij}(A)
\triangleq
\rho_{\mathcal{M}_{ij}}
\big((A\setminus P)\cup\{e_{ij}\}\big)
-
\rho_{\mathcal{M}_{ij}}(A\setminus P).
\]
By Lemma~\ref{lemma:unit-increase},
$\delta_{ij}(A)\in\{0,1\}$. Since $\mathcal{M}_{ij}$ is the
graphic matroid of $\overline{\mathcal{G}}_{ij}$ and $e_{ij}$ is
a new edge with endpoints $i$ and $j$,
$\delta_{ij}(A)=0$ if and only if $i$ and $j$ are connected by a
path in $A\setminus P$.
By Lemma~\ref{lemma:valid-islanding},
$\chi_i^s(A)+\chi_i^d(A)\in\{0,1\}$, and this quantity equals one
if and only if $i$ is a reference node. Hence, by the definition
of $\chi_{ij}$,
\begin{equation}
\chi_{ij}(A)
=
\max\Big\{
\chi_i^s(A)+\chi_i^d(A)+1-\delta_{ij}(A),\,1
\Big\}-1.
\label{eq:chi-ij-rank}
\end{equation}
By Lemma~\ref{lem:graphic_representation}, $A$ is a spanning tree
of $\overline{\mathcal{G}}$. Thus
$|A|=|\overline{\mathcal{V}}|-1$. Moreover,
$A\setminus P$ is acyclic in $\overline{\mathcal{G}}$, and the
edge elements in $A\setminus P$ have the same endpoints in
$\overline{\mathcal{G}}_{ij}$ as in $\overline{\mathcal{G}}$.
Therefore, $A\setminus P$ is independent in $\mathcal{M}_{ij}$,
and
\[
\rho_{\mathcal{M}_{ij}}(A\setminus P)
=
|\overline{\mathcal{V}}|-1-|A\cap P|.
\]
Substituting this identity into \eqref{eq:chi-ij-rank} and using
Lemma~\ref{lemma:chi-relaxed} yields exactly
$\overline{\chi}_{ij}(A)$, proving
$\overline{\chi}_{ij}(A)=\chi_{ij}(A)$.

We next establish nonincreasing supermodularity on
$2^{\overline{\mathcal{E}}}$. Define
\[
g_{ij}(A)
\triangleq
\overline{\chi}_i^s(A)+\overline{\chi}_i^d(A)
+|\overline{\mathcal{V}}|-|A\cap P|
-\rho_{\mathcal{M}_{ij}}
\big((A\setminus P)\cup\{e_{ij}\}\big).
\]
By Lemma~\ref{lemma:chi-relaxed},
$\overline{\chi}_i^s$ and $\overline{\chi}_i^d$ are
nonincreasing and supermodular, while
$A\mapsto-|A\cap P|$ is modular and nonincreasing.
Applying Lemma~\ref{lemma:composition} on
$\overline{\mathcal E}\cup\{e_{ij}\}$ with
$C_0=\overline{\mathcal E}\setminus P$ and
$C_1=\{e_{ij}\}$,
and then restricting to $A\subseteq\overline{\mathcal E}$,
shows that
$A\mapsto
\rho_{\mathcal{M}_{ij}}
\big((A\setminus P)\cup\{e_{ij}\}\big)$
is nondecreasing and submodular. 
Hence $g_{ij}$ is nonincreasing and supermodular. By Section~\ref{sec:prelim-submodular}, taking the maximum of a nonincreasing supermodular function and a constant preserves nonincreasing supermodularity. Therefore, $\max\{g_{ij}(A),1\}$ is nonincreasing and supermodular. Since subtracting a constant preserves both properties, $\overline{\chi}_{ij}$ is nonincreasing and supermodular.
\end{proof}
The extensions constructed above directly handle terms with nonnegative
coefficients. However, $F_1$, $F_2$, and $F_3$ also contain coefficients
that may be negative, for which direct multiplication does not preserve
nonincreasing supermodularity. To handle both cases uniformly, for each
set function
$h:\mathcal{B}(\mathcal{M}_c)\to\{0,1\}$
considered below, we construct nonincreasing supermodular functions
$\overline{h}$ and $\widetilde{h}$ satisfying
\[
\overline{h}(A)=h(A),
\qquad
\widetilde{h}(A)=1-h(A),
\qquad
A\in\mathcal{B}(\mathcal{M}_c).
\]
For any $a\in\mathbb{R}$, define
\[
\mathcal{T}(a;h)(A)
\triangleq
\begin{cases}
a\,\overline{h}(A), & a\geq 0,\\[1mm]
a\bigl(1-\widetilde{h}(A)\bigr), & a<0.
\end{cases}
\]
On $\mathcal{B}(\mathcal{M}_c)$,
$\mathcal{T}(a;h)(A)=ah(A)$. If $a\geq0$, then
$\mathcal{T}(a;h)=a\overline{h}$; if $a<0$, then
$\mathcal{T}(a;h)=a+(-a)\widetilde{h}$. Hence
$\mathcal{T}(a;h)$ is nonincreasing and supermodular on
$2^{\overline{\mathcal{E}}}$ in either case.

We next construct $\widetilde{h}$ for
$\chi_i^s$, $\chi_i^d$, $\chi_{ij}$, and $\phi_{ij}$.
For each $i\in\mathcal V$, define
$\chi_i^0(A)
\triangleq
1-
\left|
A\cap\{(s_0,i),(d_0,i)\}
\right|$,
$\widetilde{\chi}_i^s(A)
\triangleq
\overline{\chi}_i^d(A)+\chi_i^0(A)$, and
$\widetilde{\chi}_i^d(A)
\triangleq
\overline{\chi}_i^s(A)+\chi_i^0(A$).
For $i\in\mathcal V$ and
$j\in\overline{\mathcal V}\setminus\{s_0,d_0\}$, define
$\widetilde{\chi}_{ij}(A)
\triangleq
\sum_{\substack{v=1\\v\neq i}}^N
\overline{\chi}_{vj}(A)$,
and, for each $(i,j)\in\mathcal E$, define
$\widetilde{\phi}_{ij}(A)
\triangleq
\rho_{\overline{\mathcal M}}(\overline{\mathcal E})
-
\rho_{\overline{\mathcal M}}
\left(
A\setminus\{(n_i^{ij},n_j^{ij})\}
\right)$.

\begin{lemma}
\label{lem:complementary-primitive}
For every $A\in\mathcal B(\mathcal M_c)$,
\[
\begin{aligned}
\widetilde{\chi}_i^s(A)
&=1-\chi_i^s(A),&
\widetilde{\chi}_i^d(A)
&=1-\chi_i^d(A),\\
\widetilde{\chi}_{ij}(A)
&=1-\chi_{ij}(A),&
\widetilde{\phi}_{ij}(A)
&=1-\phi_{ij}(A).
\end{aligned}
\]
Moreover,
$\widetilde{\chi}_i^s$,
$\widetilde{\chi}_i^d$,
$\widetilde{\chi}_{ij}$, and
$\widetilde{\phi}_{ij}$
are nonincreasing and supermodular on
$2^{\overline{\mathcal E}}$.
\end{lemma}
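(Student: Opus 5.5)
The proof splits into two parts: agreement on bases, then nonincreasing supermodularity. Both are handled term by term, using Lemmas~\ref{lemma:valid-islanding}, \ref{lemma:chi-relaxed}, and~\ref{lemma:chi-ij-relaxed} together with closure of nonincreasing supermodular functions under nonnegative sums and addition of constants.

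\emph{Agreement on bases.} Fix $A\in\mathcal B(\mathcal M_c)$.

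For $\widetilde\chi_i^s$, Lemma~\ref{lemma:chi-relaxed} gives $\widetilde\chi_i^s(A)=\chi_i^d(A)+1-\chi_i^s(A)-\chi_i^d(A)$. This holds because $(s_0,i)$ and $(d_0,i)$ cannot both lie in $A$: together with $e_0$ they would form a cycle. Hence $|A\cap\{(s_0,i),(d_0,i)\}|=\chi_i^s(A)+\chi_i^d(A)$. The expression simplifies to $1-\chi_i^s(A)$, and $\widetilde\chi_i^d$ is handled symmetrically.

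For $\widetilde\chi_{ij}$, Lemma~\ref{lemma:chi-ij-relaxed} gives $\widetilde\chi_{ij}(A)=\sum_{v\neq i}\chi_{vj}(A)$. The key claim is $\sum_{v=1}^N\chi_{vj}(A)=1$ for every $j\in\overline{\mathcal V}\setminus\{s_0,d_0\}$, not only for $j\in\mathcal V$ as noted in the text. Once this holds, the sum equals $1-\chi_{ij}(A)$. To prove the claim for an auxiliary node $j=n_k^{kl}$, I will argue that in the forest $A\setminus P$ every auxiliary node lies in the component of exactly one island. It cannot be isolated from $\mathcal V$: if its whole component avoided $\mathcal V$, that component could not reach $s_0$ or $d_0$ in $A$, since only nodes of $\mathcal V$ are adjacent to $s_0,d_0$, contradicting that $A$ spans. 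Components of $A\setminus P$ meeting $\mathcal V$ meet exactly one island $\mathcal V_r(A)$, by the definition of islands via $\overline A$; note that $A\setminus P=\overline A$. Each island has exactly one reference node by Lemma~\ref{lemma:valid-islanding}. When $j=i$, I will use the convention $\chi_{ii}=\chi_i^s+\chi_i^d$, which equals $1$ exactly when $i$ is a reference node. This convention must be checked for consistency inside the sum.

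For $\widetilde\phi_{ij}$, $A$ is a spanning tree of the connected graph $\overline{\mathcal G}$, so $\rho(\overline{\mathcal E})=|A|$. If $(n_i^{ij},n_j^{ij})\in A$, removing it drops the rank by one, giving $1=1-\phi_{ij}(A)$. Otherwise the difference is $0=1-\phi_{ij}(A)$.

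\emph{Monotonicity and supermodularity.}
\begin{itemize}
\item $\widetilde\chi_i^s$ and $\widetilde\chi_i^d$ are sums of a function from Lemma~\ref{lemma:chi-relaxed} and the modular nonincreasing function $\chi_i^0$ (a constant minus a cardinality), so they are nonincreasing and supermodular.
\item $\widetilde\chi_{ij}$ is a nonnegative sum of the functions $\overline\chi_{vj}$ from Lemma~\ref{lemma:chi-ij-relaxed}, so the same properties follow.
\item For $\widetilde\phi_{ij}$, I apply Lemma~\ref{lemma:composition} with $C_0=\overline{\mathcal E}\setminus\{(n_i^{ij},n_j^{ij})\}$ and $C_1=\emptyset$. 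This shows $A\mapsto\rho(A\setminus\{(n_i^{ij},n_j^{ij})\})$ is nondecreasing and submodular, so its negative plus a constant is nonincreasing and supermodular.
\end{itemize}

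The main obstacle is the partition-of-unity claim for $\chi_{vj}$ when $j$ is an auxiliary node. It requires a careful argument that $A\setminus P$ attaches every auxiliary node to exactly one island, including the case of edges $(k,l)$ whose middle edge is absent.
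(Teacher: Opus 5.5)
Your proposal is correct and follows essentially the same route as the paper. Agreement on bases comes from Lemmas~\ref{lemma:chi-relaxed} and~\ref{lemma:chi-ij-relaxed}, the partition identity $\sum_{v}\chi_{vj}(A)=1$, and the rank identity for $\widetilde\phi_{ij}$; nonincreasing supermodularity then follows from closure under sums, modularity of $\chi_i^0$, and Lemma~\ref{lemma:composition}. Your explicit argument that every auxiliary node lies in an $\overline{A}$-component meeting exactly one island is a useful addition, since the paper attributes $\sum_{v}\chi_{vj}(A)=1$ for all $j\in\overline{\mathcal V}\setminus\{s_0,d_0\}$ to Lemma~\ref{lemma:valid-islanding}, whose statement only concerns nodes of $\mathcal V$.
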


\begin{proof}
We first establish the identities on $\mathcal B(\mathcal M_c)$.
By Lemma~\ref{lemma:valid-islanding}, node $i$ is adjacent in $A$
to at most one of $s_0$ and $d_0$. Hence, by definition,
$\chi_i^s(A)+\chi_i^d(A)+\chi_i^0(A)=1$.
Together with Lemma~\ref{lemma:chi-relaxed}, this yields
$\widetilde{\chi}_i^s(A)=1-\chi_i^s(A)$,
and
$\widetilde{\chi}_i^d(A)=1-\chi_i^d(A)$.
Moreover, by Lemma~\ref{lemma:valid-islanding},
each $j\in\overline{\mathcal V}\setminus\{s_0,d_0\}$ is associated
with exactly one reference node, and hence
$\sum_{v=1}^N\chi_{vj}(A)=1$. Therefore, using
Lemma~\ref{lemma:chi-ij-relaxed},
$$\widetilde{\chi}_{ij}(A)
=
\sum_{\substack{v=1\\v\neq i}}^N
\chi_{vj}(A)
=
1-\chi_{ij}(A).$$
Finally, let
$e_{ij}^{m}\triangleq(n_i^{ij},n_j^{ij})$.
By Lemma~\ref{lem:graphic_representation}, $A$ is a basis of
$\overline{\mathcal M}$. Thus
$A\setminus\{e_{ij}^{m}\}$ is independent and
\[
\rho_{\overline{\mathcal M}}
\left(A\setminus\{e_{ij}^{m}\}\right)
=
\rho_{\overline{\mathcal M}}(\overline{\mathcal E})
-
\left|A\cap\{e_{ij}^{m}\}\right|.
\]
It follows that
$\widetilde{\phi}_{ij}(A)
=
\left|A\cap\{e_{ij}^{m}\}\right|
=
1-\phi_{ij}(A)$.

We next establish nonincreasing supermodularity.
Since
\[
\chi_i^0(A)
=
1-\left|A\cap\{(s_0,i),(d_0,i)\}\right|
\]
is modular and nonincreasing, Lemma~\ref{lemma:chi-relaxed}
implies that $\widetilde{\chi}_i^s$ and
$\widetilde{\chi}_i^d$ are nonincreasing and supermodular.
Similarly, Lemma~\ref{lemma:chi-ij-relaxed} implies that
$\widetilde{\chi}_{ij}$ is nonincreasing and supermodular.
Finally, by Lemma~\ref{lemma:composition},
$A\mapsto
\rho_{\overline{\mathcal M}}
\left(A\setminus\{e_{ij}^{m}\}\right)$
is nondecreasing and submodular. Therefore,
$\widetilde{\phi}_{ij}$ is nonincreasing and supermodular.
\end{proof}

The objective functions $F_1$, $F_2$, and $F_3$ also involve
higher-order set functions recursively constructed through
$\mathcal C$. We next show that the existence of
$\overline{h}$ and $\widetilde{h}$ is preserved under
$\Gamma$, allowing the same construction to be applied
recursively.

\begin{lemma}
\label{lem:C-extension}
Let $q\in\mathbb N_{+}$ and
$h_1,\ldots,h_q:
\mathcal B(\mathcal M_c)\to\{0,1\}$.
Suppose that, for each $r\in\{1,\ldots,q\}$, there exist
nonincreasing supermodular functions
$\overline{h}_r,\widetilde{h}_r:
2^{\overline{\mathcal E}}\to\mathbb R$
such that, for every $A\in\mathcal B(\mathcal M_c)$,
$\overline{h}_r(A)=h_r(A)$,
and
$\widetilde{h}_r(A)=1-h_r(A)$.
Define
\begin{align*}
h(A)
&\triangleq
\Gamma(h_1,\ldots,h_q)(A),\\
\overline{h}(A)
&\triangleq
\Gamma
\left(
\overline{h}_1,\ldots,\overline{h}_q
\right)(A),\\
\widetilde{h}(A)
&\triangleq
\sum_{r=1}^{q}
\Gamma
\left(
\overline{h}_1,\ldots,
\overline{h}_{r-1},
\widetilde{h}_r
\right)(A).
\end{align*}
Then $\overline{h}$ and $\widetilde{h}$ are nonincreasing and
supermodular on $2^{\overline{\mathcal E}}$, and, for every
$A\in\mathcal B(\mathcal M_c)$,
$\overline{h}(A)=h(A)$,
and 
$\widetilde{h}(A)=1-h(A)$.

\end{lemma}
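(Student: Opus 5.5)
The proof splits into two independent parts: agreement on $\mathcal B(\mathcal M_c)$, which reduces to a product identity and a telescoping sum, and nonincreasing supermodularity on $2^{\overline{\mathcal E}}$, which reduces to the closure properties recalled in Section~\ref{sec:prelim-submodular}. No rank-function arguments are needed beyond the hypotheses on $\overline h_r$ and $\widetilde h_r$.

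\emph{Structural part.} Take any finite family $g_1,\ldots,g_k$ of nonincreasing supermodular functions on $2^{\overline{\mathcal E}}$. Their sum $\sum_p g_p$ is nonincreasing and supermodular, since it is a nonnegative (unit) weighted sum. Subtracting the constant $k-1$ preserves both properties. By the truncation property in Section~\ref{sec:prelim-submodular}, $\{\cdot\}_+=\max\{\cdot,0\}$ preserves them as well. Hence $\Gamma(g_1,\ldots,g_k)$ is nonincreasing and supermodular. For $k=1$ this is just $\{g_1\}_+$.

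Applying this with $(g_1,\ldots,g_q)=(\overline h_1,\ldots,\overline h_q)$ gives the claim for $\overline h$. Applying it, for each $r$, with $(\overline h_1,\ldots,\overline h_{r-1},\widetilde h_r)$ shows that every summand of $\widetilde h$ is nonincreasing and supermodular. So $\widetilde h$, a sum of $q$ such functions, is too.

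\emph{Agreement on bases.} Fix $A\in\mathcal B(\mathcal M_c)$. By hypothesis $\overline h_r(A)=h_r(A)\in\{0,1\}$ and $\widetilde h_r(A)=1-h_r(A)\in\{0,1\}$, so every argument of every $\Gamma$ evaluated at $A$ lies in $\{0,1\}$. The observation following the definition of $\Gamma$ then lets us replace each $\Gamma$ by the product of its arguments. This gives
\[
\overline h(A)=\prod_{p=1}^q h_p(A)=\Gamma(h_1,\ldots,h_q)(A)=h(A).
\]
It also gives
\[
\widetilde h(A)=\sum_{r=1}^q\Bigl(\prod_{p<r}h_p(A)\Bigr)\bigl(1-h_r(A)\bigr)
=\sum_{r=1}^q\Bigl(\prod_{p<r}h_p(A)-\prod_{p\le r}h_p(A)\Bigr),
\]
which telescopes to $1-\prod_{p\le q}h_p(A)=1-h(A)$, using the empty-product convention for $r=1$.

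The only point requiring care is that the product identity for $\Gamma$ is valid only when all arguments are $\{0,1\}$-valued. This is why the values $\widetilde h_r(A)=1-h_r(A)$ must lie in $\{0,1\}$, which holds because $h_r$ maps into $\{0,1\}$. The identity fails off $\mathcal B(\mathcal M_c)$, but it is never needed there, since the structural part holds on all of $2^{\overline{\mathcal E}}$ without reference to products.
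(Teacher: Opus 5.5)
Your proposal is correct and follows essentially the same route as the paper's proof. Both reduce each $\Gamma$ to a product of binary values on $\mathcal B(\mathcal M_c)$ and telescope the sum for $\widetilde h$, then get nonincreasing supermodularity from closure under sums, constant shifts, and the $\max\{\cdot,0\}$ truncation recalled in Section~\ref{sec:prelim-submodular}.
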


\begin{proof}
Let $A\in\mathcal B(\mathcal M_c)$. Since each $h_r(A)$ is binary,
the definition of $\Gamma$ gives
$\Gamma(h_1,\ldots,h_q)(A)
=
\prod_{r=1}^{q}h_r(A)$.
Using $\overline{h}_r(A)=h_r(A)$ therefore yields
$\overline{h}(A)=h(A)$.
Similarly, for each $r$,
$\Gamma
\left(
\overline{h}_1,\ldots,
\overline{h}_{r-1},
\widetilde{h}_r
\right)(A)
=
\left(\prod_{s=1}^{r-1}h_s(A)\right)
\left(1-h_r(A)\right)$,
where the empty product for $r=1$ equals one. Hence
\[
\begin{aligned}
\widetilde{h}(A)
&=
\sum_{r=1}^{q}
\left(\prod_{s=1}^{r-1}h_s(A)\right)
\left(1-h_r(A)\right)\\
&=
1-\prod_{r=1}^{q}h_r(A)
=
1-h(A),
\end{aligned}
\]
where the second equality follows by telescoping.

It remains to establish nonincreasing supermodularity.
Since each $\overline{h}_r$ is nonincreasing and supermodular,
\[
A\mapsto
\sum_{r=1}^{q}\overline{h}_r(A)-(q-1)
\]
is also nonincreasing and supermodular. By the definition of
$\Gamma$ and Section~\ref{sec:prelim-submodular},
$\overline{h}$ is therefore nonincreasing and supermodular.
Similarly, for each $r$,
$\overline{h}_1,\ldots,\overline{h}_{r-1},
\widetilde{h}_r$ are nonincreasing and supermodular. Thus, by the
same argument,
$\Gamma
\left(
\overline{h}_1,\ldots,
\overline{h}_{r-1},
\widetilde{h}_r
\right)$
is nonincreasing and supermodular. Since $\widetilde{h}$ is the
sum of these functions, it is also nonincreasing and supermodular.
\end{proof}

We now apply the preceding constructions to $F_1$. Instead of
constructing a separate extension of $\chi_{ij}^{\mathrm{rec}}$, we
directly define
\begin{align}
&\overline{F}_1(A)
\triangleq 
\sum_{i=1}^{N}
\sum_{v=1}^{N}
c_i\,
\Gamma
\left(
    \overline{\chi}_v^d,
    \overline{\chi}_{vi}
\right)(A)
\nonumber\\
&+
\frac{\alpha_0-\alpha_1}{2}
\sum_{i=1}^{N}
\sum_{j\in\mathcal N_i}
\sum_{v=1}^{N}
\sum_{\substack{v^{{\prime}}=1\\v^{{\prime}}\neq v}}^{N}
c_{ij}\,
\Gamma
\left(
    \overline{\chi}_{vi},
    \overline{\chi}_{v^{{\prime}}j}
\right)(A)
\nonumber\\
&+
\frac{\alpha_1}{2}
\sum_{i=1}^{N}
\sum_{j\in\mathcal N_i}
\sum_{v=1}^{N}
\sum_{\substack{v^{{\prime}}=1\\v^{{\prime}}\neq v}}^{N}
c_{ij}
\Big[
\Gamma
\left(
    \overline{\chi}_{vi},
    \overline{\chi}_{v^{{\prime}}j},
    \widetilde{\chi}_v^s
\right)(A)
\nonumber\\
&\qquad+
\Gamma
\left(
    \overline{\chi}_{vi},
    \overline{\chi}_{v^{{\prime}}j},
    \overline{\chi}_v^s,
    \widetilde{\chi}_{v^{{\prime}}}^s
\right)(A)
\nonumber\\
&\qquad+
\Gamma
\left(
    \overline{\chi}_{vi},
    \overline{\chi}_{v^{{\prime}}j},
    \overline{\chi}_v^s,
    \overline{\chi}_{v^{{\prime}}}^s,
    \widetilde{\phi}_{ij}
\right)(A)
\Big].
\label{eq:F1bar}
\end{align}
We are now ready to prove 
Theorem~\ref{thm:supermodular_reformulation}.

\begin{proof}[Proof of Theorem~\ref{thm:supermodular_reformulation}]
Fix $A\in\mathcal B(\mathcal M_c)$. By the definitions of
$\psi_{v,i}^s$ and $\chi_{ij}^{\mathrm{rec}}$,
\[
\chi_{ij}^{\mathrm{rec}}(A)
=
\sum_{v=1}^{N}
\sum_{\substack{v^{{\prime}}=1\\v^{{\prime}}\neq v}}^{N}
\chi_{vi}(A)
\chi_{v^{{\prime}}j}(A)
\chi_v^s(A)
\chi_{v^{{\prime}}}^s(A)
\phi_{ij}(A).
\]
Substituting this identity into $F_1$ gives
\begin{align}
F_1(A)
={}&
\sum_{i=1}^{N}
\sum_{v=1}^{N}
c_i\,
\Gamma
\left(
    \chi_v^d,
    \chi_{vi}
\right)(A)
\nonumber\\
&+
\frac{1}{2}
\sum_{i=1}^{N}
\sum_{j\in\mathcal N_i}
\sum_{v=1}^{N}
\sum_{\substack{v^{{\prime}}=1\\v^{{\prime}}\neq v}}^{N}
c_{ij}
\chi_{vi}(A)
\chi_{v^{{\prime}}j}(A)
\nonumber\\
&\qquad\cdot
\left(
    \alpha_0
    -
    \alpha_1
    \chi_v^s(A)
    \chi_{v^{{\prime}}}^s(A)
    \phi_{ij}(A)
\right).
\label{eq:F1-expanded}
\end{align}
For each summand in \eqref{eq:F1-expanded},
\begin{align*}
&\alpha_0
-\alpha_1
\chi_v^s(A)\chi_{v^{{\prime}}}^s(A)\phi_{ij}(A)
\\
={}&
(\alpha_0-\alpha_1)
+\alpha_1
\Big[
(1-\chi_v^s(A))
+\chi_v^s(A)(1-\chi_{v^{{\prime}}}^s(A))
\\
&\hspace{3.2cm}
+\chi_v^s(A)\chi_{v^{{\prime}}}^s(A)
(1-\phi_{ij}(A))
\Big].
\end{align*}
By Lemmas~\ref{lemma:chi-relaxed},
\ref{lemma:chi-ij-relaxed}, and
\ref{lem:complementary-primitive}, the functions
$\overline{\chi}$ and $\widetilde{\chi}$ appearing in
\eqref{eq:F1bar} agree on $\mathcal B(\mathcal M_c)$ with the
corresponding indicators and their complements, respectively.
Since these quantities are binary on $\mathcal B(\mathcal M_c)$,
the definition of $\Gamma$ implies that every
$\Gamma$ term in \eqref{eq:F1bar} agrees with the corresponding
product above. Comparing \eqref{eq:F1bar} and
\eqref{eq:F1-expanded} therefore yields
$\overline F_1(A)=F_1(A)$.

We next establish that $\overline F_1$ is nonincreasing and
supermodular. By Lemmas~\ref{lemma:chi-relaxed},
\ref{lemma:chi-ij-relaxed}, and
\ref{lem:complementary-primitive}, all functions appearing as
arguments of $\Gamma$ in \eqref{eq:F1bar} are nonincreasing
and supermodular on $2^{\overline{\mathcal E}}$. By the definition
of $\Gamma$ and Section~\ref{sec:prelim-submodular}, every
$\Gamma$ term in \eqref{eq:F1bar} has the same properties.
Since $c_i,c_{ij}\geq0$ and $0<\alpha_1<\alpha_0$, all coefficients
in \eqref{eq:F1bar} are nonnegative. Hence $\overline F_1$ is
nonincreasing and supermodular on
$2^{\overline{\mathcal E}}$.

We next consider $F_2$. Unlike $F_1$, the coefficients multiplying
the indicator functions in $\Phi_v(x,A)$ depend on $x$ and may have
either sign. We therefore use the signed-coefficient construction
$\mathcal T$ introduced above.
Recall that
\begin{align*}
    \psi_{v,i}^s
&=
\Gamma
\left(
    \chi_v^s,
    \chi_{vi}
\right),\\
\psi_{v,i,j}^s
&=
\Gamma
\left(
    \chi_v^s,
    \chi_{vi},
    \chi_{vj}
\right),\\
\psi_{v,i,j,k}^s
&=
\Gamma
\left(
    \chi_v^s,
    \chi_{vi},
    \chi_{vj},
    \chi_{vk}
\right).
\end{align*}
By Lemma~\ref{lem:C-extension}, there exist nonincreasing supermodular functions
$\overline{\psi}_{v,i}^s,\;
\widetilde{\psi}_{v,i}^s,\;
\overline{\psi}_{v,i,j}^s,\;
\widetilde{\psi}_{v,i,j}^s,\;
\overline{\psi}_{v,i,j,k}^s,\;
\widetilde{\psi}_{v,i,j,k}^s
:
2^{\overline{\mathcal E}}\to\mathbb R$
such that, for every
$A\in\mathcal B({\mathcal M}_c)$,
\begin{align*}
\overline{\psi}_{v,i}^s(A)
&=
\psi_{v,i}^s(A),
&
\widetilde{\psi}_{v,i}^s(A)
&=
1-\psi_{v,i}^s(A),
\\
\overline{\psi}_{v,i,j}^s(A)
&=
\psi_{v,i,j}^s(A),
&
\widetilde{\psi}_{v,i,j}^s(A)
&=
1-\psi_{v,i,j}^s(A),
\\
\overline{\psi}_{v,i,j,k}^s(A)
&=
\psi_{v,i,j,k}^s(A),
&
\widetilde{\psi}_{v,i,j,k}^s(A)
&=
1-\psi_{v,i,j,k}^s(A).
\end{align*}

We first define a function corresponding to $W_v(x,A)$ as
\begin{align}
\overline W_v(x,A)
\triangleq {}&
\sum_{i=1}^{N}
\overline{\psi}_{v,i}^s(A)W_i(x_i)
\nonumber\\
&+
\frac{1}{2}
\sum_{i=1}^{N}
\sum_{j\in\mathcal N_i}
\overline{\psi}_{v,i,j}^s(A)
W_{ij}(x_i,x_j).
\label{eq:Wr-bar}
\end{align}
Since $W_i(x_i)\geq0$ and $W_{ij}(x_i,x_j)\geq0$,
$\overline W_v(x,A)$ is nonincreasing and supermodular in $A$
for every fixed $x$. Moreover, for every
$A\in\mathcal B({\mathcal M}_c)$,
$\overline W_v(x,A)=W_v(x,A)$.

For notational convenience, define
\begin{align*}
\zeta_i(x)
&\triangleq
\frac{\partial W_i(x_i)}{\partial x_i}f_i(x_i),\\
\zeta_{ij}(x)
&\triangleq
\frac{\partial W_{ij}(x_i,x_j)}{\partial x_i}f_i(x_i)
+
\frac{\partial W_i(x_i)}{\partial x_i}
f_{ij}(x_i,x_j),\\
\zeta_{ijk}(x)
&\triangleq
\frac{\partial W_{ij}(x_i,x_j)}{\partial x_i}
f_{ik}(x_i,x_k).
\end{align*}

Using the signed-coefficient construction $\mathcal T$, define
\begin{align}
&\overline{\Phi}_v(x,A)
\triangleq {}
\sum_{i=1}^{N}
\mathcal T
\left(
    \zeta_i(x);
    \psi_{v,i}^s
\right)(A)
\nonumber\\
&+
\sum_{i=1}^{N}
\sum_{j\in\mathcal N_i}
\mathcal T
\left(
    \zeta_{ij}(x);
    \psi_{v,i,j}^s
\right)(A)
\nonumber\\
&+
\sum_{i=1}^{N}
\sum_{j\in\mathcal N_i}
\sum_{k\in\mathcal N_i}
\mathcal T
\left(
    \zeta_{ijk}(x);
    \psi_{v,i,j,k}^s
\right)(A)
\nonumber\\
&+
k_1\overline W_v(x,A).
\label{eq:Phi-bar}
\end{align}
We then define
\begin{equation}
\overline F_2(A)
\triangleq
\int_{\Omega}
\sum_{v=1}^{N}
\left\{
    \overline{\Phi}_v(x,A)
\right\}_{+}
\,dx.
\label{eq:F2-bar}
\end{equation}
Fix $A\in\mathcal B({\mathcal M}_c)$ and $x\in\Omega$.
By the construction of $\mathcal T$, for any
$a\in\mathbb R$ and any binary set function $h$ for which the
corresponding extensions have been constructed,
$\mathcal T(a;h)(A)=a h(A)$.
Hence,
\begin{align*}
\mathcal T
\left(
    \zeta_i(x);
    \psi_{v,i}^s
\right)(A)
&=
\zeta_i(x)\psi_{v,i}^s(A),
\\
\mathcal T
\left(
    \zeta_{ij}(x);
    \psi_{v,i,j}^s
\right)(A)
&=
\zeta_{ij}(x)\psi_{v,i,j}^s(A),
\\
\mathcal T
\left(
    \zeta_{ijk}(x);
    \psi_{v,i,j,k}^s
\right)(A)
&=
\zeta_{ijk}(x)\psi_{v,i,j,k}^s(A).
\end{align*}
Moreover, by \eqref{eq:Wr-bar}, we have
$\overline W_v(x,A)=W_v(x,A)$.
Substituting these identities into \eqref{eq:Phi-bar} gives
\[
\overline{\Phi}_v(x,A)=\Phi_v(x,A).
\]
Therefore, for every
$A\in\mathcal B({\mathcal M}_c)$,
\begin{align*}
\overline F_2(A)
&=
\int_{\Omega}
\sum_{v=1}^{N}
\left\{
    \overline{\Phi}_v(x,A)
\right\}_{+}
\,dx
\\
&=
\int_{\Omega}
\sum_{v=1}^{N}
\left\{
    \Phi_v(x,A)
\right\}_{+}
\,dx
=
F_2(A).
\end{align*}

It remains to establish nonincreasing supermodularity.
For each fixed $x\in\Omega$, the construction of $\mathcal T$
implies that
$\mathcal T
\left(
    \zeta_i(x);
    \psi_{v,i}^s
\right)$,
$\mathcal T
\left(
    \zeta_{ij}(x);
    \psi_{v,i,j}^s
\right)$,
and
$\mathcal T
\left(
    \zeta_{ijk}(x);
    \psi_{v,i,j,k}^s
\right)$
are nonincreasing and supermodular on
$2^{\overline{\mathcal E}}$.
Furthermore, $\overline W_v(x,A)$ is nonincreasing and
supermodular in $A$, and $k_1>0$.
It follows from \eqref{eq:Phi-bar} that
$\overline{\Phi}_v(x,A)$ is nonincreasing and supermodular in $A$
for every fixed $x\in\Omega$.

Consequently,
$\left\{
    \overline{\Phi}_v(x,A)
\right\}_{+}
=
\max
\left\{
    \overline{\Phi}_v(x,A),0
\right\}$
is nonincreasing and supermodular in $A$.
Finite sums preserve both properties, and integration over the
fixed domain $\Omega$ preserves the corresponding inequalities.
Therefore, $\overline F_2$ is nonincreasing and supermodular on
$2^{\overline{\mathcal E}}$.

We next consider $F_3$. The functions
$\psi_{v,i,k}^{r}$,
$\psi_{v,i,j,k}^{sr}$, and
$\psi_{v,i,j,k}^{rr}$ appearing in $\Psi_v(x,A)$ involve the
reconnection indicators. Rather than constructing extensions of the
reconnection indicators separately, we first rewrite these functions
as sums of conjunctions of the basic binary set functions.
Fix $A\in\mathcal B({\mathcal M}_c)$. By the definition of
$\chi_{ik}^{\mathrm{rec}}$ and the uniqueness of the reference node
associated with each island, we have
\begin{align*}
\psi_{v,i,k}^{r}(A)
&=
\sum_{\substack{v^{\prime}=1\\
v^{\prime}\neq v}}^{N}
\Gamma
\left(
    \chi_v^s,
    \chi_{vi},
    \chi_{v^{\prime}k},
    \chi_{v^{\prime}}^s,
    \phi_{ik}
\right)(A),
\\
\psi_{v,i,j,k}^{sr}(A)
&=
\sum_{\substack{v^{\prime}=1\\
v^{\prime}\neq v}}^{N}
\Gamma
\left(
    \chi_v^s,
    \chi_{vi},
    \chi_{vj},
    \chi_{v^{\prime}k},
    \chi_{v^{\prime}}^s,
    \phi_{ik}
\right)(A),
\\
\psi_{v,i,j,k}^{rr}(A)
&=
\sum_{\substack{v^{\prime}=1\\
v^{\prime}\neq v}}^{N}
\sum_{\substack{v^{\prime\prime}=1\\
v^{\prime\prime}\neq v}}^{N}
\Gamma
\Big(
    \chi_v^s,
    \chi_{vi},
    \chi_{v^{\prime}j},
    \chi_{v^{\prime}}^s,
    \phi_{ij},
\\
&\quad\quad\quad\quad\quad\quad\quad\quad\quad\quad
    \chi_{v^{\prime\prime}k},
    \chi_{v^{\prime\prime}}^s,
    \phi_{ik}
\Big)(A).
\end{align*}
Note that $v^{\prime}$ and $v^{\prime\prime}$ are not required
to be distinct, since $j$ and $k$ may belong to the same stable
island.
By Lemma~\ref{lem:C-extension}, every $\Gamma$-based conjunction appearing above
admits nonincreasing supermodular extensions satisfying the
requirements of the signed-coefficient construction $\mathcal T$.
To keep the construction of $\overline\Psi_v$ compact, for any
$a\in\mathbb R$ define
\begin{align*}
\mathcal R^r_{v,i,k}(a;A)
\triangleq{}&
\sum_{\substack{v^{\prime}=1\\v^{\prime}\neq v}}^N
\mathcal T\!\left(
    a;
    \Gamma\!\left(
        \chi_v^s,
        \chi_{vi},
        \chi_{v^{\prime}k},
        \chi_{v^{\prime}}^s,
        \phi_{ik}
    \right)
\right)(A),
\\
\mathcal R^{sr}_{v,i,j,k}(a;A)
\triangleq{}&
\sum_{\substack{v^{\prime}=1\\v^{\prime}\neq v}}^N
\mathcal T\!\left(
    a;
    \Gamma\!\left(
        \chi_v^s,
        \chi_{vi},
        \chi_{vj},
        \chi_{v^{\prime}k},
        \chi_{v^{\prime}}^s,
        \phi_{ik}
    \right)
\right)(A),
\\
\mathcal R^{rr}_{v,i,j,k}(a;A)
\triangleq{}&
\sum_{\substack{v^{\prime}=1\\v^{\prime}\neq v}}^N
\sum_{\substack{v^{\prime\prime}=1\\v^{\prime\prime}\neq v}}^N
\mathcal T\!\Big(
    a;
    \Gamma\!\big(
        \chi_v^s,
        \chi_{vi},
        \chi_{v^{\prime}j},
        \chi_{v^{\prime}}^s,
\nonumber\\
&\quad\quad\quad\quad\quad\quad
        \phi_{ij},
        \chi_{v^{\prime\prime}k},
        \chi_{v^{\prime\prime}}^s,
        \phi_{ik}
    \big)
\Big)(A).
\end{align*}
By Lemma~\ref{lem:C-extension} and the definition of $\mathcal T$,
$\mathcal R^r_{v,i,k}(a;A)$,
$\mathcal R^{sr}_{v,i,j,k}(a;A)$, and
$\mathcal R^{rr}_{v,i,j,k}(a;A)$
are nonincreasing and supermodular in $A$ for every fixed
$a\in\mathbb R$. Moreover, for every
$A\in\mathcal B(\mathcal M_c)$, we have
\begin{align*}
    \mathcal R^r_{v,i,k}(a;A)
&=
a\psi^r_{v,i,k}(A),
\\
\mathcal R^{sr}_{v,i,j,k}(a;A)
&=
a\psi^{sr}_{v,i,j,k}(A),
\\
\mathcal R^{rr}_{v,i,j,k}(a;A)
&=
a\psi^{rr}_{v,i,j,k}(A).
\end{align*}

For notational convenience, define
\begin{align*}
\zeta^r_{ik}(x)
\triangleq{}&
\frac{\partial W_i(x_i)}{\partial x_i}f_{ik}(x_i,x_k)
+
\frac{\partial W_{ik}(x_i,x_k)}{\partial x_i}f_i(x_i),
\\
\zeta^{sr}_{ijk}(x)
\triangleq{}&
\frac{\partial W_{ij}(x_i,x_j)}{\partial x_i}
f_{ik}(x_i,x_k)
\\[-1mm]
&+
\frac{\partial W_{ik}(x_i,x_k)}{\partial x_i}
f_{ij}(x_i,x_j),
\\
\zeta^{rr}_{ijk}(x)
\triangleq{}&
\frac{\partial W_{ij}(x_i,x_j)}{\partial x_i}
f_{ik}(x_i,x_k).
\end{align*}
We then define
\begin{equation}
\overline F_3(A)
\triangleq
\int_{\widehat\Omega}
\sum_{v=1}^N
\left\{\overline\Psi_v(x,A)\right\}_{+}\,dx.
\label{eq:F3-bar}
\end{equation}
where
\begin{align}
&\overline{\Psi}_v(x,A)
\triangleq{}
\sum_{i=1}^{N}
\sum_{k\in\mathcal N_i}
\mathcal R^r_{v,i,k}
\left(
    \zeta_{ik}^{r}(x);A
\right)
\nonumber\\
&+
\sum_{i=1}^{N}
\sum_{j\in\mathcal N_i}
\sum_{k\in\mathcal N_i}
\mathcal R^{sr}_{v,i,j,k}
\left(
    \zeta_{ijk}^{sr}(x);A
\right)
\nonumber\\
&+
\sum_{i=1}^{N}
\sum_{j\in\mathcal N_i}
\sum_{k\in\mathcal N_i}
\mathcal R^{rr}_{v,i,j,k}
\left(
    \zeta_{ijk}^{rr}(x);A
\right)
\nonumber\\
&+
\sum_{i=1}^{N}
\sum_{k\in\mathcal N_i}
\mathcal R^r_{v,i,k}
\left(
    \frac{k_2}{2}W_{ik}(x_i,x_k);A
\right)
\nonumber\\
&+
\sum_{i=1}^{N}
\mathcal T\!\left(
    -(k_1-k_2)W_i(x_i);
    \psi_{v,i}^s
\right)(A)
\nonumber\\
&+
\frac{1}{2}
\sum_{i=1}^{N}
\sum_{j\in\mathcal N_i}
\nonumber\\[-1mm]
&\qquad
\mathcal T\!\left(
    -(k_1-k_2)W_{ij}(x_i,x_j);
    \psi_{v,i,j}^s
\right)(A).
\label{eq:Psi-bar}
\end{align}

Fix $A\in\mathcal B({\mathcal M}_c)$ and
$x\in\widehat{\Omega}$.
By the construction of $\mathcal T$,
$\mathcal T(a;h)(A)=a h(A)$
for every binary set function $h$ considered above and every
$a\in\mathbb R$.
Applying this identity to the first three groups in
\eqref{eq:Psi-bar}, together with the preceding representations of
$\psi_{v,i,k}^{r}$,
$\psi_{v,i,j,k}^{sr}$, and
$\psi_{v,i,j,k}^{rr}$, gives
\begin{align*}
&\sum_{\substack{v^{\prime}=1\\
v^{\prime}\neq v}}^{N}
\mathcal T
\left(
    \zeta_{ik}^{r}(x);
    \Gamma
    \left(
        \chi_v^s,
        \chi_{vi},
        \chi_{v^{\prime}k},
        \chi_{v^{\prime}}^s,
        \phi_{ik}
    \right)
\right)(A)
\\
&\qquad=
\zeta_{ik}^{r}(x)
\psi_{v,i,k}^{r}(A),
\end{align*}
and analogously,
\begin{align*}
\sum_{\substack{v^{\prime}=1\\
v^{\prime}\neq v}}^{N}
\mathcal T
\left(
    \zeta_{ijk}^{sr}(x);
    \Gamma(\cdot)
\right)(A)
&=
\zeta_{ijk}^{sr}(x)
\psi_{v,i,j,k}^{sr}(A),
\\
\sum_{\substack{v^{\prime}=1\\
v^{\prime}\neq v}}^{N}
\sum_{\substack{v^{\prime\prime}=1\\
v^{\prime\prime}\neq v}}^{N}
\mathcal T
\left(
    \zeta_{ijk}^{rr}(x);
    \Gamma(\cdot)
\right)(A)
&=
\zeta_{ijk}^{rr}(x)
\psi_{v,i,j,k}^{rr}(A),
\end{align*}
where $\Gamma(\cdot)$ denotes the corresponding conjunction in the
definitions above.

Moreover, since the extensions agree with the original indicators on
$\mathcal B(\mathcal M_c)$, we have
\begin{align*}
&\sum_{\substack{v^{\prime}=1\\
v^{\prime}\neq v}}^{N}
\Gamma
\left(
    \overline{\chi}_{v}^s,
    \overline{\chi}_{vi},
    \overline{\chi}_{v^{\prime}k},
    \overline{\chi}_{v^{\prime}}^s,
    \overline{\phi}_{ik}
\right)(A)
=
\psi_{v,i,k}^{r}(A).
\end{align*}
Finally, we have
\begin{align*}
&\mathcal{T}
\big(
    -(k_1-k_2)W_i(x_i);
    \psi_{v,i}^s
\big)(A)
\\
&\quad\quad\quad\quad\quad=
-(k_1-k_2)
W_i(x_i)\psi_{v,i}^s(A),
\\
&\mathcal T
\big(
    -(k_1-k_2)W_{ij}(x_i,x_j);
    \psi_{v,i,j}^s
\big)(A)
\\
&\quad\quad\quad\quad\quad=
-(k_1-k_2)
W_{ij}(x_i,x_j)\psi_{v,i,j}^s(A).
\end{align*}
Substituting these identities into \eqref{eq:Psi-bar} yields
\[
\overline{\Psi}_v(x,A)=\Psi_v(x,A).
\]
Consequently,
\[
\overline F_3(A)
=
\int_{\widehat{\Omega}}
\sum_{v=1}^{N}
\left\{
    \Psi_v(x,A)
\right\}_{+}
\,dx
=
F_3(A).
\]

It remains to establish nonincreasing supermodularity.
By Lemma~\ref{lem:C-extension}, every conjunction appearing in
the first three groups of \eqref{eq:Psi-bar} admits the extensions
required by $\mathcal T$. Hence, for every fixed
$x\in\widehat{\Omega}$, each corresponding $\mathcal T$ term is
nonincreasing and supermodular on
$2^{\overline{\mathcal E}}$.

The terms in the fourth group of \eqref{eq:Psi-bar} are
nonincreasing and supermodular in $A$ by the properties of
$\mathcal{R}^{r}_{v,i,k}(a;A)$ established above.
The last two groups are nonincreasing and supermodular by the
definition of $\mathcal T$. Therefore,
$\overline{\Psi}_v(x,\cdot)$ is nonincreasing and supermodular on
$2^{\overline{\mathcal E}}$ for every fixed
$x\in\widehat{\Omega}$.
It follows that
$\left\{
    \overline{\Psi}_v(x,A)
\right\}_{+}
=
\max
\left\{
    \overline{\Psi}_v(x,A),0
\right\}$
is nonincreasing and supermodular in $A$.
Finite sums preserve both properties, and integration over the fixed
domain $\widehat{\Omega}$ preserves the corresponding inequalities.
Hence $\overline F_3$ is nonincreasing and supermodular on
$2^{\overline{\mathcal E}}$.

Finally, since $\alpha_2,\alpha_3\geq0$, the function $\overline F$ defined in Theorem~\ref{thm:supermodular_reformulation} is a nonnegative weighted sum of $\overline F_1$, $\overline F_2$, and $\overline F_3$, and is therefore nonincreasing and supermodular on $2^{\overline{\mathcal E}}$. Moreover, for every $A\in\mathcal B(\mathcal M_c)$, $\overline F(A) = F_1(A)+\alpha_2F_2(A)+\alpha_3F_3(A) = F(A)$. 
\end{proof}

\end{document}